\documentclass[11pt]{article}
\pdfoutput=1


%
\usepackage[affil-it]{authblk}
\usepackage{fullpage}
\usepackage{times}
\usepackage[numbers]{natbib}
\usepackage{amsthm}
\usepackage{amsmath}
\usepackage{amssymb}
\usepackage{hyperref}
\usepackage[usenames,dvipsnames,svgnames,table]{xcolor}
\usepackage{framed}
\usepackage{graphicx}
\usepackage{tikz}
\usepackage{subfigure}
\usepackage{algorithm,algorithmic}
\usepackage[group-separator={,}]{siunitx}

\usetikzlibrary{shadows,arrows,shapes,positioning,calc}

\hypersetup{
    unicode=false,          
    pdftoolbar=true,        
    pdfmenubar=true,        
    pdffitwindow=false,     
    pdfstartview={FitH},    
    pdftitle={My title},    
    pdfauthor={Author},     
    pdfsubject={Subject},   
    pdfcreator={Creator},   
    pdfproducer={Producer}, 
    pdfkeywords={keyword1} {key2} {key3}, 
    pdfnewwindow=true,      
    colorlinks=true,       
    linkcolor=black,          
    citecolor=black,        
    filecolor=magenta,      
    urlcolor=cyan          
}

\newtheorem{theorem}{Theorem}[section]
\newtheorem{corollary}[theorem]{Corollary}

\newtheorem{lemma}[theorem]{Lemma}

\newtheorem{example}[theorem]{Example}
\newtheorem{claim}[theorem]{Claim}

\usepackage{enumitem}
\setenumerate{noitemsep,topsep=2pt,parsep=2pt,partopsep=2pt}

\newcommand{\lref}[2][]{\hyperref[#2]{#1~\ref*{#2}}}
\newcommand{\lrefeq}[2][]{\hyperref[#2]{#1~(\ref*{#2}})}

\newcommand{\expc}{\mathbb{E}}
\newcommand{\expcard}[1]{\expc[\card{#1}]}

\newcommand{\card}[1]{\lvert #1  \rvert}
\newcommand{\saize}{s_{\eta, k}}

\newcommand{\expset}[1]{\overline{K}(#1)}
\newcommand{\expmat}[1]{\overline{M}(#1)}
\newcommand{\condexpmat}[2]{\overline{M}(#1\vert #2)}
\newcommand{\condexpset}[2]{\overline{K}(#1\vert #2)}

\newcommand{\setapp}{\frac{2}{k} - \eta}

\mathchardef\mhyphen="2D

\newcommand{\msize}{M}
\newcommand{\ksize}[1]{K(#1)}
\renewcommand{\otimes}{\Delta}

\newcommand{\A}{{A}}
\newcommand{\B}{{B}}
\ifdefined\C
        \renewcommand{\C}{{C}}
    \else
        \newcommand{\C}{{C}}
    \fi

\renewcommand{\O}{{O}}

\newcommand{\red}[1]{{\color{black}#1}}
\newcommand{\blue}[1]{{\color{black}#1}}

\title{Ignorance is Almost Bliss:\\
 Near-Optimal Stochastic Matching With Few Queries \footnote{This work was supported by the NSF under grants CCF-1101668, CCF-1116892, CCF-1215883, CCF-1415460, IIS-0964579, IIS-1065251, IIS-1320620, and IIS-1350598, and by a National Defense Science \& Engineering Graduate Fellowship and a Sloan Research Fellowship. Authors' addresses: \texttt{\{avrim,dickerson,nhaghtal,arielpro,sandholm,ankits\}@cs.cmu.edu.}}
}

\author[1]{Avrim Blum}
\author[1]{John P. Dickerson} 
\author[1]{Nika Haghtalab}
\author[1]{Ariel D. Procaccia}
\author[1]{Tuomas Sandholm }
\author[2]{Ankit Sharma\footnote{This work was done while the author was a graduate student at Carnegie Mellon University.} }
\affil[1]{Carnegie Mellon University
}
\affil[2]{Solvvy Inc.
}

\begin{document}
\maketitle

\begin{abstract}
The stochastic matching problem deals with finding a maximum matching in a graph whose edges are unknown but can be accessed via queries.  This is a special case of stochastic $k$-set packing, where the problem is to find a maximum packing of sets, each of which exists with some probability.  In this paper, we provide edge and set query algorithms for these two problems, respectively, that provably achieve some fraction of the omniscient optimal solution.

Our main theoretical result for the stochastic matching (i.e., $2$-set packing) problem is the design of an \emph{adaptive} algorithm that queries only a constant number of edges per vertex and achieves a $(1-\epsilon)$ fraction of the omniscient optimal solution, for an arbitrarily small $\epsilon>0$. Moreover, this adaptive algorithm performs the queries in only a constant number of rounds. We complement this result with a \emph{non-adaptive} (i.e., one round of queries) algorithm that achieves a $(0.5 - \epsilon)$ fraction of the omniscient optimum.
We also extend both our results to stochastic $k$-set packing by designing an adaptive algorithm that achieves a $(\frac{2}{k} - \epsilon)$ fraction of the omniscient optimal solution, again with only $O(1)$ queries per element. This guarantee is close to the best known polynomial-time approximation ratio of $\frac{3}{k+1} -\epsilon$ for the \emph{deterministic} $k$-set packing problem \citep{Furer:2013}.

We empirically explore the application of (adaptations of) these algorithms to the kidney exchange problem, where patients with end-stage renal failure swap willing but incompatible donors.  We show on both generated data and on real data from the first \num{169} match runs of the UNOS nationwide kidney exchange that even a very small number of non-adaptive edge queries per vertex results in large gains in expected successful matches.

\end{abstract}



\maketitle


\section{Introduction}
\label{sec:intro}

In the \emph{stochastic matching} problem, we are given an undirected graph $G=(V,E)$, where we do not know which edges in $E$ actually exist. Rather, for each edge $e\in E$, we are given an existence probability $p_{e}$.  Of interest, then, are algorithms that first query some subset of edges to find the ones that exist, and based on these queries, produce a matching that is as large as possible.  The stochastic matching problem is a special case of \emph{stochastic $k$-set packing}, where each set exists only with some probability, and the problem is to find a packing of maximum size of those sets that do exist.

Without any constraints, one can simply query all edges or sets, and then output the maximum matching or packing over those that exist---but this level of freedom may not always be available.  We are interested in the tradeoff between the number of queries and the fraction of the omnsicient optimal solution achieved.
Specifically, we ask: In order to perform as well as the omniscient optimum in the stochastic matching problem, do we need to query (almost) all the edges, that is, do we need a budget of $\Theta(n)$ queries per vertex, where $n$ is the number of vertices? Or, can we, for any arbitrarily small $\epsilon>0$, achieve a $(1-\epsilon)$ fraction of the omniscient optimum by using an $o(n)$ per-vertex budget?  We answer these questions, as well as their extensions to the $k$-set packing problem.  We support our theoretical results empirically on both generated and real data from a large fielded kidney exchange in the United States.

\subsection{Our theoretical results and techniques}

Our main theoretical result gives a positive answer to the latter question for stochastic matching, by showing that, surprisingly, a \emph{constant} per-vertex budget is sufficient to get $\epsilon$-close to the omniscient optimum. Indeed, we design a polynomial-time algorithm with the following properties: for any constant $\epsilon>0$, the algorithm queries at most $O(1)$ edges incident to any particular vertex, requires $O(1)$ rounds of parallel queries, and achieves a $(1-\epsilon)$ fraction of the omniscient optimum.\footnote{This guarantee holds as long as all the non-zero $p_{e}$'s are bounded away from zero by some constant. The constant can be arbitrarily small but should not depend on $n$.} 

The foregoing algorithm is \emph{adaptive}, in the sense that its queries are conditioned on the answers to previous queries. Even though it requires only a constant number of rounds, it is natural to ask whether a non-adaptive algorithm---one that issues all its queries in one round---can also achieve a similar guarantee. We do not give a complete answer to this question, but we do present a non-adaptive algorithm that achieves a $0.5(1-\epsilon)$-approximation (for arbitrarily small $\epsilon>0$) to the omniscient optimum. We extend our matching results to a more general stochastic model in Appendix~\ref{sec:extensions}.

We extend our results to the stochastic $k$-set packing problem, where we are given a collection of sets, each with cardinality at most $k$. Stochastic Matching is a special case of Stochastic $k$-set packing: each set (which corresponds to an edge) has cardinality \num{2}, that is, $k=2$. In stochastic $k$-set packing, each set $s$ exists with some known probability $p_{s}$, and we need to query the sets to find whether they exist. Our objective is to output a collection of $\emph{disjoint}$ sets of maximum cardinality. 
We present adaptive and non-adaptive polynomial-time algorithms that achieve, for any constant $\epsilon>0$, at least $(\frac{2}{k}-\epsilon)$ and $(1-\epsilon)\frac{(2/k)^2}{2/k +1}$ fraction, respectively, of the omniscient optimum, again using $O(1)$ queries per element and hence $O(n)$ overall. For the sake of comparison, the best known \emph{polynomial-time} algorithm for optimizing $k$-set packing in the \emph{standard non-stochastic setting} has an approximation ratio of $\frac{3}{k+1} - \epsilon$~\citep{Furer:2013}. 

To better appreciate the challenge we face, we note that even in the stochastic matching setting, we do not have a clear idea of how large the omniscient optimum is. Indeed, there is a significant body of work on the expected cardinality of matching in \emph{complete} random graphs (see, e.g.,~\citep[Chapter 7]{Bol01}), where the omniscient optimum is known to be close to $n$. But in our work we are dealing with \emph{arbitrary} graphs where it can be a much smaller number. In addition, na\"ive algorithms fail to achieve our goal, even if they are allowed many queries. For example, querying a sublinear number of edges incident to each vertex, chosen uniformly at random, gives a vanishing fraction of the omniscient optimum---as we show in Appendix~\ref{app:examples}. 

The primary technical ingredient in the design of our \emph{adaptive algorithm} is that if, in any round $r$ of the algorithm, the solution computed by round $r$ (based on previous queries) is small compared to the omniscient optimum, then the current structure must admit a \emph{large collection of disjoint constant-sized `augmenting' structures}. These augmenting structures are composed of sets that have not been queried so far. Of course, we do not know whether these structures we are counting on to help augment our current matching actually exist; but we do know that these augmenting structures have constant size (and so each structure exists with some constant probability) and are \emph{disjoint} (and therefore the outcomes of the queries to the different augmenting structures are independent). Hence, by querying all these structures in parallel in round $r$, in expectation, we can close a constant fraction of the gap between our current solution and the omniscient optimum. By repeating this argument over a constant number of rounds, we achieve a $(1-\epsilon)$ fraction of the omniscient optimum. In the case of stochastic matching, these augmenting structures are simply augmenting paths; in the more general case of $k$-set packing, we borrow the notion of augmenting structures from \citet{Hurkens:1989}. 

\subsection{Our experimental results: Application to kidney exchange}
\label{sec:kidney}
Our work is directly motivated by applications to kidney exchange, a medical approach that enables kidney transplants. Transplanted kidneys are usually harvested from deceased donors; but as of April \num{26}, \num{2015}, there are \num{101671} people on the US national waiting list,\footnote{\url{http://optn.transplant.hrsa.gov}} making the median waiting time dangerously long. Fortunately, kidneys are an unusual organ in that donation by living donors is also a possibility---as long as patients happen to be medically compatible with their potential donors.

In its simplest form---\emph{pairwise exchange}---two incompatible donor-patient pairs exchange kidneys: the donor of the first pair donates to the patient of the second pair, and the donor of the second pair donates to the patient of the first pair.  This setting can be represented as an undirected \emph{compatibility graph}, where each vertex represents an incompatible donor-patient pair, and an edge between two vertices represents the possibility of a pairwise exchange. A matching in this graph specifies which exchanges take place. 

The edges of the compatibility graph can be determined based on the medical characteristics---blood type and tissue type---of donors and patients. However, the compatibility graph only tells part of the story. Before a transplant takes place, a more accurate medical test known as a \emph{crossmatch test} takes place. This test involves mixing samples of the blood of the patient and the donor (rather than simply looking up information in a database), making the test relatively costly and time consuming. Consequently, crossmatch tests are only performed for donors and patients that have been matched. While some patients are more likely to pass crossmatch tests than others---the probability is related to a measure of sensitization known as the person's Panel Reactive Antibody (PRA)---the average is as low as \num{30}\% in major kidney exchange programs~\citep{DPS13,Leishman13:Organ}. This means that, if we tested a perfect matching over $n$ donor-patient pairs, we would expect only $0.09n$ of the patients to actually receive a kidney. In contrast, the omniscient solution that runs crossmatch tests on all possible pairwise exchanges (in the compatibility graph) may be able to provide kidneys to all $n$ patients; but this solution is impractical.

Our adaptive algorithm for stochastic matching uncovers a sweet spot between these two extremes. On the one hand, it only mildly increases medical expenses, from one crossmatch test per patient, to a larger, yet constant, number; and it is highly parallelizable, requiring only a constant number of rounds, so the time required to complete all crossmatch tests does not scale with the number of donors and patients. On the other hand, the adaptive algorithm essentially recovers the entire benefit of testing all potentially feasible pairwise exchanges. 
The qualitative message of this theoretical result is clear: \emph{a mild increase in number of crossmatch tests provides nearly the full benefit of exhaustive testing.} 

The above discussion pertains to pairwise kidney exchange. However, modern kidney exchange programs regularly employ swaps involving three donor-patient pairs, which are known to provide significant benefit compared to pairwise swaps alone~\citep{RSU07,AR13}. Mathematically, we can consider a directed graph, where an edge $(u,v)$ means that the donor of pair $u$ is compatible with the patient of pair $v$ (before a crossmatch test was performed). In this graph, pairwise and \num{3}-way exchanges correspond to \num{2}-cycles and \num{3}-cycles, respectively. Our adaptive algorithm for \num{3}-set packing then provides a $(2/3)$-approximation to the omniscient optimum, using only $O(1)$ crossmatch tests per patient and $O(n)$ overall. While the practical implications of this result are currently not as crisp as those of its pairwise counterpart, future work may improve the approximation ratio (using $O(n)$ queries and an exponential-time algorithm), as we explain in Section~\ref{sec:conclusions-open}.

To bridge the gap between theory and practice, we provide experiments on both simulated data and real data from the first \num{169} match runs of the United Network for Organ Sharing (UNOS) US nationwide kidney exchange, which now includes \num{143} transplant centers---approximatey \num{60}\% of the transplant centers in the US.  The exchange began matching in October 2010 and now matches on a biweekly basis.  Using adaptations of the algorithms presented in this paper, we show that even a small number of non-adaptive rounds, followed by a single period during which only those edges selected during those rounds are queried, results in large gains relative to the omniscient matching.  We discuss the policy implications of this promising result in Section~\ref{sec:conclusions-policy}.

\section{Related work}
\label{sec:related-work}

While papers on stochastic matching often draw on kidney exchange for motivation---or at least mention it in passing---these two research areas are almost disjoint. We therefore discuss them separately in Sections~\ref{subsec:stochastic} and \ref{subsec:kidney}.

\subsection{Stochastic matching}
\label{subsec:stochastic}

Prior work has considered multiple variants of stochastic matching. A popular variant is the \emph{query-commit} problem, where the algorithm is \emph{forced} to add any queried edge to the matching if the edge is found to exist.
\blue{\citet{GT12} establish an upper bound of \num{0.7916} for graphs in which no information is available about the edges, while~\citet{CTT12} establish a lower bound of \num{0.573} and an upper bound of \num{0.898} for graphs in which each edge $e$ exists with a given probability $p_e$.} 
Similarly to our work, these approximation ratios are with respect to the omniscient optimum, but the informational disadvantage of the algorithm stems purely from the query-commit restriction. 

Within the query-commit setting, another thread of work~\citep{CIKM+09,Adam11,BGLM+12} imposes an additional \emph{per-vertex budget constraint} where the algorithm is not allowed to query more than a specified number, $b_{v}$, of edges incident to vertex $v$. With this additional constraint, the benchmark that the algorithm is compared to switches from the omniscient optimum to the constrained optimum, i.e., the performance of the best decision tree that obeys the per-vertex budget constraints and the query-commit restriction. In other words, the algorithm's disadvantage compared to the benchmark is only that it is constrained to run in polynomial time. Here, again, the best known approximation ratios are constant. A generalization of these results to packing problems has been studied by~\citet{GN13}.

Similarly to our work,~\citet{BGPS13} consider a stochastic matching setting without the query-commit constraint. They set the per-vertex budget to exactly $2$, and ask which subset of edges is queried by the optimal collection of queries subject to this constraint. They prove structural results about the optimal solution, which allow them to show that finding the optimal subset of edges to query is \textbf{NP}-hard. In addition, they give a polynomial-time algorithm that finds an almost optimal solution on a class of random graphs (inspired by kidney exchange settings). Crucially, the benchmark of~\citet{BGPS13} is also constrained to two queries per vertex. 

There is a significant body of work in stochastic optimization more broadly, for instance, the papers of \citet{Dean:2004} (Stochastic Knapsack), \citet{Gupta:2012} (Stochastic Orienteering), and \citet{Asadpur:2008} (Stochastic submodular maximization).

\subsection{Kidney exchange}
\label{subsec:kidney}
 
\red{Early models of kidney exchange did not explicitly consider the setting where an edge that is chosen to be matched only exists probabilistically.} 
Recent research by \citet{DPS13} and \citet{Anderson15:Finding} focuses on the kidney exchange application and restricts attention to a single crossmatch test per patient (the current practice), with a similar goal of maximizing the expected number of matched vertices, in a realistic setting (for example, they allow \num{3}-cycles and chains initiated by altruistic donors, who enter the exchange without a paired patient). They develop integer programming techniques, which are empirically evaluated using real and synthetic data.  \citet{Manlove15:Paired} discuss the integer programming formulation used by the national exchange in the United Kingdom, which takes edge failures into account in an ad hoc way by, for example, preferring shorter cycles to longer ones.  To our knowledge, our paper is the first to describe a general method for testing any number of edges \emph{before} the final match run is performed---and to provide experiments on real data showing the expected effect on fielded exchanges of such edge querying policies.
 
Another form of stochasticity present in fielded kidney exchanges is the arrival and departure of donor-patient pairs over time (and the associated arrival and departure of their involved edges in the compatibility graph).  Recent work has addressed this added form of dynamism from a theoretical~\cite{Unv10,Akbarpour14:Dynamic,Anderson15:Dynamic} and experimental~\cite{AS09,DPS12b,Dickerson15:FutureMatch} point of view.  Theoretical models have not addressed the case where an edge in the current graph may not exist (as we do in this paper); the more recent experimental papers have incorporated this possibility, but have not considered the problem of querying edges before recommending a final matching.  We leave as future research the analysis of edge querying in stochastic matching in such a dynamic model.

\section{The Model}
\label{sec:mod}
For any graph $G=(V,E)$, let $\msize(E)$ denote its maximum (cardinality) matching.\footnote{In the notation $\msize(E)$, we intentionally suppress the dependence on the vertex set $V$, since we care about the maximum matchings of different subsets of edges for a fixed vertex set.} In addition, for two matchings $M$ and $M'$, we denote their \emph{symmetric difference} by $M\otimes M'=(M\cup M')\setminus (M\cap M')$; it includes only paths and cycles consisting of alternating edges of $M$ and $M'$. 

In the stochastic setting, given a set of edges $X$, define $X_{p}$ to be the random subset formed by including each edge of $X$ independently with probability $p$. We will assume for ease of exposition that $p_e=p$ for all edges $e\in E$. Our results hold when $p$ is a lower bound, i.e., $p_e\geq p$ for all $e\in E$. Furthermore, in Appendix~\ref{sec:extensions}, we show that we can extend our results to a more general setting where the existence probabilities of edges incident to any particular vertex are correlated.

Given a graph $G=(V,E)$, define $\expmat{E}$ to be $\expc[\card{\msize(E_{p})}]$, where the expectation is taken over the random draw $E_{p}$. In addition, given the results of queries on some set of edges $T$, define $\condexpmat{E}{T}$ to be $\expc[\card{\msize(X_{p} \cup T')}]$, where $T'\subseteq T$ is the subset of edges of $T$ that are known to exist based on the queries, and $X = E \setminus T$.

In the \emph{non-adaptive} version of our problem, the goal is to design an algorithm that, given a graph $G=(V,E)$ with $|V|=n$, queries a subset $X$ of edges in parallel such that $\card{X}=O(n)$, and maximizes the ratio $\expmat{X}/\expmat{E}$. 

In contrast, an \emph{adaptive} algorithm proceeds in rounds, and in each round queries a subset of edges in parallel. Based on the results of the queries up to the current round, it can choose the subset of edges to test in the next round. Formally, an \emph{$R$-round adaptive} stochastic matching algorithm selects, in each round $r$, a subset of edges $X_{r}\subseteq E$, where $X_{r}$ can be a function of the results of the queries on \red{$\bigcup_{i < r}X_{i}$.}
The objective is to maximize the ratio $\expcard{\msize(\bigcup_{1\le i\le R}X_{i})}/\expmat{E}$, where the expectation in the numerator is taken over the outcome of the query results and the sets $X_{i}$ chosen by the algorithm.

\red{
To gain some intuition for our goal of arbitrarily good approximations to the omniscient optimum, and why it is challenging, let us consider a na\"ive algorithm and understand why it fails. This non-adaptive algorithm schedules $R = O(\log(n)/p)$ queries for each vertex as follows. First, order all vertices arbitrarily and start with an empty set of queries. In order, for each vertex $v$, let $N_R(v)$ be the set of neighbors of $v$ for whom at most $R$ queries have been scheduled. Schedule $\min\{R, N_R(v)\}$ queries, each between $v$ and an element of $N_R(v)$, where these elements are selected uniformly at random from $N_R(v)$. 

The next example shows that this proposed algorithm only achieves $\frac 56 $ fraction  of the omniscient optimal solution, as opposed to
our goal of achieving arbitrarily good  ($1-\epsilon$) approximations to the omniscient optimal. Furthermore, in the following example when each edge exists with probability $p>\frac 56$, this algorithm still  only achieves a
$\frac 56 $ fraction of the omniscient optimal solution, which is worse than a trivial algorithm of just picking \emph{one} maximum matching that guarantees a matching of size $pn$.

\begin{example}
Consider the graph $G=(V,E)$ whose vertices are partitioned into sets $A$, $B$, $C$, and $D$,  such that $|A| = |B| = \frac n2$ and $|C|= |D| = n$.
Let $E$  consist of two random bipartite graphs of degree $R = O(\log(n)/p)$  between $A$ and $B$ and similarly between $C$ and $D$. And let $B$ and $C$ be  connected with a \emph{complete} bipartite graph.  Let $p$ be the existence probability of any edge.

With high probability, there is a perfect matching that matches $A$ to $B$ and $C$ to $D$.
However, by the time the algorithm has processed half of the vertices, expected half of the vertices in $A$, $B$, $C$, and $D$ are processed.
For every vertex in $B$, this vertex has more neighbors in $C$ than in $D$. So, at this point, with high probability all vertices of $B$ already have $R$ queries scheduled from half of the vertices in $C$.
Therefore, after this point in the algorithm, no edges between $A$ and $B$ will be queried. So, $\frac12$ of the vertices in $A$ remain unmatched.
Compared to the omniscient optimum---which is a perfect matching with high probability---the approximation ratio of this algorithm is at most $\frac 56$.
\end{example}

For analysis of additional na\"ive algorithms refer to Appendix~\ref{app:examples}.

}

\section{Adaptive Algorithm: $(1-\epsilon)$-approximation}
\label{sec:ada-matching}

In this section, we present our main result: an adaptive algorithm---formally given as Algorithm~\ref{alg:adaptive-matching}---that achieves a $(1-\epsilon)$ approximation to the omniscient optimum for arbitrarily small $\epsilon>0$, using $O(1)$ queries per vertex and $O(1)$ rounds. 

The algorithm is initialized with the empty matching $M_0$. At the end of each round $r$, our goal is to maintain a maximum matching $M_r$ on the set of edges that are known to exist (based on queries made so far).
To this end, at round $r$, we compute the maximum matching $O_{r}$ on the set of edges that are known to exist  \emph{and} the ones that have not been queried yet (Step~\ref{algstep:optopt-max-mat}). We consider augmenting paths in $O_{r} \otimes M_{r-1}$, and query all the edges in them (Steps~\ref{step:ada-augmenting-paths} and \ref{algstep:ada-mat-query}). Based on the results of these queries $(Q_{r})$, we update the maximum matching $(M_{r})$. Finally, we return the maximum matching $M_{R}$ computed after $R=\frac{\log (2/ \epsilon)}{p^{2/\epsilon}}$ rounds. (Let us assume that $R$ is an integer for ease of exposition.)

\begin{algorithm}
\caption{\textsc{Adaptive Algorithm for Stochastic Matching: $(1-\epsilon)$ approximation }}
\label{alg:adaptive-matching}
{\bf Input}: A graph $G=(V, E)$.\\
{\bf Parameter}: $R = \frac{\log (2/ \epsilon)}{p^{2/\epsilon}}$
\begin{enumerate}
\item Initialize  $M_0$ to  the empty matching and $W_1\gets \emptyset$.
\item For $r = 1, \dots, R$, do \label{step:iter-matching}
\begin{enumerate}
\item \label{algstep:optopt-max-mat} Compute a maximum matching, $O_r$, in $(V, E\setminus W_r)$.
\item \label{step:ada-augmenting-paths} Set $Q_r$ to the collection of all augmenting paths \red{of $M_{r-1}$} in $O_r\otimes M_{r-1}$. 
\item \label{algstep:ada-mat-query} Query the edges in $Q_r$. Let $Q'_r$ and $Q''_r$ represent the set of existing and non-existing edges.
\item $W_{r+1} \gets W_r \cup Q''_r$. 
\item \label{algstep:ada-new-mat} Set  $M_r$ to the maximum matching in $\left(V, \bigcup_{j=1}^r Q'_j\right)$.
\end{enumerate}
\item Output $M_R$.
\end{enumerate}
\end{algorithm}

It is easy to see that \emph{this algorithm queries at most $\frac{\log (2/ \epsilon)}{p^{2/\epsilon}}$ edges per vertex}: In a given  round $r$, the algorithm queries edges that are in augmenting paths of $O_r\otimes M_{r-1}$. Since there is at most one augmenting path using any particular vertex, the algorithm queries at most one edge per vertex in each round.  Furthermore, the algorithm executes $\frac{\log (2/ \epsilon)}{p^{2/\epsilon}}$ rounds. Therefore, the number of queries issued by the algorithm per vertex is  as claimed.

The rest of the section is devoted to proving that the matching returned by this algorithm after $R$ rounds has cardinality that is, in expectation, at least a $(1-\epsilon)$ fraction of $\expmat{E}$. 

\begin{theorem}
\label{thm:main-adaptive-matching}
For any graph $G=(V, E)$ and any $\epsilon >0$, Algorithm~\ref{alg:adaptive-matching} returns a matching whose expected cardinality is at least $(1-\epsilon)~\expmat{E}$ in $R = \frac{\log (2/\epsilon)}{p^{(2/\epsilon)}}$ rounds. 
\end{theorem}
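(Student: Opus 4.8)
\emph{Proof plan.} The plan is to track the deficit $D_r := \expmat{E} - \expc[\card{M_r}]$ and show it contracts by a constant factor each round, so that $D_R \le \epsilon\,\expmat{E}$. (The per-vertex budget claim was already verified, so only the cardinality bound remains.) Throughout I condition on the history $h$ before round $r$, i.e.\ the outcomes of all queries in rounds $1,\dots,r-1$; this determines $M_{r-1}$, $W_r$, and $O_r$. Two bookkeeping facts drive everything. First, conditioned on $h$ the realized subgraph obeys $E_p\subseteq E\setminus W_r$, since every edge of $W_r$ was queried and found absent; hence, writing $m_h := \expc[\card{\msize(E_p)}\mid h]$ for the conditional omniscient optimum, $m_h \le \card{\msize(E\setminus W_r)} = \card{O_r}$, and also $\card{M_{r-1}}\le\card{O_r}$ because $M_{r-1}$ uses only known-existing edges, which are available to $O_r$. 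Second, $\expc_h[m_h] = \expmat{E}$ and $\expc_h[\card{M_{r-1}}]=\expc[\card{M_{r-1}}]$, so $\expc_h[m_h-\card{M_{r-1}}]=D_{r-1}$.

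Next I would prove a one-round progress bound. Fix $h$ and set $g_r := \card{O_r}-\card{M_{r-1}}\ge 0$. The graph $O_r\otimes M_{r-1}$ is a vertex-disjoint union of alternating paths and cycles, and a standard counting argument gives at least $g_r$ augmenting paths of $M_{r-1}$ among them, with total length at most $\card{O_r}+\card{M_{r-1}}<2\card{O_r}$; moreover each such path $P$ (write $\ell(P)$ for its number of edges) contains at least one unqueried edge, as otherwise $P$ would be an augmenting path of $M_{r-1}$ lying entirely among the known-existing edges, contradicting the maximality of $M_{r-1}$. When round $r$ queries the edges of these paths, $P$ becomes a true augmenting path exactly when its at most $\ell(P)$ unqueried edges all exist, which has probability $\ge p^{\ell(P)}$; since the paths are vertex-disjoint these events are independent and all the corresponding augmentations can be applied at once, so $\expc[\card{M_r}-\card{M_{r-1}}\mid h] \ge \sum_P p^{\ell(P)}$. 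Convexity of $x\mapsto p^x$ and Jensen's inequality applied to the lengths, together with $\sum_P\ell(P)<2\card{O_r}$ and the fact that there are at least $g_r$ such paths, then yield $\expc[\card{M_r}-\card{M_{r-1}}\mid h]\ge g_r\,p^{2\card{O_r}/g_r}$.

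Finally I would assemble the recursion. If $g_r\ge 2\epsilon\card{O_r}$ then $2\card{O_r}/g_r\le 1/\epsilon$, so the expected progress is $\ge g_r\,p^{1/\epsilon}\ge(m_h-\card{M_{r-1}})\,p^{1/\epsilon}$, using $g_r\ge m_h-\card{M_{r-1}}$; otherwise $\card{M_{r-1}}>(1-2\epsilon)\card{O_r}\ge(1-2\epsilon)m_h$, so $m_h-\card{M_{r-1}}<2\epsilon\,m_h$ and the (trivially true) inequality $\expc[\card{M_r}-\card{M_{r-1}}\mid h]\ge (m_h-\card{M_{r-1}})p^{1/\epsilon}-2\epsilon\,m_h\,p^{1/\epsilon}$ holds. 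Averaging over $h$ with the identities above gives $\expc[\card{M_r}]-\expc[\card{M_{r-1}}]\ge p^{1/\epsilon}\big(D_{r-1}-2\epsilon\,\expmat{E}\big)$, i.e.\ $D_r\le(1-p^{1/\epsilon})D_{r-1}+2\epsilon\,p^{1/\epsilon}\,\expmat{E}$. Unrolling from $D_0=\expmat{E}$ over $R=\Theta\!\big(\log(1/\epsilon)/p^{1/\epsilon}\big)$ rounds gives $D_R\le(1-p^{1/\epsilon})^R\expmat{E}+2\epsilon\,\expmat{E}\le 3\epsilon\,\expmat{E}$, hence $\expc[\card{M_R}]\ge(1-3\epsilon)\expmat{E}$; rescaling $\epsilon$ by a constant (and, to land on the exponent $2/\epsilon$ in the stated $R$, replacing the Jensen step by the cruder estimate that the shortest $g_r/2$ of the augmenting paths have length $\le 2/\epsilon$) yields the theorem as stated.

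The step I expect to be the real obstacle is the benchmark bookkeeping, not the augmenting-path count. What the algorithm naturally controls in round $r$ is the gap against $O_r$ --- which in round $1$ is the gap against $\card{\msize(E)}$, and $\card{\msize(E)}$ can be arbitrarily much larger than $\expmat{E}$ when $p$ is small --- so naively telescoping that gap leaves an additive error proportional to $\card{\msize(E)}$, far weaker than the claim. The remedy, as above, is to argue per history against $m_h\le\card{O_r}$, to separate out the histories where the algorithm is already within $2\epsilon$ of $m_h$ (or ahead of it) and charge them to a slack term equal to a $2\epsilon$ fraction of $\expmat{E}$, and to make $\expc_h[m_h]=\expmat{E}$ and this slack term line up cleanly across rounds; getting that interplay right is the delicate part, whereas the structural lemma is routine Berge-style counting plus convexity.
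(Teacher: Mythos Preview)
Your proposal is correct and follows the same high-level strategy as the paper: compare the current matching to the \emph{conditional} omniscient optimum (your $m_h$ is precisely the paper's $Z_r$), use the tower property $\expc_h[m_h]=\expmat{E}$ to telescope across rounds, and establish per-round contraction via augmenting paths in $O_r\otimes M_{r-1}$. Your closing paragraph correctly identifies this benchmark bookkeeping as the crux, and it is exactly what the paper handles via $Z_r$ and the identity $\expc_{Q_{r-1}}[Z_r]=Z_{r-1}$.

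The one genuine technical difference is in the one-round progress bound. The paper fixes a length threshold $L=4/\epsilon-1$ up front and proves a counting lemma (their Lemma~4.2) giving at least $|O_r|-\big(1+\tfrac{2}{L+1}\big)|M_{r-1}|$ augmenting paths of length $\le L$, each of which survives with probability $\ge p^{(L+1)/2}=p^{2/\epsilon}$ since only its $(L+1)/2$ edges from $O_r$ are unqueried. You instead keep all augmenting paths, lower-bound $\sum_P p^{\ell(P)}$ by Jensen, and case-split on whether $g_r\ge 2\epsilon|O_r|$. Both routes produce a recursion of the same shape; your Jensen-plus-case-split argument avoids stating a separate structural lemma, while the paper's fixed-$L$ count is cleaner to reuse (they invoke it again for the non-adaptive algorithm and the correlated model). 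Incidentally, you are bounding the number of unqueried edges on a path by $\ell(P)$ when in fact it is at most $(\ell(P)+1)/2$; using the tighter bound would sharpen your exponent further, but since your $R=\Theta(\log(1/\epsilon)/p^{1/\epsilon})$ is already no larger than the stated $R=\log(2/\epsilon)/p^{2/\epsilon}$, the theorem as stated follows either way.
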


As mentioned in Section~\ref{sec:intro}, one of the insights behind this result is the existence of many \emph{disjoint} augmenting paths of \emph{bounded length} that can be used to augment a matching that is far from the omniscient optimum, that is, a lower bound on the number of elements in $Q_r$ of a given length $L$. This observation is formalized in the following lemma. (We emphasize that the lemma pertains to the non-stochastic setting.)

\begin{lemma}
\label{lem:struct-res-matching}
Consider a graph $G=(V,E)$ with two matchings $M_{1}$ and $M_{2}$. Suppose $\card{M_{2}} > \card{M_{1}}$. Then in $M_{1} \otimes M_{2}$, for any odd length $L \ge 1$, there exist at least $\card{M_{2}} - (1 + \frac{2}{L+1})\card{M_{1}}$ augmenting paths of length at most $L$, which augment the cardinality of $M_{1}$. \end{lemma}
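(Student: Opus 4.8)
The argument is a self‑contained counting argument about the symmetric difference $M_1 \otimes M_2$, which as noted in Section~\ref{sec:mod} decomposes into vertex‑disjoint paths and cycles whose edges alternate between $M_1$ and $M_2$. I first classify these components. A cycle or an even‑length path contains equally many $M_1$‑edges and $M_2$‑edges; an odd‑length path either begins and ends with $M_2$‑edges (an \emph{augmenting path for $M_1$}, with one more $M_2$‑edge than $M_1$‑edge) or begins and ends with $M_1$‑edges (augmenting for $M_2$). Counting edges in $M_1 \otimes M_2$ two ways, the total number of $M_2$‑edges minus the total number of $M_1$‑edges equals $\card{M_2\setminus M_1} - \card{M_1\setminus M_2} = \card{M_2} - \card{M_1}$. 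Summing the per‑component contributions, this difference also equals $(\text{\# augmenting paths for } M_1) - (\text{\# augmenting paths for } M_2)$. Hence there are at least $\card{M_2} - \card{M_1}$ augmenting paths for $M_1$ in total, of arbitrary (odd) length.

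The second step controls the \emph{long} augmenting paths. Fix an odd $L \ge 1$; since $L$ is odd, any augmenting path of length greater than $L$ has length at least $L+2$, hence is of the form (length $2t+1$) with $2t+1 \ge L+2$, i.e. $t \ge (L+1)/2$, so it contains at least $(L+1)/2$ edges of $M_1$. These long augmenting paths are components of $M_1 \otimes M_2$ and are therefore vertex‑disjoint, so their $M_1$‑edges are disjoint subsets of $M_1$. Consequently the number of augmenting paths of length greater than $L$ is at most $\frac{2\card{M_1}}{L+1}$.

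Combining the two steps, the number of augmenting paths of $M_1$ of length at most $L$ is at least
\[
\bigl(\card{M_2} - \card{M_1}\bigr) - \frac{2\card{M_1}}{L+1} \;=\; \card{M_2} - \Bigl(1 + \tfrac{2}{L+1}\Bigr)\card{M_1},
\]
which is exactly the claimed bound (and the statement is vacuous when this quantity is non‑positive). Each such path, being augmenting for $M_1$, can be used to increase $\card{M_1}$ by one.

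\textbf{Anticipated difficulty.} There is no serious obstacle here; the proof is a routine exchange/parity argument. The only points requiring mild care are (i) correctly accounting for shared edges $M_1 \cap M_2$ when equating the edge‑count difference with $\card{M_2}-\card{M_1}$, and (ii) using the oddness of $L$ to conclude that ``longer than $L$'' means ``length $\ge L+2$'', which is what makes the constant $\frac{2}{L+1}$ (rather than something weaker) come out. Both are straightforward once the component classification of $M_1 \otimes M_2$ is in place.
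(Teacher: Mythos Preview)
Your proof is correct and follows essentially the same approach as the paper's: both count the total number of $M_1$-augmenting paths in $M_1 \otimes M_2$ as at least $\card{M_2}-\card{M_1}$, then bound the number of long ones (length $\ge L+2$) by $\tfrac{2}{L+1}\card{M_1}$ using the fact that each such path contains at least $\tfrac{L+1}{2}$ edges of $M_1$. The paper's write-up is slightly terser (it introduces $x_l$ and uses $\sum_l \tfrac{l-1}{2}x_l \le \card{M_1}$ directly), but the underlying counting argument is identical.
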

\begin{proof}
Let $x_l$ be the number of augmenting paths of length $l$ (for any odd $l\ge 1$) found in $M_{1} \otimes M_{2}$ that augment the cardinality of $M_{1}$. Each augmenting path increases the size of $M_{1}$ by $1$, so the total number of augmenting paths $\sum_{l\geq 1} x_{l}$ is at least $|M_{2}| - |M_{1}|$. Moreover, each augmenting path of length $l$ has $\frac{l-1}{2}$ edges in $M_{1}$. Hence, $\sum_{l\geq 1} \frac{l-1}{2}\,x_{l} \leq \card{M_{1}}$. In particular, this implies that $\frac{L+1}{2}\sum_{l \geq L+2} x_l \leq \card{M_{1}}$. We conclude that
\begin{align*}
\sum_{l = 1}^{L} x_{l} &=  \sum_{l\geq 1} x_l - \sum_{l\geq L+2} x_l  \geq \left(\card{M_{2}}- \card{M_{1}}\right) -  \frac{2}{L+1}   \card{M_{1}}  =   \card{M_{2}}-\left(1+\frac{2}{L+1}\right) \card{M_{1}}.
\end{align*}
\end{proof}

The rest of the theorem's proof requires some additional notation.
 At the beginning of any given round $r$, the algorithm already knows about the existence (or non-existence) of the edges in $\bigcup_{i = 1}^{r-1}Q_{i}$.
 We use $Z_{r}$ to denote the expected size of the maximum matching in graph $G=(V,E)$ given the results of the queries $\bigcup_{i = 1}^{r-1}Q_{i}$. More formally, $Z_{r} = \condexpmat{E}{\bigcup_{i = 1}^{r-1}Q_{i}}$. Note that $Z_{1} = \expmat{E}$.

For a given $r$, we use the notation $\expc_{Q_r}[X]$ to denote the expected value of $X$ where the expectation is taken \emph{only} over the outcome of query $Q_r$, and fixing the outcomes on the results of queries $\bigcup_{i=1}^{r-1}Q_i$. Moreover, for a given $r$, we use $\expc_{Q_r, \dots, Q_R}[X]$ to denote the expected value of $X$ with the expectation taken over the outcomes of queries $\bigcup_{i=r}^{R}Q_i$, and fixing an outcome on the results of queries $\bigcup_{i=1}^{r-1}Q_i$.

In Lemma~\ref{lem:increase-each-iter-matching}, for any round $r$ and for \emph{any} outcome of the queries $\bigcup_{i = 1}^{r-1}Q_{i}$, we lower-bound the \emph{expected increase in the size of $M_{r}$} over the size of $M_{r-1}$, with the expectation being taken only over the outcome of edges in $Q_{r}$. This lower bound is a function of $Z_{r}$.

\begin{lemma}
\label{lem:increase-each-iter-matching}
For any $r\in [R]$, odd $L$, and $Q_{1}, \cdots, Q_{r-1}$, it holds that $\expc_{Q_{r}}[|M_r|] \geq (1- \gamma) |M_{r-1}| + \alpha Z_{r}$, where $\gamma = p^{(L + 1)/2}~\left(1+ \frac{2}{L+1}\right)$ and $\alpha = p^{(L+1)/2}$. 
\end{lemma}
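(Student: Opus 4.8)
The plan is to fix an arbitrary outcome of the queries $Q_1,\dots,Q_{r-1}$ and argue directly about round $r$. Recall that $O_r$ is a maximum matching in $(V,E\setminus W_r)$, and that $W_r$ consists precisely of the edges queried in earlier rounds that were found \emph{not} to exist; so $E\setminus W_r$ contains every edge not yet queried together with every edge known to exist. Hence $|O_r| \ge |\msize(X_p \cup T')|$ for the corresponding realization, and taking expectations over the still-unqueried edges one gets $\expc[|O_r|] \ge Z_r$ — more precisely, since $O_r$ is computed deterministically from the past and $Z_r = \condexpmat{E}{\bigcup_{i<r}Q_i}$ is the expected optimum over a random realization of $E\setminus(\text{queried edges})$, we have $|O_r| \ge Z_r$ pointwise once we note that $M(E_p)$ for any realization is a matching using only edges in $E\setminus W_r$. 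I would state this as the first step: $|O_r| \ge Z_r$.

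The second step applies Lemma~\ref{lem:struct-res-matching} to the two matchings $M_{r-1}$ and $O_r$. If $|O_r| \le |M_{r-1}|$ the bound is trivial (take the empty augmenting set; $\gamma, \alpha \le 1$ so $(1-\gamma)|M_{r-1}| + \alpha Z_r \le |M_{r-1}| \le \expc_{Q_r}[|M_r|]$, using $|M_r| \ge |M_{r-1}|$ since $M_{r-1}$'s edges are all in $\bigcup_{j\le r}Q'_j$). Otherwise, Lemma~\ref{lem:struct-res-matching} gives at least $|O_r| - (1 + \tfrac{2}{L+1})|M_{r-1}|$ vertex-disjoint augmenting paths of $M_{r-1}$, each of length at most $L$, inside $O_r \otimes M_{r-1}$; all these are placed into $Q_r$ and queried. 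Each such path has at most $(L+1)/2$ edges, so it survives the query (all its edges exist) with probability at least $p^{(L+1)/2}$, and because the paths are vertex-disjoint these survival events are independent and, crucially, each surviving augmenting path can be used simultaneously to enlarge the matching. So in expectation at least $p^{(L+1)/2}\bigl(|O_r| - (1+\tfrac{2}{L+1})|M_{r-1}|\bigr)$ disjoint augmenting paths survive, and $M_r$ (being a maximum matching on $\bigcup_{j\le r}Q'_j \supseteq M_{r-1}\cup\{\text{surviving paths}\}$) has size at least $|M_{r-1}|$ plus that many.

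The third step is just arithmetic: using $|O_r|\ge Z_r$,
\begin{align*}
\expc_{Q_r}[|M_r|] &\ge |M_{r-1}| + p^{(L+1)/2}\Bigl(Z_r - \bigl(1+\tfrac{2}{L+1}\bigr)|M_{r-1}|\Bigr)\\
&= \Bigl(1 - p^{(L+1)/2}\bigl(1+\tfrac{2}{L+1}\bigr)\Bigr)|M_{r-1}| + p^{(L+1)/2} Z_r,
\end{align*}
which is exactly $(1-\gamma)|M_{r-1}| + \alpha Z_r$. The main obstacle I anticipate is the step asserting that the surviving disjoint augmenting paths can all be applied \emph{at once} to $M_{r-1}$: one must check that applying an augmenting path within $O_r\otimes M_{r-1}$ does not interfere with the others, which follows because distinct augmenting paths in a symmetric difference are vertex-disjoint, so their union with $M_{r-1}$ is still a valid structure whose maximum matching exceeds $|M_{r-1}|$ by the number of applied paths; combined with linearity of expectation over the independent survival indicators, this yields the claimed bound. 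A secondary subtlety worth spelling out is the pointwise inequality $|O_r| \ge Z_r$, which should be phrased carefully as "$O_r$ is at least as large as the maximum matching in any realization of the unqueried edges consistent with the current history," so that its expectation dominates $Z_r$ — indeed it dominates it pointwise.
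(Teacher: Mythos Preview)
Your proposal is correct and follows essentially the same approach as the paper: apply Lemma~\ref{lem:struct-res-matching} to count short augmenting paths in $O_r\otimes M_{r-1}$, lower-bound each path's survival probability by $p^{(L+1)/2}$, use vertex-disjointness to add up the gains, and invoke $|O_r|\ge Z_r$. One phrasing slip: an augmenting path of length at most $L$ has up to $L$ edges, not $(L+1)/2$; what you need (and what the paper says) is that at most $(L+1)/2$ of its edges lie in $O_r\setminus M_{r-1}$ and hence are not yet known to exist, which is what yields the $p^{(L+1)/2}$ bound.
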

\begin{proof}
By Lemma~\ref{lem:struct-res-matching}, there exist at least $\card{O_{r}} - (1+\frac{2}{L+1})\card{M_{r-1}}$ augmenting paths in $O_{r} \otimes M_{r-1}$ that augment $M_{r-1}$ and are of length at most $L$. The $O_r$ part of every augmenting path of length at most $L$ exists independently with probability at least $p^{(L+1)/2}$. Therefore, the expected increase in the size of the matching is:
\begin{align*}
\expc_{Q_r}[|M_{r}|] - |M_{r-1}| & ~\geq~    p^{\frac{L+1}{2}}   \left( |O_r|-\left(1+\frac{2}{L+1}\right)|M_{r-1}| \right) \\ 
& ~=~ \alpha |O_r| - \gamma |M_{r-1}|  ~\geq~  \alpha Z_r - \gamma |M_{r-1}|,
\end{align*}
where the last inequality holds by the fact that $Z_r$, which is the expected size of the optimal matching with expectation taken over non-queried edges, cannot be larger than $O_r$, which is  the maximum matching assuming that every non-queried edge exists. 
\end{proof}

We are now ready to prove the theorem.

\begin{proof}[\textsc{of Theorem~\ref{thm:main-adaptive-matching}}]
Let $L= \frac{4}{\epsilon}-1$; it is assumed to be an odd integer for ease of exposition.\footnote{Otherwise there exists $\epsilon/2\leq \epsilon'\leq \epsilon$ such that $\frac{4}{\epsilon'}-1$ is an odd integer. We use a similar simplification in the proofs of other results in the appendix.}  
By Lemma~\ref{lem:increase-each-iter-matching}, we know that for every $r\in [R]$, $\expc_{Q_{r}}[\card{M_r|} \geq (1- \gamma) \card{M_{r-1}} + \alpha Z_{r}$, where $\gamma  = p^{(L+1)/2}(1 + \frac{2}{L+1})$, and $\alpha = p^{(L+1)/2}$. We will use this inequality repeatedly to derive our result. We will also require the equality
\begin{equation}
\label{eq:jeefa}
\expc_{Q_{r-1}}[Z_{r}] = \expc_{Q_{r-1}} \left[  \condexpmat{E}{\bigcup_{i = 1}^{r-1}Q_{i}} \right]  = \condexpmat{E}{\bigcup_{i = 1}^{r-2}Q_{i}}  = Z_{r-1}.
\end{equation}

First, applying Lemma~\ref{lem:increase-each-iter-matching} at round $R$, we have that $\expc_{Q_{R}}[\card{M_{R}}] \geq (1- \gamma) \card{M_{R-1}} + \alpha Z_{R}$. This inequality is true for any fixed outcomes of $Q_1,\ldots,Q_{R-1}$. In particular, we can take the expectation over $Q_{R-1}$, and obtain $$\expc_{Q_{R-1}, Q_{R}}[\card{M_{R}}] \geq (1- \gamma)~\expc_{Q_{R-1}}[\card{M_{R-1}}] + \alpha~\expc_{Q_{R-1}}[Z_{R}].$$
By Equation~\eqref{eq:jeefa}, we know that $\expc_{Q_{R-1}}[Z_{R}] = Z_{R-1}$. Furthermore, we can apply Lemma~\ref{lem:increase-each-iter-matching} to $\expc_{Q_{R-1}}[\card{M_{R-1}}]$ to get the following inequality:
\begin{align*}
\expc_{Q_{R-1}, Q_{R}}[\card{M_{R}}] &\geq (1- \gamma)~\expc_{Q_{R-1}}[\card{M_{R-1}}] + \alpha ~\expc_{Q_{R-1}}[Z_{R}] \\
&\geq (1- \gamma)~\left((1- \gamma) ~\card{M_{R-2}} + \alpha~Z_{R-1}\right) + \alpha ~Z_{R-1}\\
&= (1-\gamma)^{2}~\card{M_{R-2}} + \alpha~(1 + (1-\gamma))~Z_{R-1}.
\end{align*}

We repeat the above steps by sequentially taking expectations over $Q_{R-2}$ through $Q_{1}$, and at each step applying Equation~\eqref{eq:jeefa} and Lemma~\ref{lem:increase-each-iter-matching}. This gives us 
\begin{align*}
\expc_{Q_1, \dots, Q_{R}} [\card{M_R}] & ~\geq~ (1-\gamma)^{R}\card{M_{0}} + \alpha~(1+(1-\gamma) + \cdots + (1-\gamma)^{R-1}) Z_{1} \\
& ~=~ \alpha~\frac{1 - (1-\gamma)^{R}}{\gamma}Z_1,
\end{align*}
where the second transition follows from the initialization of $M_{0}$ as an empty matching.
Since $L= \frac{4}{\epsilon}-1$ and $R = \frac{\log (2/ \epsilon)}{p^{2/\epsilon}}$, we have 
\begin{align}\label{eq:matching-epsilon}
\frac{\alpha}{\gamma} \big( 1-(1-\gamma)^R \big) = \left(1 - \frac{2}{L+ 1}\right) \big( 1-(1-\gamma)^R \big) \geq 1 - \frac{2}{L + 1} - e^{-\gamma R} \geq 1 - \frac{\epsilon}{2} - \frac{\epsilon}{2}  = 1- \epsilon,
\end{align}
where the second transition is true because $e^{-x}\geq 1-x$ for all $x\in \mathbb{R}$. 
We conclude that $\expc_{Q_1, \dots, Q_{R}} [\card{M_R}] \ge (1-\epsilon)~Z_1$.
Because $Z_1 = \expmat{E}$, it follows that the expected size of the algorithm's output is at least  $(1-\epsilon)~\expmat{E}$. 
\end{proof}
 
In Appendix~\ref{sec:extensions}, we extend our results to the setting where edges have correlated existence probabilities---an edge's probability is determined by parameters associated with its two vertices. This generalization gives a better model for kidney exchange, as some patients are \emph{highly sensitized} and therefore harder to match in general; this means that all edges incident to such vertices are less likely to exist.
We consider two settings, first, where an adversary chooses the vertex parameters, and second, where these parameters are drawn from a distribution. 
Our approach involves excluding from our analysis edges whose existence probability is too low.
We do so by showing that (under specific conditions) excluding any augmenting path that includes such edges still leaves us with a large number of constant-size augmenting paths.

\section{Non-adaptive algorithm: $0.5$-approximation}
\label{sec:non-adaptive-matching} \label{SEC:NON-ADAPTIVE-MATCHING}

The adaptive algorithm, Algorithm~\ref{alg:adaptive-matching}, augments the current matching by computing a maximum matching on queried edges that are known to exist, and edges that have not been queried. One way to extend this idea to the non-adaptive setting is the following: we can simply choose several edge-disjoint matchings, and hope that they help in augmenting each other. In this section, we ask: How close can this non-adaptive interpretation of our adaptive approach take us to the omniscient optimum?

In more detail, our non-adaptive algorithm---formally given as Algorithm~\ref{alg:non-adaptive-matching}---iterates $R=\frac{\log (2/ \epsilon)}{p^{2/\epsilon}}$ times. In each iteration, it picks a maximum matching and removes it. The set of edges queried by the algorithm is the union of the edges chosen in some iteration. We will show that, for any arbitrarily small $\epsilon>0$, the algorithm finds a $0.5(1-\epsilon)$-approximate solution.  Since we allow an arbitrarily small (though constant)
probability $p$ for stochastic matching, achieving a $0.5$-approximation independently of the value of $p$, while querying only a linear number of edges, is nontrivial. For example, a na\"ive algorithm that only queries one maximum matching clearly does not guarantee a $0.5$-approximation---it would guarantee only a $p$-approximation. In addition, the example given in Section~\ref{sec:mod} shows that choosing edges at random performs poorly.

\begin{algorithm}
\caption{\textsc{Non-adaptive algorithm for Stochastic Matching: $0.5$-approximation}}
\label{alg:non-adaptive-matching}
{\bf Input}: A graph $G(V, E)$.\\
{\bf Parameter}: $R = \frac{\log (2/ \epsilon)}{p^{2/\epsilon}}$
\begin{enumerate}
\item Initialize $W_1\gets \emptyset$.
\item For $r = 1, \dots, R$, do \label{step:non-ada-iter-matching}
\begin{enumerate}
\item Compute a maximum matching, $O_{r}$, in $\left(V, E\setminus \bigcup_{1\le i \le r-1}W_{i}\right)$.
\item \label{step:non-ada-augmenting-paths} $W_{r} \leftarrow W_{r-1} \cup O_{r}$.
\end{enumerate}
\item Query all the edges in $W_{R}$, and output the maximum matching among the edges that are found to exist in $W_{R}$.
\end{enumerate}
\end{algorithm}

The number of edges incident to any particular vertex that are queried by the algorithm is at most $\frac{\log (2/ \epsilon)}{p^{2/\epsilon}}$, because the vertex can be matched with at most one neighbor in each round. The next theorem, whose proof appears in Appendix~\ref{app:non-adap}, establishes the approximation guarantee of Algorithm~\ref{alg:non-adaptive-matching}.

\begin{theorem}
\label{thm:non-adaptive-alg}
Given a graph $G=(V,E)$ and any $\epsilon >0$, the expected size $\expmat{W_{R}}$ of the matching produced by Algorithm~\ref{alg:non-adaptive-matching} is at least a $0.5 (1- \epsilon)$ fraction of $\expmat{E}$.
\end{theorem}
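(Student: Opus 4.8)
\textbf{Proof proposal for Theorem~\ref{thm:non-adaptive-alg}.}
The plan is to mimic the recursion in the adaptive proof, but to account for the loss incurred by the fact that the matchings $O_1,\dots,O_R$ are all chosen \emph{in advance}, against the deterministic graph $E$, rather than being recomputed on the partially-revealed graph. The quantity that will play the role of $Z_r$ is simply the deterministic maximum matching size $\msize(E)$; crucially, each $O_r$ is a maximum matching of a graph that still contains all but at most $(r-1)$ edges incident to any vertex, so $|O_r|$ is comparable to $|O_1|=\msize(E)$, and hence $\expmat{E}\le \msize(E)$. Fix an odd length $L$ (to be chosen as roughly $4/\epsilon-1$). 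The combinatorial core is the same as in \lref[Lemma]{lem:struct-res-matching}: for each $r$, applying the lemma to the pair $(M_{r-1},O_r)$ — where $M_{r-1}$ is the maximum matching among the \emph{existing} edges of $\bigcup_{i<r}O_i$ — yields at least $|O_r|-(1+\tfrac{2}{L+1})|M_{r-1}|$ vertex-disjoint augmenting paths of length $\le L$ inside $M_{r-1}\otimes O_r$, all of whose $M_{r-1}$-edges are already known to exist and whose $O_r$-edges have not yet been revealed by \emph{this} matching.

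The subtlety, and the main obstacle, is independence: in the adaptive algorithm the $O_r$-edges of an augmenting path are by construction fresh (not previously queried) and disjoint across paths, so their existence events are mutually independent and independent of everything revealed so far. In the non-adaptive setting the edges of $O_r$ are queried once and for all, but the augmenting paths for round $r$ use the \emph{$M_{r-1}$} side from earlier matchings and the \emph{$O_r$} side from the current one; since the $O_i$ are edge-disjoint by construction (each is removed before the next is chosen), the $O_r$-edges appearing in round-$r$ augmenting paths are genuinely disjoint from all $O_i$-edges with $i<r$, so conditioning on the outcomes of $O_1,\dots,O_{r-1}$ leaves each $O_r$-edge existing independently with probability $p$. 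Thus I can carry out the same expectation argument as in \lref[Lemma]{lem:increase-each-iter-matching}: conditioned on the revealed status of $\bigcup_{i<r}O_i$, each length-$\le L$ augmenting path is fully present with probability $\ge p^{(L+1)/2}$ (only the $\le (L+1)/2$ edges from $O_r$ need to be checked), and these events are independent across the disjoint paths, giving
\[
\expc_{O_r}[\,|M_r|\,]\ \ge\ (1-\gamma)\,|M_{r-1}| + \alpha\,|O_r|,
\qquad \gamma=p^{(L+1)/2}\!\Big(1+\tfrac{2}{L+1}\Big),\ \ \alpha=p^{(L+1)/2}.
\]

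From here I unroll the recursion exactly as in the proof of \lref[Theorem]{thm:main-adaptive-matching}: taking expectations successively over $O_{R},O_{R-1},\dots,O_1$ and using $|O_r|\ge \msize(E)$ at the cost of replacing $|O_r|$ by the deterministic value $\msize(E)$ (here there is no averaging identity to invoke — $\msize(E)$ is a fixed number — which actually simplifies this step relative to the adaptive case, since Equation~\eqref{eq:jeefa} is not needed), I obtain
\[
\expmat{W_R}\ \ge\ \alpha\,\frac{1-(1-\gamma)^R}{\gamma}\,\msize(E)
\ =\ \Big(1-\tfrac{2}{L+1}\Big)\big(1-(1-\gamma)^R\big)\,\msize(E).
\]
Choosing $L=\tfrac{4}{\epsilon}-1$ makes $1-\tfrac{2}{L+1}=1-\tfrac{\epsilon}{2}$, and $R=\tfrac{\log(2/\epsilon)}{p^{2/\epsilon}}$ makes $(1-\gamma)^R\le e^{-\gamma R}\le \tfrac{\epsilon}{2}$ since $\gamma\ge p^{(L+1)/2}=p^{2/\epsilon}$, exactly as in \eqref{eq:matching-epsilon}. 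This gives $\expmat{W_R}\ge (1-\epsilon)\,\msize(E)$. Finally, since the omniscient optimum satisfies $\expmat{E}\le \msize(E)$ always, but we only want a $0.5(1-\epsilon)$ bound relative to $\expmat{E}$, the one remaining inequality is $\msize(E)\ge \tfrac12\,\expmat{E}$ — in fact $\expmat{E}\le\msize(E)$ directly, so we even have the stronger $\expmat{W_R}\ge(1-\epsilon)\expmat{E}$ against the \emph{deterministic} optimum; the factor $0.5$ enters precisely because $\expmat{E}$, the expected size of a matching in the \emph{random} subgraph, can be as large as... no: one must be careful here. The honest accounting is that $\msize(E)$ can be up to twice $\expmat{E}$ is the wrong direction; rather, the loss comes from the fact that a maximum matching on $\bigcup Q'_j$ in the adaptive algorithm can exceed what $W_R$'s fixed matchings guarantee, and the clean statement to prove is $\expmat{W_R}\ge(1-\epsilon)\,\msize(E)\ge (1-\epsilon)\,\expmat{E}\cdot\tfrac{\msize(E)}{\expmat{E}}$; I would replace the displayed target by first proving $\expmat{W_R}\ge(1-\epsilon)\,\msize(E)$ and then invoking the (standard, and easy) bound $\msize(E)\ge \expmat{E}$, noting the theorem as stated with the $0.5$ factor is then implied a fortiori — and I expect the intended proof in the appendix in fact establishes the $(1-\epsilon)$ bound against $\msize(E)$ and uses $\msize(E)\ge\tfrac12\expmat{E}$ only if it insists on comparing edge-sets rather than matching sizes. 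The single genuine obstacle is therefore the independence bookkeeping in the first paragraph above; everything downstream is the already-verified unrolling.
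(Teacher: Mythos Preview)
Your recursion
\[
\expc_{O_r}\!\big[\,|M_r|\,\big]\ \ge\ (1-\gamma)\,|M_{r-1}| + \alpha\,|O_r|
\]
is correct, and the independence bookkeeping is fine: the $O_r$-edges are disjoint from $\bigcup_{i<r}O_i$ by construction, so conditioning on the outcomes of earlier rounds leaves them fresh. This is exactly what the paper does.

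The genuine gap is the step ``using $|O_r|\ge \msize(E)$''. This inequality is backwards: $O_r$ is a maximum matching of a \emph{subgraph} of $E$, so $|O_r|\le |\msize(E)|$, with equality only for $r=1$. Your heuristic that ``only $r-1$ edges per vertex have been removed, so $|O_r|$ is comparable to $|O_1|$'' fails already when $E$ is a single perfect matching: then $O_1=E$ and $|O_2|=0$. The same example kills the downstream conclusion $\expmat{W_R}\ge(1-\epsilon)\,|\msize(E)|$: here $W_R=E$ and $\expmat{W_R}=p\cdot n/2$, while $|\msize(E)|=n/2$, so the claimed bound is off by a factor of $p$. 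Your closing paragraph senses that something is wrong but misdiagnoses it; the paper does \emph{not} prove a $(1-\epsilon)$ bound against $|\msize(E)|$.

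The $0.5$ is not slack; it is exactly the price of lower-bounding $|O_r|$. The paper's argument is: by subadditivity of expected matching size (\lref[Lemma]{clm:exp-match-sum-parts}), $\expmat{E}\le \expmat{W_{r-1}}+\expmat{E\setminus W_{r-1}}$. Hence either $\expmat{W_{r-1}}\ge \tfrac12\expmat{E}$ already, or
\[
|O_r|=|\msize(E\setminus W_{r-1})|\ \ge\ \expmat{E\setminus W_{r-1}}\ \ge\ \tfrac12\,\expmat{E},
\]
and plugging $|O_r|\ge \tfrac12\expmat{E}$ into your recursion gives $\expmat{W_r}\ge(1-\gamma)\expmat{W_{r-1}}+\tfrac{\alpha}{2}\expmat{E}$. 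Unrolling this (or stopping the first time $\expmat{W_{r}}\ge\tfrac12\expmat{E}$) yields the stated $0.5(1-\epsilon)$ bound. So the missing ingredient is the subadditivity lemma plus the case split, and that is precisely where the factor $\tfrac12$ enters.
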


As explained in Section~\ref{sec:conclusions-open}, we do not know whether in general non-adaptive algorithms can achieve a $(1-\epsilon)$-approximation with $O(1)$ queries per vertex. However, if there is such an algorithm, it is not Algorithm~\ref{alg:non-adaptive-matching}! Indeed, the next theorem (whose proof is relegated to Appendix~\ref{app:non-adap}) shows that the algorithm cannot give an approximation ratio better than $5/6$ to the omniscient optimum. 
This fact holds even when $R=\Theta(\log n)$. 

\begin{theorem}\label{thm:non-adaptive-upperbound}
Let $p=0.5$. For any $\epsilon>0$ there exists $n$ and a graph $(V, E)$ with $|V|\geq n$ such that Algorithm~\ref{alg:non-adaptive-matching}, with $R=O(\log n)$, returns a matching with expected size of at most $\frac 56 \expmat{E} + \epsilon$. 
\end{theorem}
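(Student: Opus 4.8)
The plan is to construct, for each $\epsilon>0$, an explicit graph family $G_n=(V,E)$ with $|V|=\Theta(n)$ on which the edge set $W_R$ that Algorithm~\ref{alg:non-adaptive-matching} accumulates can be pinned down exactly and is provably worse -- under independent survival with $p=\tfrac12$ -- than the omniscient optimum by a factor approaching $\tfrac56$. First I would design $G_n$ so that its maximum matchings are rigid: there is a distinguished ``backup'' edge set $S\subseteq E$ such that every maximum matching of $E$, and more generally of $E$ with any union of at most $R$ earlier returned maximum matchings removed, avoids $S$. The tool for this is the standard characterization that an edge $uv$ lies in no maximum matching of a graph $H$ precisely when $\nu(H-u-v)=\nu(H)-2$; by gluing together many identical small gadgets along a shared, high-degree ``driver'' subgraph, one can arrange that every edge of $S$ robustly satisfies this condition while the driver subgraph stays rich enough that $R=\Theta(\log n)$ rounds never exhaust it. Consequently $W_R\cap S=\emptyset$, and up to lower-order terms $W_R$ is a fixed union of $R$ matchings lying inside the gadget-plus-driver skeleton, whose form I can read off from how the algorithm's maximum matchings sweep through the driver.

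Given this, the argument would split into two estimates. To lower-bound $\expmat{E}$, I would describe an explicit (randomized) omniscient matching strategy that, whenever a skeleton edge it wanted to use has failed, reroutes through the backup edges of $S$; a Chernoff/Hall-type concentration argument then shows this strategy matches a $1-o(1)$ fraction of the matchable vertices, so $\expmat{E}=(1-o(1))\,\nu(E)$. To upper-bound $\expmat{W_R}$, I would analyze a maximum matching of the subsample $(W_R)_p$ directly: because the skeleton matchings overlap only in a controlled way -- their pairwise unions are short alternating paths and cycles of bounded length -- the expected maximum matching in $(W_R)_p$ is bounded away from $\nu(E)$ by a constant factor, the relevant bound coming from the fact that a random subgraph of a short path or cycle retains at most a $\tfrac23$ fraction of its matching number when $p=\tfrac12$. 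Choosing the gadget so that these two estimates meet at ratio $\tfrac56$, and then sending $n\to\infty$ to absorb the $o(1)$ terms into the additive $\epsilon$, would finish the proof; the construction survives every number of rounds up to $\Theta(\log n)$ precisely because the driver subgraph is never exhausted and each newly queried maximum matching keeps arriving ``parallel'' to what is already in $W_R$ rather than reaching $S$.

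The main obstacle, and where essentially all of the design work lives, is reconciling the two roles of $S$: its edges must genuinely help the \emph{stochastic} optimum yet remain invisible to the algorithm's \emph{maximum}-matching rule across $\Theta(\log n)$ rounds. The naive attempt -- making a large vertex class matchable only through $S$ -- does not work, since once $\Theta(\log n)$ edges of $S$ incident to each such vertex have been queried they already suffice under $p=\tfrac12$ to recover a near-optimal matching; hence the scarcity has to be placed inside the skeleton itself, and one must verify that the maximum-matching rule is trapped in the fragile sub-skeleton for \emph{every} admissible run, not merely for a convenient tie-breaking. I expect establishing that trapping property -- that no sequence of maximum matchings the algorithm could return over $\Theta(\log n)$ rounds ever escapes into $S$, while the resulting $W_R$ nonetheless has subsampled matching number bounded away from $\nu(E)$ -- to be the most delicate step.
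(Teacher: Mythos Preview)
Your proposal is built around a harder requirement than the theorem actually demands, and that is why it ends up so delicate. Algorithm~\ref{alg:non-adaptive-matching} computes ``a maximum matching'' in each round with no tie-breaking rule specified; the theorem is an upper bound on the algorithm's guarantee, so it suffices to exhibit \emph{one} graph together with \emph{one} legal sequence of maximum matchings that the algorithm could have chosen, for which $\expmat{W_R}\le \tfrac56\,\expmat{E}+\epsilon$. You are instead trying to design a graph in which \emph{every} maximum matching of every residual graph avoids the backup set $S$---your ``trapping property''---and you correctly identify this as the crux and the hardest step. But it is unnecessary: once you allow adversarial tie-breaking, the whole difficulty evaporates.

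With that simplification the paper's construction is short and explicit. Take $|A|=|D|=t/2$, $|B|=|C|=t$, a perfect matching between $B$ and $C$, and complete bipartite graphs $A\times B$ and $C\times D$. Split $B=B_1\cup B_2$ and $C=C_1\cup C_2$ into halves. The adversary has the algorithm pick, in round~1, the $B_1$--$C_1$ matching together with perfect matchings $A$--$B_2$ and $C_2$--$D$; in round~2, the $B_2$--$C_2$ matching with $A$--$B_1$ and $C_1$--$D$; in all later rounds, perfect matchings $A$--$B_1$ and $C_1$--$D$ (the $B$--$C$ edges are already gone). Each of these is a maximum matching of the residual graph, yet every vertex of $B_2\cup C_2$ ends up with exactly two queried edges, so with probability $(1-p)^2=\tfrac14$ it is isolated in $(W_R)_p$. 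A direct count then gives $\expmat{W_R}\le \tfrac54 t$, while Lemma~\ref{clm:perfect-bipartite} yields $\expmat{E}\ge \tfrac32 t-o(t)$, and the ratio is $\tfrac56+o(1)$. None of your gadget machinery, driver subgraph, or $\nu(H-u-v)=\nu(H)-2$ criterion is needed; the ``backup'' role is played simply by the abundant $A$--$B_1$ and $C_1$--$D$ edges that the adversary steers the algorithm toward, leaving $B_2$ and $C_2$ starved.
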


Despite this somewhat negative result, in Section~\ref{sec:experiments}, we show experimentally on realistic kidney exchange compatibility graphs that Algorithm~\ref{alg:non-adaptive-matching} performs very well for even very small values of $R$, across a wide range of values of $p$.

\section{Generalization to $k$-Set Packing}
\label{sec:kset} \label{SEC:KSET}

So far we have focused on stochastic matching, for ease of exposition. But our approach directly generalizes to the $k$-set packing problem. Here we describe this extension for the adaptive (Section~\ref{sec:kset-adaptive-main}) and non-adaptive (Section~\ref{sec:kset-non-adaptive-main}) cases, and relegate the details---in particular, most proofs---to Appendix~\ref{app:kset}. 

Formally, a $k$-set packing instance $(U,A)$ consists of a set of elements $U$, $|U|=n$, and a collection of subsets $A$, such that each subset $S$ in $A$ contains at most $k$ elements of $U$, that is, $S \subseteq U$ and $\card{S} \le k$. Given such an instance,  a feasible solution is a collection of sets $B \subseteq A$ such that any two sets in $B$ are disjoint. We use $\ksize{A}$ to denote the largest feasible solution $B$. 

Finding an optimal solution to the $k$-set packing problem is \textbf{NP}-hard (see, e.g., \citep{ABS07} for the special case of $k$-cycle packing). \citet{Hurkens:1989} designed a polynomial-time local search algorithm with an approximation ratio of $(\frac{2}{k} - \eta)$, using local improvements of \emph{constant size} that depends only on $\eta$ and $k$. We denote this constant by $\saize$.  More formally, consider an instance $(U,A)$ of $k$-set packing and let $B \subseteq A$ be a collection of disjoint $k$-sets. $(C,D)$ is said to be an \emph{augmenting structure} for $B$ if removing $D$ and adding $C$ to $B$ increases the cardinality and maintains the disjointness of the resulting collection, i.e., if $(B \cup C) \setminus D$ is a disjoint collection of $k$-sets and $\card{(B \cup C) \setminus D} > \card{B}$, where $C \subseteq A$ and $D \subseteq B$. 

\citet{Hurkens:1989} have also shown that an approximation ratio better than $2/k$ cannot be achieved with structures of constant size. While subsequent work~\citep{Furer:2013} has improved the approximation ratio, their local search algorithm finds structures of super-constant size. This is inconsistent with our technical approach, as we need each queried structure to exist (in its entirety) with constant probability.

To be more precise, \citet{Hurkens:1989} prove: 
\begin{lemma}[\citep{Hurkens:1989}]
\label{lem:hurkens}
Given a collection $B$ of disjoint sets such that $\card{B} < (2/k-\eta)\card{\ksize{A}}$, there exists an augmenting structure $(C,D)$ for $B$ such that both $C$ and $D$ have at most $\saize$ sets, for a constant $\saize$ that depends only on $\eta$ and $k$.
\end{lemma}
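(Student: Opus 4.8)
I would argue the contrapositive: fix a constant $\saize$, to be chosen below as a function of $\eta$ and $k$ only, and show that if $B$ admits \emph{no} augmenting structure $(C,D)$ with $\card{C}\le\saize$ and $\card{D}\le\saize$, then $\card{B}\ge(2/k-\eta)\card{\ksize{A}}$. Write $O:=\ksize{A}$, $O':=O\setminus B$, $B':=B\setminus O$; since $O$ and $B$ are both disjoint collections, $\card{O'}-\card{B'}=\card{O}-\card{B}$.

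The right object is the bipartite \emph{meets graph} $H$ on sides $O'$ and $B'$, in which $T\in O'$ is adjacent to $S\in B'$ iff $T\cap S\neq\emptyset$. Since every set has at most $k$ elements, and every element lies in at most one set of $O$ and at most one set of $B$, every vertex of $H$ has degree at most $k$. The key translation is that any $C\subseteq O'$ with $\card{N_H(C)}<\card{C}$ (call such a $C$ \emph{deficient}) already yields an augmenting structure $(C,N_H(C))$ for $B$: the sets of $C$ are pairwise disjoint (they all lie in $O$), each set of $C$ is disjoint from every set of $B\setminus N_H(C)$ by the definition of $N_H$, and $\card{(B\cup C)\setminus N_H(C)}=\card{B}+\card{C}-\card{N_H(C)}>\card{B}$; and both $C$ and $N_H(C)$ have size at most $\card{C}$. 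So it suffices to prove: whenever $\card{B}<(2/k-\eta)\card{O}$ there is a deficient $C\subseteq O'$ with $\card{C}\le\saize$.

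Two warm-ups pin down the role of small structures. With $\saize\ge1$, every $T\in O'$ must meet some set of $B$ (otherwise $(\{T\},\varnothing)$ augments $B$), so $\card{O'}\le\card{E(H)}=\sum_{S\in B'}\deg_H(S)\le k\card{B'}$, giving $\card{B}\ge\card{O}/k$. With $\saize\ge2$, for each $S\in B'$ at most one $T\in O'$ can meet $S$ and nothing else in $B$; hence if $q_1$ is the number of $T\in O'$ meeting exactly one set of $B$ then $q_1\le\card{B'}$, and combining with $q_1+2(\card{O'}-q_1)\le\sum_{T\in O'}\deg_H(T)=\card{E(H)}\le k\card{B'}$ yields $\card{O'}\le\tfrac{k+1}{2}\card{B'}$, i.e.\ $\card{B}\ge\tfrac{2}{k+1}\card{O}$. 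The full statement requires pushing this ``charging ratio'' from $(k+1)/2$ down towards $k/2$, which is the delicate part, handled by Hurkens and Schrijver via bounded-size exchange chains. Concretely, I would fix a maximum matching $M$ of $H$, let $U_0\subseteq O'$ be its exposed vertices (so $\card{U_0}=\card{O'}-\nu(H)\ge\card{O'}-\card{B'}=\card{O}-\card{B}\ge\eta\card{O'}$ when $k\ge2$), and, for $u\in U_0$, grow the $M$-alternating reachability set $R(u)\subseteq O'$: a standard exchange argument shows $\card{N_H(R(u))}=\card{R(u)}-1$ and $R(u)\cap U_0=\{u\}$, so under our hypothesis $\card{R(u)}\ge\saize+1$, while distinct $R(u)$'s can overlap only in $M$-matched vertices. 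Counting incidences, some matched vertex lies in at least $\card{U_0}\,\saize/\card{O'}\ge\eta\,\saize$ of the sets $R(u)$; ``peeling back'' the corresponding alternating paths, each step passing through one of the at most $k$ neighbours of the current vertex, one extracts --- provided $\saize$ is large enough in terms of $k$ and $1/\eta$, and with the alternating forest truncated at a constant depth tied to $\saize$ exactly as in~\citep{Hurkens:1989} --- a deficient set of size at most $\saize$, the desired contradiction.

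I expect this last step to be the real obstacle. A global deficiency of order $\eta\card{O}$ need not localize to a bounded-size set in an \emph{arbitrary} bipartite graph, so the degree bound $k$ must be exploited quantitatively to turn ``many disjoint, low-degree, deficient-by-one alternating structures, each necessarily large'' into a single short certificate. Nailing the explicit dependence of $\saize$ on $\eta$ and $k$ --- rather than a bound that degrades with the ground-set size --- is exactly the technical heart of the Hurkens--Schrijver theorem, and is where most of the work lies.
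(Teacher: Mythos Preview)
The paper does not prove this lemma: it is stated with a citation to \citet{Hurkens:1989} and used as a black box, so there is no ``paper's own proof'' to compare your proposal against. Your write-up is therefore an attempt to reconstruct the Hurkens--Schrijver argument itself, not something the present paper undertakes.

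As for the sketch on its own terms: the bipartite meets-graph $H$ and the translation ``deficient $C\Rightarrow$ augmenting structure $(C,N_H(C))$'' are exactly the right framing, and your $\saize=1,2$ warm-ups are correct and standard. The alternating-reachability set $R(u)$ indeed satisfies $\card{N_H(R(u))}=\card{R(u)}-1$ for a maximum matching $M$, so each $R(u)$ is globally deficient. The gap is the localization step you flag yourself: the pigeonhole observation that some matched vertex lies in $\ge \eta\saize$ of the $R(u)$'s, followed by ``peeling back'' through degree-$\le k$ neighbours, is not an argument --- it is a hope that a short deficient subset can be extracted, and you have not shown how the degree bound forces this. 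In fact the actual Hurkens--Schrijver proof does not proceed via alternating forests in $H$; it is an extremal counting argument on a carefully layered auxiliary structure that directly bounds $\card{O'}/\card{B'}$ by $k/2+o_{\saize}(1)$. So your outline diverges from the cited proof precisely at the point where the real content lies, and what you have written does not close the gap. Since the paper only needs the statement, the cleanest fix is simply to cite \citet{Hurkens:1989} as the paper does.
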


However, we need to find \emph{many} augmenting structures. We use Lemma~\ref{lem:hurkens} to prove:

\begin{lemma}
\label{lem:struct-res-k-set}
If $\card{B} < \card{\ksize{A}}$, then there exist $\frac{1}{k\,\saize}(\card{\ksize{A}} - \frac{\card{B}}{\frac{2}{k} - \eta})$ disjoint augmenting structures that augment the cardinality of $B$, each with size at most $\saize$. Moreover, this collection of augmenting structures can be found in polynomial time.
\end{lemma}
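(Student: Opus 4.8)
The plan is to mimic the proof of Lemma~\ref{lem:struct-res-matching}, replacing the role of an augmenting path by the Hurkens--Schrijver local improvement of Lemma~\ref{lem:hurkens}. Since there is no clean symmetric-difference decomposition for $k$-set packing, instead of exhibiting the structures in one shot I would generate them \emph{iteratively} and then argue that the ones generated are automatically \emph{element-disjoint} augmenting structures for the \emph{original} collection $B$. Write $\delta:=\frac 2k-\eta$; one has $0<\delta<1$, and if $|B|\ge\delta\,\ksize A$ the claimed count is non-positive so there is nothing to prove, hence assume $|B|<\delta\,\ksize A$. I would also reduce, w.l.o.g., to the case where $B$, $\ksize A$, and the improvement sets produced below are mutually set-disjoint wherever needed.

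Concretely, run the following process starting from $B_0:=B$, maintaining a growing set $F_j$ of ``frozen'' elements (with $F_0=\emptyset$) and writing $A_j:=\{S\in A:S\cap F_j=\emptyset\}$ for the currently surviving sets. At step $j$, while the current packing restricted to $A_j$ has size strictly less than $\delta\,\ksize{A_j}$, apply Lemma~\ref{lem:hurkens} to that restricted instance to obtain an augmenting structure $(C_j,D_j)$ with $|C_j|,|D_j|\le\saize$; update the packing by removing $D_j$ and adding $C_j$, and then put every element appearing in some set of $C_j\cup D_j$ into $F_{j+1}$.

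The structural heart of the argument, which I would prove by a short induction on the step index, is that each $(C_j,D_j)$ is an augmenting structure for $B$ itself: every set of $C_j$ avoids $F_j$, and any $b\in B$ meeting $C_j$ must still lie in the current packing at step $j$ (a set of $B$ deleted at an earlier step had all its elements frozen then, so it cannot meet $C_j$), so that $b\in D_j$; hence $D_j\subseteq B$, the collection $(B\cup C_j)\setminus D_j$ is disjoint, and $|C_j|>|D_j|$. Distinct $(C_j,D_j)$ are element-disjoint by the freezing. For the count: each step freezes at most $k(|C_j|+|D_j|)\le 2k\,\saize$ elements, hence lowers the maximum packing size of the surviving instance by at most that much; meanwhile a simple induction (using that the only current set meeting $C_j\cup D_j$ other than $D_j$ is $C_j$, which is frozen away) shows the restricted packing size never exceeds $|B|$. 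Therefore $\ksize{A_t}\ge\ksize A-2k\saize\,t$, and the loop keeps running as long as $|B|<\delta(\ksize A-2k\saize\,t)$, i.e.\ for at least $\frac{1}{2k\saize}\bigl(\ksize A-|B|/\delta\bigr)$ steps, each producing one of the desired disjoint augmenting structures. Polynomiality is immediate: each improvement is found by brute force over the constantly many candidate structures of size at most $\saize$, there are at most $\ksize A$ improvements, and the structures are exactly the $(C_j,D_j)$.

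The main obstacle is sharpening the last estimate from $\frac{1}{2k\saize}$ to the stated $\frac{1}{k\saize}$: one must freeze only the elements that a later improvement could actually reuse and exploit that $|D_j|<|C_j|$ (so improvements are never wasteful, and a set of $D_j$ already shares an element with $C_j$, contributing at most $k-1$ genuinely new frozen elements), which tightens the per-step depletion of $\ksize{A_j}$. Everything else is bookkeeping.
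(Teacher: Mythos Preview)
Your iterative approach is on the right track, and the argument up to the $\frac{1}{2k\,\saize}$ bound is correct. But the final ``sharpening'' paragraph is where the real content lies, and the sketch you give does not close the gap. Counting the overlap between $D_j$ and $C_j$ buys you at most $k|C_j|+(k-1)|D_j|\le (2k-1)\saize$ newly frozen elements per step, still short of $k\,\saize$. And simply \emph{not} freezing the $D_j$-elements breaks the very property you worked to establish, that each $(C_{j'},D_{j'})$ augments the original $B$: once $D_j$ has been deleted from your evolving packing, a later $C_{j'}$ may element-intersect some $b\in D_j\subseteq B$ while $b\notin D_{j'}$ (the Hurkens--Schrijver call at step $j'$ never sees $b$), so $(B\setminus D_{j'})\cup C_{j'}$ need not be a packing.

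The paper's approach avoids all of this by \emph{never updating the packing}. Keep $B$ fixed throughout; at step $t$ find a Hurkens--Schrijver augmenting structure $(C_t,D_t)$ for $B$ in the current instance $(U,A_t)$, then set $A_{t+1}$ to be $A_t$ minus $C_t$ and minus every set of $A_t\setminus B$ that element-intersects $C_t$. Since $B\subseteq A_{t+1}$ always, each step again produces an augmenting structure for $B$; since later $C$'s avoid every set outside $B$ touching earlier $C$'s, the $C$-parts are element-disjoint, which is all the downstream stochastic argument needs. Only sets touching the at most $k\,\saize$ elements of $C_t$ are removed, so $\card{\ksize{A_{t+1}}}\ge\card{\ksize{A_t}}-k\,\saize$, and the process runs for at least $\frac{1}{k\,\saize}\bigl(\card{\ksize{A}}-\card{B}/(\frac 2k-\eta)\bigr)$ steps. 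Your detour through an evolving packing is precisely what forces you to freeze $D_j$ and lose the factor of $2$; indeed, once you observe (as you essentially do) that your restricted current packing is always $B\setminus(D_1\cup\cdots\cup D_{j-1})\subseteq B$, the update was never doing any work.
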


\begin{proof} 
We prove the lemma using  Algorithm~\ref{alg:k-set-aug}. We claim that if we run this algorithm on the $k$-set packing instance $(U,A)$ and the collection $B$, then it will return a collection $Q$ of at least $T=\frac{1}{k\,\saize}(\card{\ksize{A}} - \frac{\card{B}}{\setapp})$ disjoint augmenting structures $(C,D)$ for $B$. By Step~\ref{algstep:removal-of-C}, we are guaranteed that $Q$ consists of \emph{disjoint} augmenting structures. Hence, all that is left to show is that in each of the first $T$ iterations, at Step~\ref{algstep:aug-structure}, we are able to find a \emph{nonempty} augmenting structure $(C,D)$ for $B$.

By Lemma~\ref{lem:hurkens}, we know that if at iteration $t$ it is the case that $\card{B} < (\setapp) \card{\ksize{A_{t}}}$, then we will find an augmenting structure $(C,D)$ of size $\saize$ for $B$. To prove that the inequality holds at each iteration $t\le T$, we first claim that for all $t$, 
\begin{equation}
\label{eq:jeefa-k}
\card{\ksize{A_{t}}} \ge \card{\ksize{A}} - (t-1)\cdot k\cdot \saize
\end{equation}
We prove this by induction. The claim is clearly true for the base case of $t=1$. For the inductive step, suppose it is true for $t$, then we know that $\card{\ksize{A_{t}}} \ge \card{\ksize{A}} - (t-1)\cdot k\cdot \saize$.
At iteration $t$, the augmenting structure $(C,D)$ can intersect with at most $\saize\cdot k$ sets of $\ksize{A_{t}}$. This is true since $\ksize{A_{t}}$ consists of \emph{disjoint} sets, and the augmenting structure $(C,D)$ is of size at most $\saize$.
Hence, Step~\ref{algstep:removal-of-C}  reduces $\card{\ksize{A_t}}$  by at most  $k\cdot \saize$. So,
$\card{\ksize{A_{t+1}}} \geq \card{\ksize{A_{t}}} - k\cdot \saize$. Combining the two inequalities, $\card{\ksize{A_{t}}} \ge \card{\ksize{A}} - (t-1)\cdot k\cdot \saize$ and $\card{\ksize{A_{t+1}}} \ge \card{\ksize{A_{t}}} - k\cdot \saize$, we have $\card{\ksize{A_{t+1}}} \ge \card{\ksize{A}} - t\cdot k\cdot \saize$. This establishes Equation~\eqref{eq:jeefa-k}.

We conclude that if the for-loop adds \emph{non-empty} augmenting structures only for the first $t$ rounds, it must be the case that $\card{B} \ge (\setapp) \card{\ksize{A_{t+1}}}$, and therefore $\card{B} \ge (\setapp)(\card{\ksize{A}} - t\cdot k\cdot \saize)$ which implies that $t \ge \frac{1}{k\,\saize}( \card{\ksize{A}} - \frac{\card{B}}{\setapp})$.
\end{proof}

\begin{algorithm}[ht]
\caption{\textsc{Finding constant-size disjoint Augmenting structures for $k$-sets}}
\label{alg:k-set-aug}
{\bf Input}: $k$-set packing instance $(U,A)$ and a collection $B\subseteq A$ of disjoint sets.\\
{\bf Output}: Collection $Q$ of disjoint augmenting structures $(C,D)$ for $B$.\\
{\bf Parameter}: $\saize$ (the desired maximum size of the augmenting structures)
\begin{enumerate}
\item Initialize $A_{1} \leftarrow A$ and $Q \leftarrow \phi$ (empty set).
\item For $t=1,\cdots, \card{A}$
\begin{enumerate}
\item\label{algstep:aug-structure} Find an augmenting structure $(C,D)$ of size $\saize$ for $B$ on the $k$-set instance $(U, A_{t})$.
\item Add $(C,D)$ to $Q$. (If $C$ is an empty set, break out of the loop.)
\item\label{algstep:removal-of-C} Set $A_{t+1}$ to be  $A_{t}$ minus the collection $C$ and any set in $A_{t} \setminus B$ that intersects with $C$.
\end{enumerate}
\item Output $Q$.
\end{enumerate}
\end{algorithm}

\subsection{Adaptive algorithm for $k$-set packing}
\label{sec:kset-adaptive-main}
Turning to the stochastic version of the problem, given $(U,A)$, let $A_{p}$ be a random subset of $A$ where each set from $A$ is included in $A_{p}$ independently with probability $p$. We then define $\expset{A}$ to be $\expc[\card{\ksize{A_{p}}}]$, where the expectation is taken over the random draw $A_{p}$. Similarly to the matching setting, this is the omniscient optimum---our benchmark.

We extend the ideas introduced earlier in the paper for matching, together with Lemma~\ref{lem:struct-res-k-set} and additional ingredients, to obtain the following result for the adaptive problem.

\begin{theorem}
\label{thm:main-adaptive-kset}
There exists an adaptive polynomial-time algorithm that, given a $k$-set instance $(U,A)$ and $\epsilon>0$, uses $O(1)$ rounds and $O(n)$ queries overall, and returns a set $B_{R}$ whose expected cardinality is at least a $(1-\epsilon)\frac{2}{k}$ fraction of $\expset{A}$.
\end{theorem}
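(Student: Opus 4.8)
The plan is to mirror the adaptive matching argument from Section~\ref{sec:ada-matching}, replacing augmenting paths with the constant-size augmenting structures supplied by Lemma~\ref{lem:struct-res-k-set}. First I would describe the algorithm: initialize $B_0 = \emptyset$ and a set $W_1 = \emptyset$ of sets already known not to exist; in round $r$, restrict to $(U, A \setminus W_r)$, compute a (local-search) collection $O_r$ that is a $(\frac{2}{k}-\eta)$-approximation to the optimum packing on the sets known to exist together with the not-yet-queried sets, then invoke Algorithm~\ref{alg:k-set-aug} on $O_r$ (or more precisely on the current packing $B_{r-1}$ versus $O_r$) to extract a large collection $Q_r$ of \emph{disjoint} augmenting structures each of size at most $\saize$. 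Query every set appearing in $Q_r$; move the non-existent ones into $W_{r+1}$; and let $B_r$ be a maximum packing among all sets queried so far and found to exist. Because the augmenting structures in $Q_r$ are disjoint and each has at most $\saize$ sets, each element of $U$ lies in $O(\saize)$ queried sets per round, so over $R = O(1)$ rounds the algorithm issues $O(1)$ queries per element and $O(n)$ overall, and it runs in polynomial time by the ``polynomial time'' clause of Lemma~\ref{lem:struct-res-k-set} and of Hurkens--Schrijver.

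Next I would set up the per-round progress lemma, the analogue of Lemma~\ref{lem:increase-each-iter-matching}. Define $Z_r = \overline{K}(A \mid \bigcup_{i<r} Q_i)$, the expected omniscient optimum conditioned on the query answers so far, with $Z_1 = \expset{A}$. Fix any outcome of $Q_1,\dots,Q_{r-1}$. If $\card{B_{r-1}} \ge (\frac{2}{k}-\eta) Z_r$ we are already done for that branch; otherwise Lemma~\ref{lem:struct-res-k-set}, applied with $A$ restricted to the surviving sets and with $\card{\ksize{\cdot}} \ge O_r \ge Z_r$ (since assuming all non-queried sets exist can only help), yields at least $\frac{1}{k\saize}\bigl(Z_r - \frac{\card{B_{r-1}}}{2/k-\eta}\bigr)$ disjoint augmenting structures, each of size at most $\saize$, hence each existing in full (independently of the others, by disjointness) with probability at least $p^{\saize}$. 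Each structure that exists lets us increase the packing by one, so
\begin{align*}
\expc_{Q_r}[\card{B_r}] - \card{B_{r-1}} \;\ge\; p^{\saize}\cdot \frac{1}{k\saize}\Bigl(Z_r - \frac{\card{B_{r-1}}}{2/k-\eta}\Bigr) \;=\; \alpha Z_r - \gamma \card{B_{r-1}},
\end{align*}
with $\alpha = \frac{p^{\saize}}{k\saize}$ and $\gamma = \frac{p^{\saize}}{k\saize(2/k-\eta)}$, so that $\alpha/\gamma = \frac{2}{k}-\eta$. (One subtlety: a maximum packing on all queried-and-existing sets is at least as large as $B_{r-1}$ augmented by whichever disjoint structures survived, so the inequality is valid even though $B_r$ is defined by a global recomputation rather than by literally applying the augmenting structures.)

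Then I would run exactly the telescoping recursion from the proof of Theorem~\ref{thm:main-adaptive-matching}: using the tower identity $\expc_{Q_{r-1}}[Z_r] = Z_{r-1}$ and peeling off expectations over $Q_R, Q_{R-1}, \dots, Q_1$ in turn, one gets
\begin{align*}
\expc_{Q_1,\dots,Q_R}[\card{B_R}] \;\ge\; (1-\gamma)^R \card{B_0} + \alpha\bigl(1 + (1-\gamma) + \cdots + (1-\gamma)^{R-1}\bigr) Z_1 \;=\; \frac{\alpha}{\gamma}\bigl(1-(1-\gamma)^R\bigr) Z_1,
\end{align*}
since $B_0 = \emptyset$. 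Finally, choosing $\eta = \eta(\epsilon,k)$ small enough that $\frac{2}{k}-\eta \ge \frac{2}{k}(1-\epsilon/2)$, and $R$ large enough (a constant, since $\gamma$ is a constant once $p,k,\eta$ are fixed) that $(1-\gamma)^R \le e^{-\gamma R} \le \epsilon/2$, we obtain $\frac{\alpha}{\gamma}(1-(1-\gamma)^R) \ge (\frac{2}{k}-\eta)(1-\epsilon/2) \ge \frac{2}{k}(1-\epsilon)$, hence $\expc[\card{B_R}] \ge (1-\epsilon)\frac{2}{k}\expset{A}$, as required.

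The main obstacle I anticipate is not the recursion — which is essentially identical to the matching case — but the bookkeeping around the augmenting structures: unlike augmenting paths, which are automatically vertex-disjoint and of transparent length, here one must carefully argue (via Algorithm~\ref{alg:k-set-aug}) that one can greedily extract enough \emph{disjoint} constant-size augmenting structures while controlling how much each extraction erodes the remaining optimum, and then verify that disjointness really does give independence of the query outcomes across structures and that the global recomputation of $B_r$ dominates the naive ``apply all surviving structures'' lower bound. Getting the per-element query bound right (it is $O(\saize)$ rather than $1$ per round, because a single element can be touched by the $D$-side of its own structure and the $C$-sides of several others — though disjointness of the structures caps this) is the other place where care is needed; it still yields $O(1)$ per element since $\saize$ and $R$ are constants.
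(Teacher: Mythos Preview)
Your approach is essentially identical to the paper's: the same per-round progress inequality (your $\alpha=\frac{p^{\saize}}{k\,\saize}$ and $\gamma=\frac{p^{\saize}}{k\,\saize\,(2/k-\eta)}$ coincide with the paper's Lemma~\ref{lem:increase-each-iter-kset}), the same tower-property telescoping over $Q_R,\dots,Q_1$, and the same final parameter choices.

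There is one concrete slip that affects the ``polynomial time'' clause of the theorem: you define $B_r$ as a \emph{maximum} packing on the queried-and-existing sets, but maximum $k$-set packing is \textbf{NP}-hard, so as written your algorithm is not polynomial time. The paper's Algorithm~\ref{alg:adaptive-kset} avoids this by setting $B_r\gets B_{r-1}$ and then, for each $(C,D)\in Q_r$ whose $C$-part fully exists, updating $B_r\gets(B_r\setminus D)\cup C$ --- exactly the ``naive'' update you already note suffices for the progress inequality. Two smaller points of bookkeeping: the auxiliary approximate packing $O_r$ you compute is not needed (and the chain $\card{\ksize{\cdot}}\ge O_r\ge Z_r$ is not right, since $O_r$ is only a $(\tfrac{2}{k}-\eta)$-approximation); the paper invokes Lemma~\ref{lem:struct-res-k-set} directly on $(U,A_r)$ and $B_{r-1}$ and uses only $\card{\ksize{A_r}}\ge Z_r$. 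And the per-element query count is in fact at most~$1$ per round, not $O(\saize)$: within a single structure the sets in $C$ are pairwise disjoint (since $(B\cup C)\setminus D$ is a packing), and Algorithm~\ref{alg:k-set-aug} removes from $A_t$ every set outside $B$ that intersects the current $C$, so the $C$-parts across different structures are element-disjoint as well.
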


With an eye toward Theorem~\ref{thm:main-adaptive-kset}, Algorithm~\ref{alg:adaptive-kset} is a polynomial-time algorithm that can be used to find such a packing that approximates the omniscient optimum.  In each round $r$, the algorithm maintains a feasible $k$-set packing $B_{r}$ based on the $k$-sets that have been queried so far. It then computes a collection $Q_{r}$ of disjoint, small augmenting structures with respect to the current solution $B_{r}$, where the augmenting structures are composed of sets that have not been queried so far. It issues queries to these augmenting structures, and uses those that are found to exist to augment the current  solution. The augmented solution is fed into the next round.

\begin{algorithm}
\caption{\textsc{Adaptive Algorithm for Stochastic $k$-Set packing}}
\label{alg:adaptive-kset}
{\bf Input}: A $k$-set instance $(U,A)$, and $\epsilon>0$.\\
{\bf Parameters}: $\eta = \frac \epsilon k$ and $R=\frac {(\setapp)~k~\saize}{p^{\saize}} \log( \frac 2\epsilon)~~~$ (For a $(1-\epsilon)(\frac 2k)$-approximation) 
\begin{enumerate}
\item Initialize $r\gets 1$, $B_{0} \gets \emptyset$ and $A_{1} \gets A$.
\item For $r = 1, \dots, R$, do \label{step:iter}
\begin{enumerate}
\item Initialize $B_{r}$ to $B_{r-1}$.
\item Let $Q_{r}$ be the set of augmenting structures 
given by Algorithm~\ref{alg:k-set-aug} 
on the input consisting of the $k$-set packing instance $(U, A_{r})$, the collection $B_{r}$, and the parameter $\saize$.
\item\label{step:increment-B} For each augmenting structure $(C,D) \in Q_r$.
\begin{enumerate}
\item Query all sets in $C$.
\item If all the sets of $C$ exist, augment the current solution: $B_{r} \gets (B_{r} \setminus D) \cup C$.
\end{enumerate}
\item Set $A_{r+1}$ to be $A_r$ after removing queried sets that were found not to exist.
\end{enumerate}
\item Return $B_{R}$.
\end{enumerate}
\end{algorithm}

Similarly to our matching results, for any element $v \in U$, the number of sets that it belongs to and are also queried is at most $R$. Indeed, in each of the $R$ rounds, Algorithm~\ref{alg:adaptive-kset} issues queries to disjoint augmenting structures, and each augmenting structure includes at most one set per element.

\subsection{Non-adaptive algorithm for $k$-set packing}
\label{sec:kset-non-adaptive-main}

Once again, when going from the adaptive case to the non-adaptive case, the fraction of the omniscient optimum that we can obtain becomes significantly worse. 

\begin{theorem} \label{thm:non-adaptive-sets}
There exists a non-adaptive polynomial-time algorithm that, given a $k$-set instance $(U, \A)$ and $\epsilon>0$, uses $O(n)$ queries overall and returns a $k$-set packing with expected cardinality  $(1-\epsilon)\frac{(2/k)^2}{2/k +1} \expset{\A}$.
\end{theorem}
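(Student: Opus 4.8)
The plan is to follow the blueprint of the proof of Theorem~\ref{thm:non-adaptive-alg} (non-adaptive matching), with two substitutions: augmenting paths are replaced by the constant-size augmenting structures of \citet{Hurkens:1989} (via Lemmas~\ref{lem:hurkens} and~\ref{lem:struct-res-k-set}), and exact maximum matchings are replaced by the Hurkens--Schrijver local-search packing; the latter substitution is precisely what turns the $1/2$ of Theorem~\ref{thm:non-adaptive-alg} into $\frac{(2/k)^2}{(2/k)+1}$. Concretely, the algorithm is the analogue of Algorithm~\ref{alg:non-adaptive-matching}: fix $\eta=\Theta(\epsilon/k)$ and $R=\Theta\!\left(\frac{(2/k)\,k\,\saize}{p^{\saize}}\log(1/\epsilon)\right)$; for $r=1,\dots,R$ let $O_r$ be a locally optimal packing (with respect to augmenting structures of size at most $\saize$) in the residual instance $(U,\A\setminus\bigcup_{i<r}O_i)$, so $\card{O_r}\ge(\tfrac{2}{k}-\eta)\,\ksize{\A\setminus\bigcup_{i<r}O_i}$ by Lemma~\ref{lem:hurkens}; query every set in $W_R=\bigcup_{r\le R}O_r$; and output a locally optimal packing among the queried sets found to exist. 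Since each element belongs to at most one set of each $O_r$, the algorithm makes at most $R$ queries per element, i.e.\ $O(n)$ overall, which takes care of the query bound.

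For the approximation guarantee I would, in the analysis only, run an augmentation process and set up a round-by-round recursion exactly as in the proof of Theorem~\ref{thm:main-adaptive-matching}. Let $B_r$ denote the packing the process maintains, with $B_0=\emptyset$; at round $r$, \emph{before} the realization of $O_r$ is revealed, apply Algorithm~\ref{alg:k-set-aug}/Lemma~\ref{lem:struct-res-k-set} to $B_{r-1}$ on the instance $B_{r-1}\cup O_r$ to obtain $\ge\frac{1}{k\,\saize}\bigl(\card{O_r}-\frac{\card{B_{r-1}}}{2/k-\eta}\bigr)$ disjoint augmenting structures, each involving at most $\saize$ sets of $O_r$ and hence surviving (all its $O_r$-sets existing) with probability $\ge p^{\saize}$, independently across the disjoint structures. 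Applying every structure that survives yields, with $\alpha=\Theta(p^{\saize}/(k\saize))$ and $\gamma=\Theta(\alpha/(2/k-\eta))$, an inequality of the form $\expc[\card{B_r}]\ge(1-\gamma)\,\expc[\card{B_{r-1}}]+\alpha\,\card{O_r}$, together with the trivial $\expc[\card{B_r}]\ge\expc[\card{B_{r-1}}]$. The second ingredient is a lower bound on $\card{O_r}$ in terms of the omniscient optimum and what the process has already covered: since deleting a family of $\card{B_{r-1}}$ sets from the realized instance decreases its optimal packing by at most $\card{B_{r-1}}$, and $O_r$ is a $(\tfrac{2}{k}-\eta)$-approximate packing of $\A\setminus\bigcup_{i<r}O_i$, one gets $\card{O_r}\ge(\tfrac{2}{k}-\eta)\bigl(\ksize{\A_p}-\card{B_{r-1}}\bigr)$ pointwise, hence $\card{O_r}\ge(\tfrac{2}{k}-\eta)(\expset{\A}-\expc[\card{B_{r-1}}])$ in expectation. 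Substituting this bound into the recursion turns it into a contraction of the form $x_r\ge(1-\Theta(\gamma))\,x_{r-1}+\Theta(\gamma)\cdot\tfrac{2/k-\eta}{(2/k-\eta)+1}\,\expset{\A}$ with $x_r=\expc[\card{B_r}]$; unrolling over $R=\Theta(\log(1/\epsilon)/\gamma)$ rounds drives $x_R$ to within an $e^{-\Theta(\gamma R)}\le\epsilon$ relative error of its fixed point, and combining this with the facts that $B_R$ is a feasible packing among the realized queried sets and that a local-search pass on those sets recovers a $(\tfrac{2}{k}-\eta)$ fraction of the best such packing, then letting $\eta\to0$ (absorbed into $\epsilon$) yields the claimed $(1-\epsilon)\frac{(2/k)^2}{2/k+1}\,\expset{\A}$.

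The delicate point---exactly as in Theorem~\ref{thm:non-adaptive-alg} versus Theorem~\ref{thm:main-adaptive-matching}---is that a non-adaptive algorithm commits to $O_1,\dots,O_R$ in advance, so the benchmark that round $r$'s augmenting structures can exploit is $\ksize{\A\setminus\bigcup_{i<r}O_i}$, which shrinks by roughly what the current packing already covers, rather than remaining constant in conditional expectation the way $Z_r$ does in Equation~\eqref{eq:jeefa}; this is why the recursion carries the gap term $\expset{\A}-\expc[\card{B_{r-1}}]$ and converges to a $\frac{2/k}{(2/k)+1}$-type value rather than to $\frac{2}{k}$ as in the adaptive case. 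Making this rigorous requires care in (i) correctly conditioning on the outcomes of rounds $1,\dots,r-1$ when analyzing round $r$ (as in the proof of Theorem~\ref{thm:main-adaptive-matching}), and (ii) verifying that the pointwise bound $\card{O_r}\ge(\tfrac{2}{k}-\eta)(\ksize{\A_p}-\card{B_{r-1}})$ is legitimate---this is the $k$-set analogue of the matching fact that deleting a matching of size $t$ drops the maximum matching by at most $t$, and I expect it to be the step that needs the most attention to state cleanly for augmenting-structure-based packings, since it is also what forces the analysis (and the final local-search pass of the algorithm) to disentangle the $(2/k-\eta)$ local-search loss from the genuinely non-adaptive loss. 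The matching-specific version of this whole argument is carried out in Appendix~\ref{app:non-adap}; the $k$-set version substitutes Lemma~\ref{lem:struct-res-k-set} for Lemma~\ref{lem:struct-res-matching} and local search for maximum matching throughout, as detailed in Appendix~\ref{app:kset}.
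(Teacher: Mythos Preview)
Your algorithm and overall plan---define an augmentation process $B_0,B_1,\dots$ in the analysis and derive a per-round recursion via Lemma~\ref{lem:struct-res-k-set}---match the paper's. The gap is in your ``second ingredient,'' the lower bound on $\card{O_r}$, and you are right to flag it as the delicate step: as stated it does not go through.

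You assert $\card{O_r}\ge(\tfrac{2}{k}-\eta)\bigl(\card{\ksize{A_p}}-\card{B_{r-1}}\bigr)$ pointwise, where $B_{r-1}$ is the packing the augmentation process maintains. The reasoning you give---deleting a family of $\card{B_{r-1}}$ sets from $A_p$ drops its optimum by at most $\card{B_{r-1}}$---is true but does not connect to $\card{O_r}$: what $O_r$ approximates is $\ksize{A\setminus\bigcup_{i<r}O_i}$, and the family removed from $A$ (or from $A_p$) is all of $\bigcup_{i<r}O_i$, not just $B_{r-1}$. Chasing the inequalities, your bound would require $\card{B_{r-1}}\ge\bigl|\ksize{A_p}\cap\bigcup_{i<r}O_i\bigr|$, i.e., that the augmented packing is at least as large as the portion of the omniscient optimum living inside the already-chosen sets. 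For $k=2$ this can be enforced by taking $B_{r-1}$ to be the \emph{maximum} matching among the existing queried edges (which is exactly what Algorithm~\ref{alg:non-adaptive-matching} does, and why Lemma~\ref{lem:non-ada-per-step} works). For $k\ge 3$ you cannot compute that maximum in polynomial time, and a process based on constant-size augmenting structures only certifies that $B_{r-1}$ is a $(\tfrac{2}{k}-\eta)$-fraction of it---the inequality goes the wrong way. (A secondary symptom: even granting your bound, the recursion's fixed point is $\tfrac{\mu^2}{1+\mu^2}$ with $\mu=\tfrac{2}{k}-\eta$, not $\tfrac{\mu}{1+\mu}$, so the final accounting would not land where you claim.)

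The paper sidesteps this with a case split on $\expset{W_R}$, where $W_R=\bigcup_r O_r$ (its $B_R$). If $\expset{W_R}>\alpha\,\expset{\A}$ with $\alpha=\tfrac{2/k-\eta}{1+(2/k-\eta)}$, then a local-search pass on the realized queried sets already has expected size at least $(\tfrac{2}{k}-\eta)\,\expset{W_R}>\tfrac{(2/k-\eta)^2}{1+(2/k-\eta)}\,\expset{\A}$. Otherwise, Claim~\ref{clm:exp-k-set-sum-parts} gives the \emph{uniform} deterministic bound $\card{O_r}\ge(\tfrac{2}{k}-\eta)\bigl(\expset{\A}-\expset{W_{r-1}}\bigr)\ge(\tfrac{2}{k}-\eta)(1-\alpha)\,\expset{\A}$ for every $r$, and plugging this constant into the recursion and unrolling yields the same $\tfrac{(2/k-\eta)^2}{1+(2/k-\eta)}\,\expset{\A}$ at the fixed point. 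The case split is exactly what replaces your $B_{r-1}$-dependent bound (which cannot be certified in polynomial time for $k\ge 3$) by one depending only on $\expset{W_{r-1}}$, which the subadditivity claim handles cleanly.
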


We present such a polynomial-time non-adaptive algorithm, Algorithm \ref{alg:non-adaptive-sets}, that proceeds as follows.
For $R$ rounds, at every round, using the local improvement algorithm of \citet{Hurkens:1989}, we find a $(\setapp)$-approximate solution to the $k$-set instance and remove it. Then, we query every set that is included in these $R$ solutions. We show that the expected cardinality of the maximum packing on the chosen sets is a $(1-\epsilon ) \frac{(2/k)^2}{2/k + 1}$ approximation of the expected optimal packing. As usual, it is easy to see that $O(n)$ queries are issued overall.

Importantly, the statements of Theorems~\ref{thm:main-adaptive-matching} and \ref{thm:non-adaptive-alg} are special cases of the statements of Theorems \ref{thm:main-adaptive-kset} and \ref{thm:non-adaptive-sets}, respectively, for $k=2$, although on a technical level the $k=2$ case must be handled separately (as we did, with less cumbersome terminology and technical constructions than in the general $k$-set packing setting).

\begin{algorithm}
\caption{\textsc{Non-Adaptive Algorithm for Stochastic $k$-Set packing}}
\label{alg:non-adaptive-sets}
{\bf Input}: A $k$-set packing instance $(U,A)$, and $\epsilon>0$.\\
{\bf Parameters:} $\eta  = \frac{\epsilon}{2k}$ and $R=\frac {(\setapp)~k~\saize}{p^{\saize}} \log( \frac 2\epsilon).~~~$ (For $(1-\epsilon)\frac{(2/k)^2}{2/k + 1}$-approximation)
\begin{enumerate}
\item Let $\B_{0} \gets \emptyset$.
\item For $r = 1, \dots, R$, do \label{step:iter}
\begin{enumerate}
\item $O_{r}\gets$ a $(\setapp)$-approximate solution to the $k$-set instance $(U,\A\setminus \bigcup_{i=1}^{r-1} \B_i)$. ($O_{r}$ is found using the local improvement algorithm of \citet{Hurkens:1989}.) 
\item Set $\B_r \gets \B_{r-1} \cup \O_r$.
\end{enumerate}
\item Query the sets in $O_{1}$, and assign $Q_{1}$ to be the sets that are found to exist. \label{step:non-ada-kset-stage1}
\item For $r=2, \cdots, R$, do \label{step:non-ada-kset-aug}
\begin{enumerate}
\item Find augmenting structures in $O_{r}$ that augment $Q_{r-1}$. This is achieved by giving the instance $(U,Q_{r-1} \cup O_{r})$ and solution $Q_{r-1}$ as input (with parameter $\saize$) to Algorithm~\ref{alg:k-set-aug}.
\item Query all the augmenting structures in $O_{r}$, and augment $Q_{r-1}$ with the ones that are found to exist. Call the augmented solution $Q_{r}$.
\end{enumerate}
\item Output $Q_{R}$.
\end{enumerate}
\end{algorithm}

\section{Experimental results on kidney exchange compatibility graphs}
\label{sec:experiments}

In this section, we support our theoretical results with empirical simulations from two kidney exchange compatibility graph distributions.  The first distribution, due to~\citet{SRSU+06}, was designed to mimic the characteristics of a nationwide exchange in the United States in steady state.  Fielded kidney exchanges have not yet reached that point, though; with this in mind, we also include results on \emph{real} kidney exchange compatibility graphs drawn from the first \num{169} match runs of the UNOS nationwide kidney exchange.  While these two families of graphs differ substantially, we find that even a small number $R$ of non-adaptive rounds, followed by a single period during which only those edges selected during the $R$ rounds are queried, results in large gains relative to the omniscient matching.

As is common in the kidney exchange literature, in the rest of this section we will loosely use the term ``matching'' to refer to both $2$-set packing (equivalent to the traditional definition of matching, where two vertices connected by directed edges are translated to two vertices connected by a single undirected edge) and $k$-set packing, possibly with the inclusion of altruist-initiated chains.

This section does not directly test the algorithms presented in this paper.  For the \num{2}-cycles-only case, we do directly implement Algorithm~\ref{alg:non-adaptive-matching}.  However, for the cases involving longer cycles and/or chains, we do not restrict ourselves to polynomial time algorithms (unlike in the theory part of this paper), instead choosing to optimally solve matching problems using integer programming during each round, as well as for the final matching and for the omniscient benchmark matching.  This decision is informed by the current practice in kidney exchange, where computational resources are much less of a problem than human or monetary resources (of which the latter two are necessary for querying edges).  

In our experiments, the planning of which edges to query proceeds in rounds as follows.  Each round of matching calls as a subsolver the matching algorithm presented by~\citet{DPS13}, which includes edge failure probabilities in the optimization objective to provide a maximum discounted utility matching.  The set of cycles and chains present in a round's discounted matching are added to a set of edges to query, and then those cycles and chains are constrained from appearing in future rounds.  After all rounds are completed, this set of edges is queried, and a final maximum discounted utility matching is compared against an omniscient matching that knows the set of non-failing edges up front.

\subsection{Experiments on dense generated graphs due to~\citet{SRSU+06}}
\label{sec:experiments-saidman}

We begin by looking at graphs drawn from a distribution due to~\citet{SRSU+06}, hereafter referred to as ``the Saidman generator.''  This generator takes into account the blood types of patients and donors (such that the distribution is drawn from the general United States population), as well as three levels of PRA and various other medical characteristics of patients and donors that may affect the existence of an edge.  Fielded kidney exchanges currently do not uniformly sample their pairs from the set of all needy patients and able donors in the US, as assumed by the Saidman generator; rather, exchanges tend to get hard-to-match patients who have not received an organ through other means.  Because of this, the Saidman generator tends to produce compatibility graphs that are significantly denser than those seen in fielded kidney exchanges today (see, e.g.,~\cite{AGRR11,AJM13}).

Figure~\ref{fig:experiments-saidman} presents the fraction of the omniscient objective achieved by $R \in \{0,1,\ldots,5\}$ non-adaptive rounds of edge testing for generated graphs with \num{250} patient-donor pairs and no altruistic donors, constrained to \num{2}-cycles only (left) and both \num{2}- and \num{3}-cycles (right).  Note that the case $R=0$ corresponds to no edge testing, where a maximum discounted utility matching is determined by the optimizer and then compared directly to the omniscient matching.  The x-axis varies the uniform edge failure rate $f$ from \num{0.0}, where edges do not fail, to \num{0.9}, where edges only succeed with a \num{10}\% probability.  Given an edge failure rate of $f$ in the figures below, we can translate to the $p$ used in the theoretical section of the paper as follows: \num{2}-cycles exists with probability $p_{\text{\num{2}-cycle}} = (1-f)^2$, while a \num{3}-cycle exists with $p_{\text{\num{3}-cycle}} = (1-f)^3$.  For example, in the case of $f = 0.9$, a \num{3}-cycle exists with very low probability $p = 0.001$.

\begin{figure}[ht!bp]
\centering
\begin{minipage}[b]{0.48\textwidth}
\centering
\includegraphics[width=\textwidth]{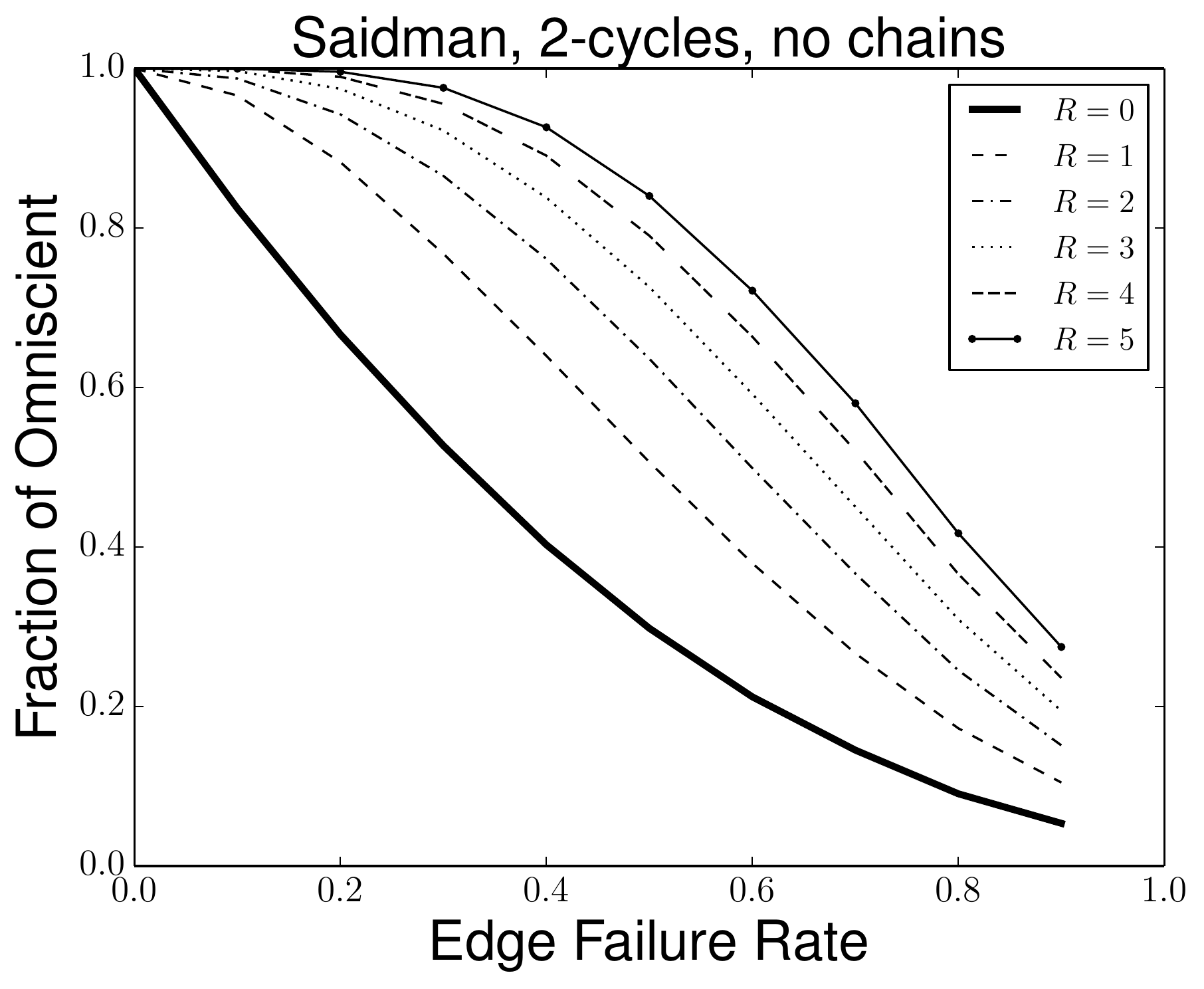}
\end{minipage}%
\hfill%
\begin{minipage}[b]{0.48\textwidth}
\centering
\includegraphics[width=\textwidth]{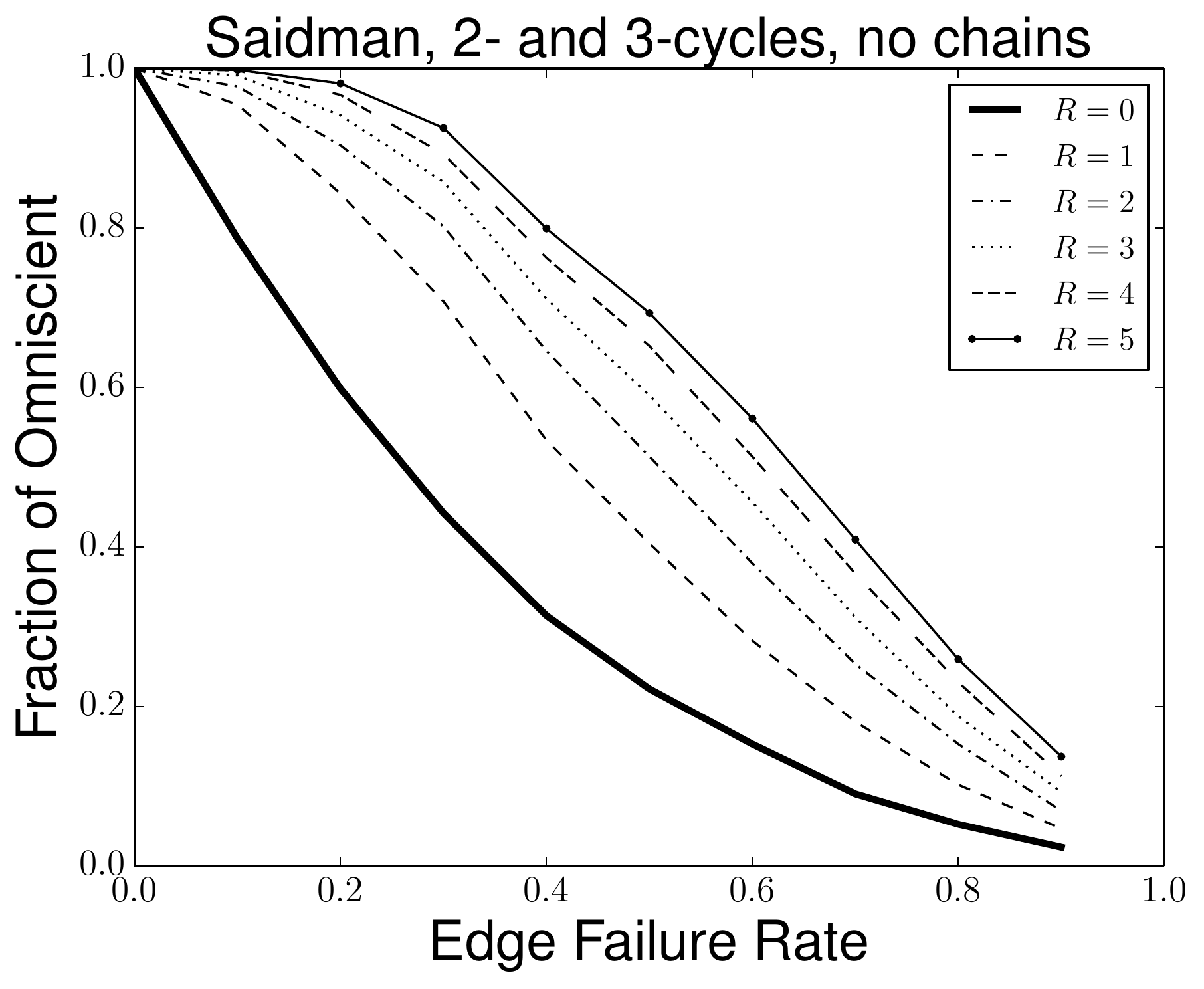}
\end{minipage}
\caption{Saidman generator graphs constrained to \num{2}-cycles only (left) and both \num{2}- and \num{3}-cycles (right).}
\label{fig:experiments-saidman}
\end{figure}

The utility of even a small number of edge queries is evident in Figure~\ref{fig:experiments-saidman}.  Just a single round of testing ($R=1$) results in \num{50.6}\% of omniscient---compared to just \num{29.8}\% with no edge testing---for edge failure probability $f = 0.5$ in the \num{2}-cycle case, and there are similar gains in the \num{2}- and \num{3}-cycle case.  For the same failure rate, setting $R = 5$ captures \num{84.0}\% of the omnsicient \num{2}-cycle matching and \num{69.3}\% in the \num{2}- and \num{3}-cycle case---compared to just \num{22.2}\% when no edges are queried.  Interestingly, we found no statistical difference between non-adaptive and adaptive matching on these graphs.

\subsection{Experiments on real match runs from the UNOS nationwide kidney exchange}
\label{sec:experiments-unos}

We now analyze the effect of querying a small number of edges per vertex on graphs drawn from the real world.  Specifically, we use the first \num{169} match runs of the UNOS nationwide kidney exchange, which began matching in October 2010 on a monthly basis and now includes \num{143} transplant centers---that is, \num{60}\% of the centers in the U.S.---and performs match runs twice per week.  These graphs, as with other fielded kidney exchanges~\cite{AJM13}, are substantially less dense than those produced by the Saidman generator.  This disparity between generated and real graphs has led to different theoretical results (e.g., efficient matching does not require long chains in a deterministic dense model~\cite{AR13,DPS12} but does in a sparse model~\cite{AGRR11}) and empirical results (both in terms of match composition and experimental tractability~\cite{Constantino13:New,Glorie14:Kidney,Anderson15:Finding}) in the past---a trend that continues here.

Figure~\ref{fig:experiments-unos2-without-zero} shows the fraction of the omniscient \num{2}-cycle and \num{2}-cycle with chains match size achieved by using only \num{2}-cycles or both \num{2}-cycles and chains and some small number of non-adaptive edge query rounds $R \in \{0,1,\ldots,5\}$.  For each of the \num{169} pre-test compatibility graphs and each of edge failure rates, \num{50} different ground truth compatibility graphs were generated.  Chains can partially execute; that is, if the third edge in a chain of length $3$ fails, then we include all successful edges (in this case, $2$ edges) until that point in the final matching.  More of the omniscient matching is achieved (even for the $R=0$ case) on these real-world graphs than on those from the Saidman generator presented in Section~\ref{sec:experiments-saidman}.  Still, the gain realized even by a small number of edge query rounds is stark, with $R=5$ achieving over \num{90}\% of the omniscient objective for every failure rate in the \num{2}-cycles-only case, and over \num{75}\% of the omniscient objective when chains are included (and typically much more).

\begin{figure}[ht!bp]
\centering
\begin{minipage}[b]{0.48\textwidth}
\centering
\includegraphics[width=\textwidth]{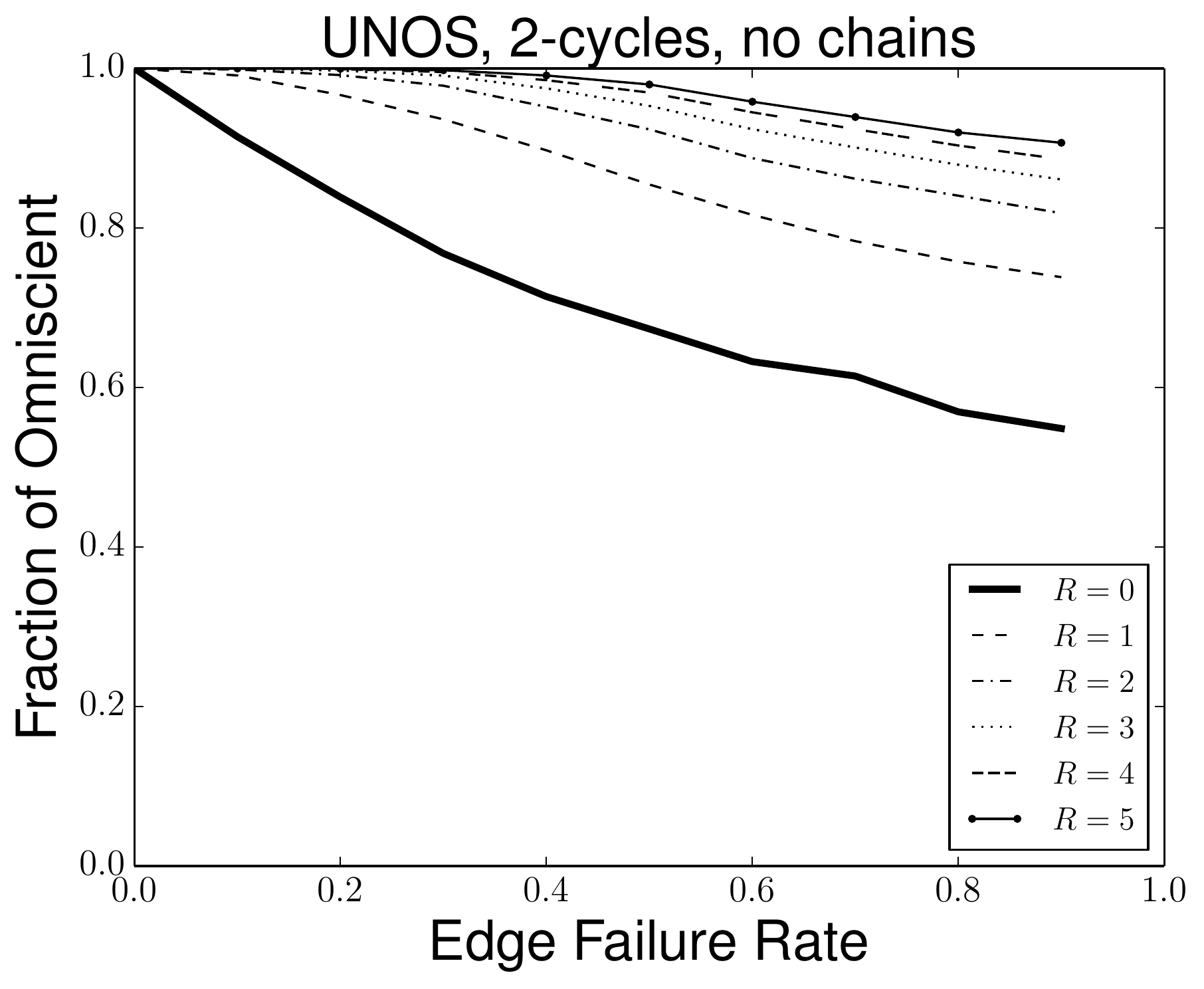}
\end{minipage}%
\hfill%
\begin{minipage}[b]{0.48\textwidth}
\centering
\includegraphics[width=\textwidth]{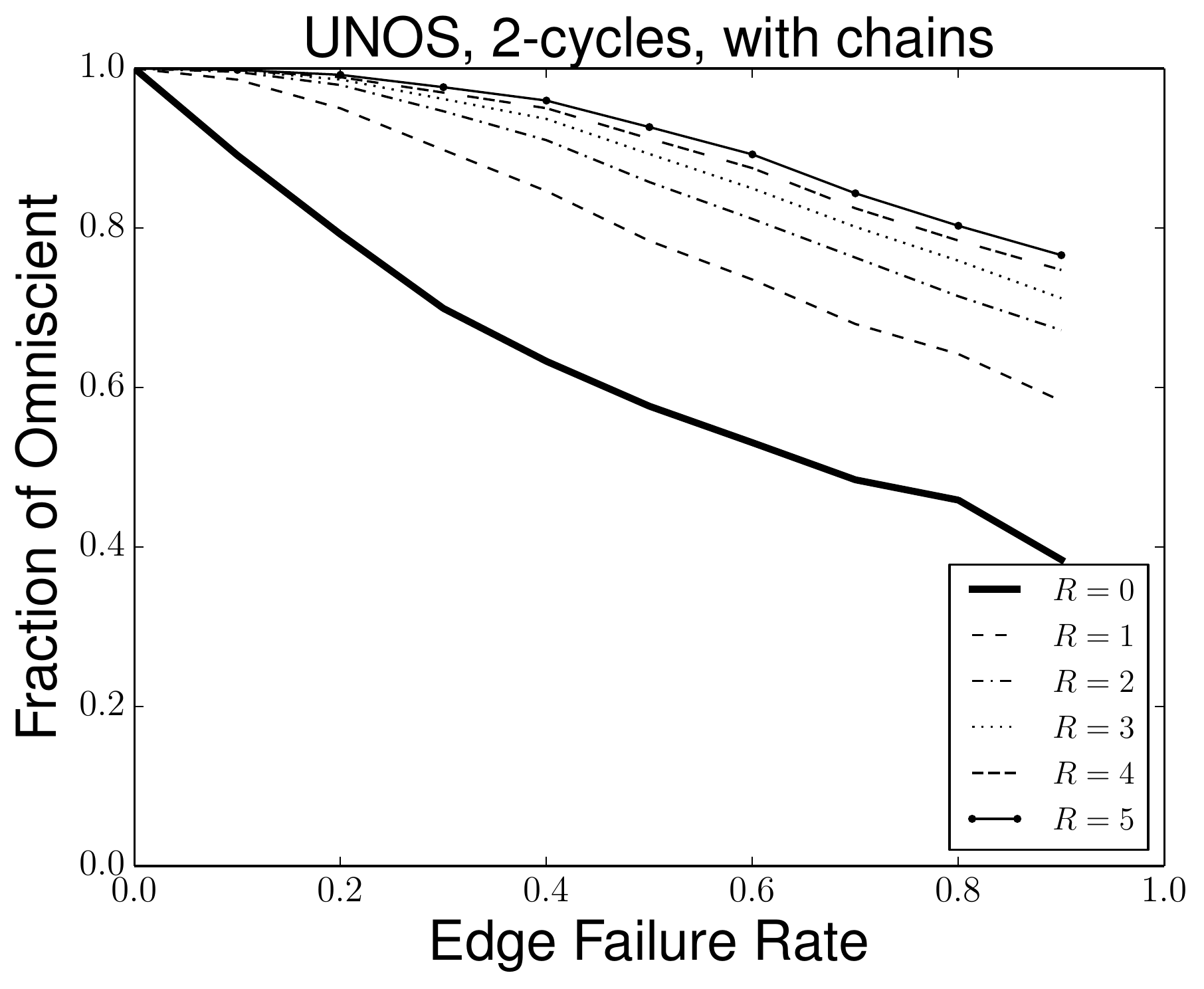}
\end{minipage}
\caption{Real UNOS match runs constrained to \num{2}-cycles (left) and both \num{2}-cycles and chains (right).}
\label{fig:experiments-unos2-without-zero}
\end{figure}

Figure~\ref{fig:experiments-unos3-without-zero} expands these results to the case with \num{2}- and \num{3}-cycles, both without and with chains.  Slightly less of the omniscient matching objective is achieved across the board, but the overall increases due to $R \in \{1,\ldots,5\}$ non-adaptive rounds of testing is once again prominent.  Interestingly, we did not see a significant difference in results for adaptive and non-adaptive edge testing on the UNOS family of graphs, either.

\begin{figure}[ht!bp]
\centering
\begin{minipage}[b]{0.48\textwidth}
\centering
\includegraphics[width=\textwidth]{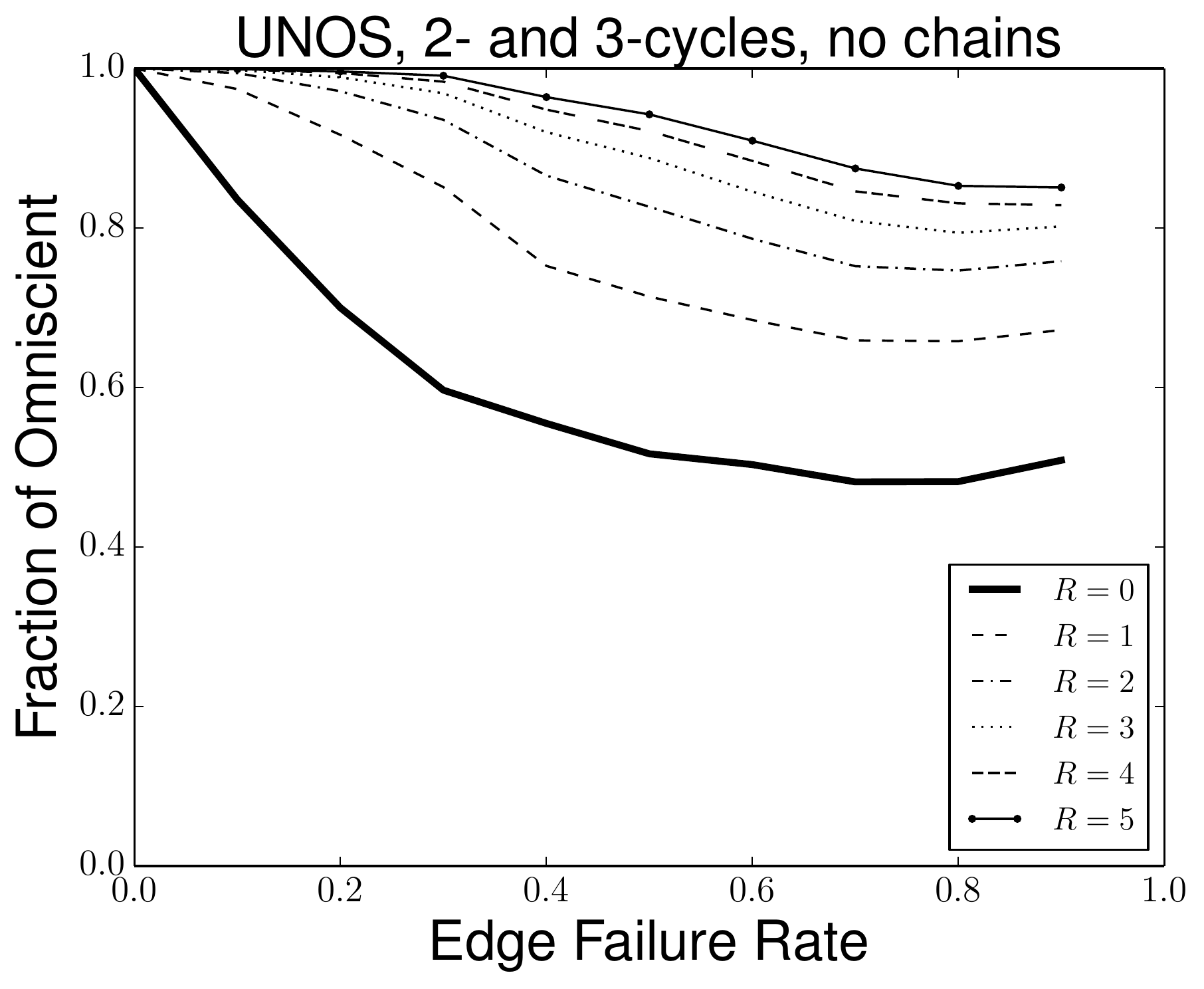}
\end{minipage}%
\hfill%
\begin{minipage}[b]{0.48\textwidth}
\centering
\includegraphics[width=\textwidth]{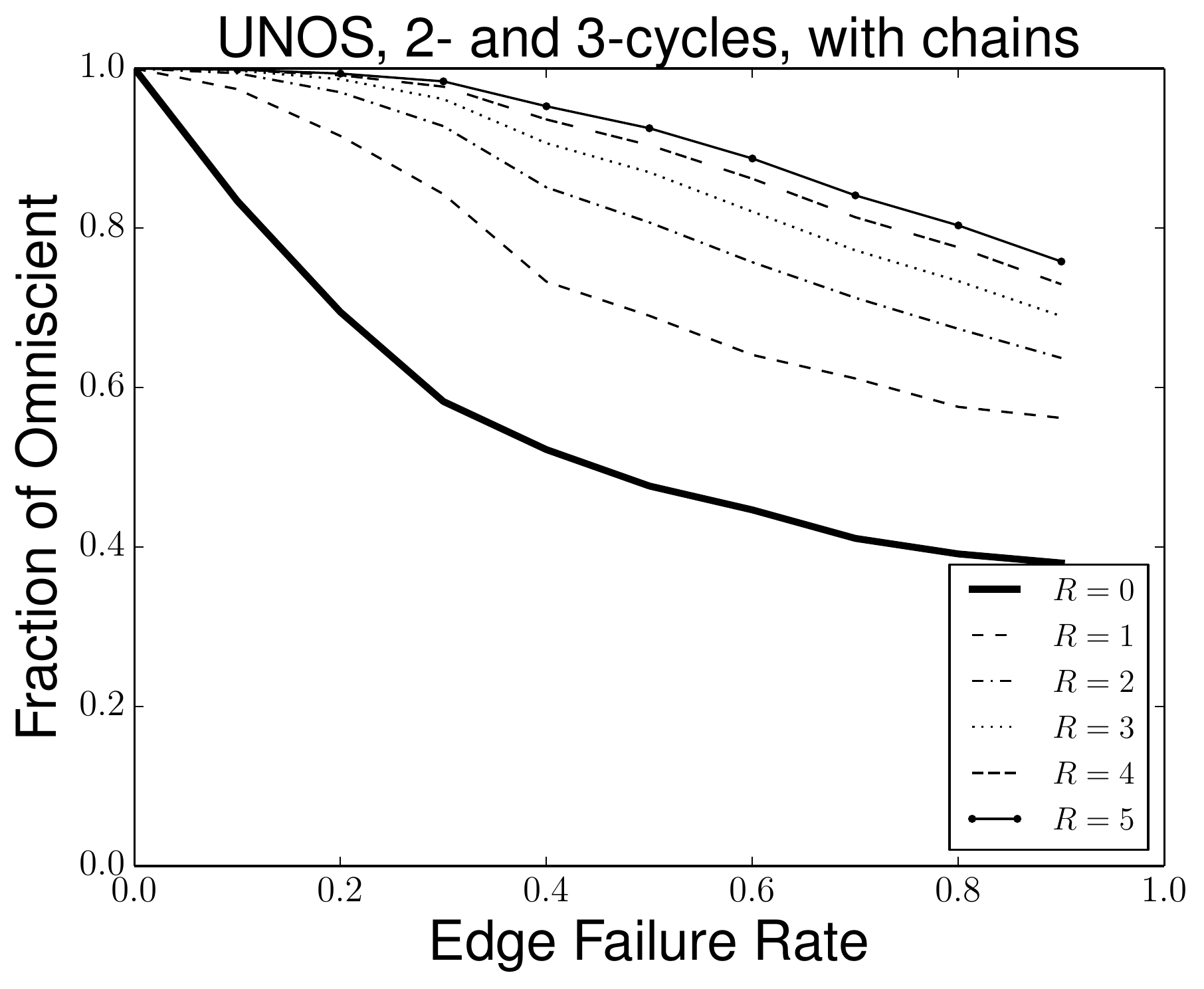}
\end{minipage}
\caption{Real UNOS match runs with \num{2}- and \num{3}-cycles and no chains (left) and with chains (right).}
\label{fig:experiments-unos3-without-zero}
\end{figure}

We provide additional experimental results in Appendix~\ref{app:experiments-appendix}.  Code to replicate all experiments is available at \url{https://github.com/JohnDickerson/KidneyExchange}; this codebase includes graph generators but, due to privacy concerns, does not include the real match runs from the UNOS exchange.

\section{Conclusions \& future research}
\label{sec:conclusions}

In this paper, we addressed stochastic matching and its generalization to $k$-set packing from both a theoretical and experimental point of view.  For the stochastic matching problem, we designed an \emph{adaptive} algorithm that queries only a constant number of edges per vertex and achieves a $(1-\epsilon)$ fraction of the omniscient solution, for an arbitrarily small $\epsilon>0$---and performs the queries in only a constant number of rounds. We complemented this result with a \emph{non-adaptive} algorithm that achieves a $(0.5 - \epsilon)$ fraction of the omniscient optimum.

We then extended our results to the more general problem of \emph{stochastic $k$-set packing} by designing an adaptive algorithm that achieves a $(\frac{2}{k} - \epsilon)$ fraction of the omniscient optimal solution, again with only $O(1)$ queries per element. This guarantee is quite close to the best known polynomial-time approximation ratio of $\frac{3}{k+1} -\epsilon$ for the standard \emph{non-stochastic} setting~\citep{Furer:2013}.

We adapted these algorithms to the kidney exchange problem and, on both generated and real data from the first \num{169} runs of the UNOS US nationwide kidney exchange, explored the effect of a small number of edge query rounds on matching performance.  In both cases---but especially on the real data---a very small number of non-adaptive edge queries per donor-patient pair results in large gains in expected successful matches across a wide range of edge failure probabilities.  

\subsection{Open theoretical problems}
\label{sec:conclusions-open}
\red{Three} main open theoretical problems remain open. First, our \emph{adaptive} algorithm for the matching setting achieves a $(1-\epsilon)$-approximation in $O(1)$ rounds and using $O(1)$ queries per vertex. Is there a \emph{non-adaptive algorithm} that achieves the same guarantee? Such an algorithm would make the practical message of the theoretical results even more appealing: instead of changing the \emph{status quo} in two ways---more rounds of crossmatch tests, more tests per patient---we would only need to change it in the latter way. 

\red{
Second, for both our adaptive and non-adaptive algorithms, the number of rounds ($R=\tilde O(p^{-1/\epsilon})$) --- even though  independent of the number of donor-patient pairs --- is exponential in $\frac 1\epsilon$.
On the other hand, our experiments show gains as high as $ 85\%$ for even small values of $R\leq 5$. This leaves open an interesting question regarding the dependence of $R$ on the values of $p$ and $\epsilon$. Can a similar $1-\epsilon$ guarantee for general graphs be obtained using a number of rounds with better dependence on $p$ and $\epsilon$? If not, are there structural properties of kidney exchange graphs that we can exploit to achieve theoretical results with better dependence of $R$ on $p$ and $\epsilon$?
}

Third, for the case of $k$-set packing, we achieve a $(\frac{2}{k}-\epsilon)$-approximation using $O(n)$ queries---in polynomial time. In kidney exchange, however, our scarcest resource is crossmatch tests; computational hardness is circumvented daily, through integer programming techniques~\citep{ABS07}. Is there an exponential-time adaptive algorithm for $k$-set packing that requires $O(1)$ rounds and $O(n)$ queries, and achieves a $(1-\epsilon)$-approximation to the omniscient optimum? A positive answer would require a new approach, because ours is inherently constrained to constant-size augmenting structures, which cannot yield an approximation ratio better than $\frac{2}{k}-\epsilon$, even if we could compute optimal solutions to $k$-set packing~\citep{Hurkens:1989}. 

\subsection{Discussion of policy implications of experimental results}
\label{sec:conclusions-policy}

Policy decisions in kidney exchange have been linked to economic and computational studies since before the first large-scale exchange was fielded in 2003--2004~\cite{RSU04,RSU05}.  A feedback loop exists between the reality of fielded exchanges---now not only in the United States but internationally as well---and the theoretical and empirical models that inform their operation, such that the latter has grown substantially closer to accurately representing the former in recent years.  That said, many gaps still exist between the mathematical models used in kidney exchange studies and the systems that actually provide matches on a day-to-day basis.

More accurate models are often not adopted quickly, if at all, by exchanges.  One reason for this is complexity---and not in the computational sense.  Humans---doctors, lawyers, and other policymakers who are not necessarily versed in optimization or theoretical economics and computer science---and the organizations they represent rightfully wish to understand the workings of an exchange's matching policy.  The techniques described in this paper are particularly exciting in that they are quite easy to explain in accessible language \red{and they involve only mild changes to the status quo}.  At a high level, we are proposing to test some small number of promising potential matches for some subset of patient-donor pairs in a pool.  As Section~\ref{sec:experiments-unos} shows, even a \emph{single} extra edge test per pair will produce substantially better results. 

\red{
Any new policy for kidney exchange has to address three practical restrictions in this space: (i) the monetary cost of crossmatches, (ii) the number of
crossmatches that can be performed per person, as there is an inherent limit on the amount of blood that can be drawn from a person, and (iii) the time it takes to find the matches, as time plays a major role in  the health of  patients and crossmatches become less accurate as time passes and the results get old.
For both our non-adaptive and adaptive algorithms, even a very small number of rounds ($R \leq 5$) results in a very large gain in the objective.  This is easily within the limits of considerations (i) and (ii) above.
Our non-adaptive algorithm performs all chosen crossmatches in parallel, so the time taken by this method is similar to the current approach. 
Our adaptive algorithm, in practice, can be implemented by a one-time retrieval of $R$ rounds worth of blood from each donor-patient pair, then sending that blood to a central wet laboratory. 
Most crossmatches are performed via an ``immediate spin'', where the bloods are mixed together and either coagulate (which is bad) or do not (which is good).  These tests are very fast, so a small number of rounds could be performed in a single day (assuming that tests in the same round are performed in parallel). 
Therefore, the timing constraint (iii) is not an issue for small $R$ (such as that used in our experiments) for the adaptive algorithm.
}

Clearly, more extensive studies would need to be undertaken before an exact policy recommendation could be made.  These studies could take factors like the monetary cost of an extra crossmatch test or variability in testing prowess across different medical laboratories into account explicitly during the optimization process.  Furthermore, various prioritization schemes could be implemented to help, for example, hard-to-match pairs find a feasible match by assigning them a higher edge query budget than easier-to-match pairs.  The positive theoretical results presented in this paper, combined with the promising experimental results on real data, provide a firm basis and motivation for this type of policy analysis.

\bibliographystyle{acmsmall}
\bibliography{abb,ultimate,matching-bib,matching-jpd}

\begin{thebibliography}{}

\bibitem[\protect\citeauthoryear{Abraham, Blum, and Sandholm}{Abraham
  et~al\mbox{.}}{2007}]{ABS07}
{\sc Abraham, D.~J.}, {\sc Blum, A.}, {\sc and} {\sc Sandholm, T.} 2007.
\newblock Clearing algorithms for barter exchange markets: Enabling nationwide
  kidney exchanges.
\newblock In {\em Proceedings of the 8th ACM Conference on Electronic Commerce
  (EC)}. 295--304.

\bibitem[\protect\citeauthoryear{Adamczyk}{Adamczyk}{2011}]{Adam11}
{\sc Adamczyk, M.} 2011.
\newblock Improved analysis of the greedy algorithm for stochastic matching.
\newblock {\em Information Processing Letters\/}~{\em 111,\/}~15, 731--737.

\bibitem[\protect\citeauthoryear{Akbarpour, Li, and Gharan}{Akbarpour
  et~al\mbox{.}}{2014}]{Akbarpour14:Dynamic}
{\sc Akbarpour, M.}, {\sc Li, S.}, {\sc and} {\sc Gharan, S.~O.} 2014.
\newblock Dynamic matching market design.
\newblock In {\em Proceedings of the ACM Conference on Economics and
  Computation (EC)}. 355.

\bibitem[\protect\citeauthoryear{Anderson, Ashlagi, Gamarnik, and
  Kanoria}{Anderson et~al\mbox{.}}{2015a}]{Anderson15:Dynamic}
{\sc Anderson, R.}, {\sc Ashlagi, I.}, {\sc Gamarnik, D.}, {\sc and} {\sc
  Kanoria, Y.} 2015a.
\newblock A dynamic model of barter exchange.
\newblock In {\em Proceedings of the 26th Annual ACM-SIAM Symposium on Discrete
  Algorithms (SODA)}. 1925--1933.

\bibitem[\protect\citeauthoryear{Anderson, Ashlagi, Gamarnik, and
  Roth}{Anderson et~al\mbox{.}}{2015b}]{Anderson15:Finding}
{\sc Anderson, R.}, {\sc Ashlagi, I.}, {\sc Gamarnik, D.}, {\sc and} {\sc Roth,
  A.~E.} 2015b.
\newblock Finding long chains in kidney exchange using the traveling salesman
  problem.
\newblock {\em Proceedings of the National Academy of Sciences\/}~{\em
  112,\/}~3, 663--668.

\bibitem[\protect\citeauthoryear{Asadpour, Nazerzadeh, and Saberi}{Asadpour
  et~al\mbox{.}}{2008}]{Asadpur:2008}
{\sc Asadpour, A.}, {\sc Nazerzadeh, H.}, {\sc and} {\sc Saberi, A.} 2008.
\newblock Stochastic submodular maximization.
\newblock In {\em Proceedings of the 4th International Workshop on Internet and
  Network Economics (WINE)}. 477--489.

\bibitem[\protect\citeauthoryear{Ashlagi, Gamarnik, Rees, and Roth}{Ashlagi
  et~al\mbox{.}}{2011}]{AGRR11}
{\sc Ashlagi, I.}, {\sc Gamarnik, D.}, {\sc Rees, M.~A.}, {\sc and} {\sc Roth,
  A.~E.} 2011.
\newblock The need for (long) chains in kidney exchange.
\newblock Manuscript.

\bibitem[\protect\citeauthoryear{Ashlagi, Jaillet, and Manshadi}{Ashlagi
  et~al\mbox{.}}{2013}]{AJM13}
{\sc Ashlagi, I.}, {\sc Jaillet, P.}, {\sc and} {\sc Manshadi, V.~H.} 2013.
\newblock Kidney exchange in dynamic sparse heterogenous pools.
\newblock In {\em Proceedings of the 14th ACM Conference on Electronic Commerce
  (EC)}. 25--26.

\bibitem[\protect\citeauthoryear{Ashlagi and Roth}{Ashlagi and
  Roth}{2014}]{AR13}
{\sc Ashlagi, I.} {\sc and} {\sc Roth, A.} 2014.
\newblock Free riding and participation in large scale, multi-hospital kidney
  exchange.
\newblock {\em Theoretical Economics\/}.
\newblock Forthcoming; preliminary version in EC'11.

\bibitem[\protect\citeauthoryear{Awasthi and Sandholm}{Awasthi and
  Sandholm}{2009}]{AS09}
{\sc Awasthi, P.} {\sc and} {\sc Sandholm, T.} 2009.
\newblock Online stochastic optimization in the large: Application to kidney
  exchange.
\newblock In {\em Proceedings of the 21st International Joint Conference on
  Artificial Intelligence (IJCAI)}. 405--411.

\bibitem[\protect\citeauthoryear{Bansal, Gupta, Li, Mestre, Nagarajan, and
  Rudra}{Bansal et~al\mbox{.}}{2012}]{BGLM+12}
{\sc Bansal, N.}, {\sc Gupta, A.}, {\sc Li, J.}, {\sc Mestre, J.}, {\sc
  Nagarajan, V.}, {\sc and} {\sc Rudra, A.} 2012.
\newblock When {LP} is the cure for your matching woes: Improved bounds for
  stochastic matchings.
\newblock {\em Algorithmica\/}~{\em 63,\/}~4, 733--762.

\bibitem[\protect\citeauthoryear{Blum, Gupta, Procaccia, and Sharma}{Blum
  et~al\mbox{.}}{2013}]{BGPS13}
{\sc Blum, A.}, {\sc Gupta, A.}, {\sc Procaccia, A.~D.}, {\sc and} {\sc Sharma,
  A.} 2013.
\newblock Harnessing the power of two crossmatches.
\newblock In {\em Proceedings of the 14th ACM Conference on Electronic Commerce
  (EC)}. 123--140.

\bibitem[\protect\citeauthoryear{Bollob\'as}{Bollob\'as}{2001}]{Bol01}
{\sc Bollob\'as, B.} 2001.
\newblock {\em Random Graphs\/} 2nd Ed.
\newblock Cambridge University Press.

\bibitem[\protect\citeauthoryear{Chen, Immorlica, Karlin, Mahdian, and
  Rudra}{Chen et~al\mbox{.}}{2009}]{CIKM+09}
{\sc Chen, N.}, {\sc Immorlica, N.}, {\sc Karlin, A.~R.}, {\sc Mahdian, M.},
  {\sc and} {\sc Rudra, A.} 2009.
\newblock Approximating matches made in heaven.
\newblock In {\em Proceedings of the 36th International Colloquium on Automata,
  Languages and Programming (ICALP)}. 266--278.

\bibitem[\protect\citeauthoryear{Constantino, Klimentova, Viana, and
  Rais}{Constantino et~al\mbox{.}}{2013}]{Constantino13:New}
{\sc Constantino, M.}, {\sc Klimentova, X.}, {\sc Viana, A.}, {\sc and} {\sc
  Rais, A.} 2013.
\newblock New insights on integer-programming models for the kidney exchange
  problem.
\newblock {\em European Journal of Operational Research\/}~{\em 231,\/}~1,
  57--68.

\bibitem[\protect\citeauthoryear{Costello, Tetali, and Tripathi}{Costello
  et~al\mbox{.}}{2012}]{CTT12}
{\sc Costello, K.~P.}, {\sc Tetali, P.}, {\sc and} {\sc Tripathi, P.} 2012.
\newblock Matching with commitment.
\newblock In {\em Proceedings of the 39th International Colloquium on Automata,
  Languages and Programming (ICALP)}. 822--833.

\bibitem[\protect\citeauthoryear{Dean, Goemans, and Vondrak}{Dean
  et~al\mbox{.}}{2004}]{Dean:2004}
{\sc Dean, B.~C.}, {\sc Goemans, M.~X.}, {\sc and} {\sc Vondrak, J.} 2004.
\newblock Approximating the stochastic knapsack problem: The benefit of
  adaptivity.
\newblock In {\em Proceedings of the 45th Symposium on Foundations of Computer
  Science (FOCS)}. 208--217.

\bibitem[\protect\citeauthoryear{Dickerson, Procaccia, and Sandholm}{Dickerson
  et~al\mbox{.}}{2012a}]{DPS12b}
{\sc Dickerson, J.~P.}, {\sc Procaccia, A.~D.}, {\sc and} {\sc Sandholm, T.}
  2012a.
\newblock Dynamic matching via weighted myopia with application to kidney
  exchange.
\newblock In {\em Proceedings of the 26th AAAI Conference on Artificial
  Intelligence (AAAI)}. 1340--1346.

\bibitem[\protect\citeauthoryear{Dickerson, Procaccia, and Sandholm}{Dickerson
  et~al\mbox{.}}{2012b}]{DPS12}
{\sc Dickerson, J.~P.}, {\sc Procaccia, A.~D.}, {\sc and} {\sc Sandholm, T.}
  2012b.
\newblock Optimizing kidney exchange with transplant chains: Theory and
  reality.
\newblock In {\em Proceedings of the 11th International Conference on
  Autonomous Agents and Multi-Agent Systems (AAMAS)}. 711--718.

\bibitem[\protect\citeauthoryear{Dickerson, Procaccia, and Sandholm}{Dickerson
  et~al\mbox{.}}{2013}]{DPS13}
{\sc Dickerson, J.~P.}, {\sc Procaccia, A.~D.}, {\sc and} {\sc Sandholm, T.}
  2013.
\newblock Failure-aware kidney exchange.
\newblock In {\em Proceedings of the 14th ACM Conference on Electronic Commerce
  (EC)}. 323--340.

\bibitem[\protect\citeauthoryear{Dickerson and Sandholm}{Dickerson and
  Sandholm}{2015}]{Dickerson15:FutureMatch}
{\sc Dickerson, J.~P.} {\sc and} {\sc Sandholm, T.} 2015.
\newblock {F}uture{M}atch: Combining human value judgments and machine learning
  to match in dynamic environments.
\newblock In {\em Proceedings of the 29th AAAI Conference on Artificial
  Intelligence (AAAI)}.

\bibitem[\protect\citeauthoryear{F{\"u}rer and Yu}{F{\"u}rer and
  Yu}{2013}]{Furer:2013}
{\sc F{\"u}rer, M.} {\sc and} {\sc Yu, H.} 2013.
\newblock Approximate the $k$-set packing problem by local improvements.
\newblock {\em CoRR\/}~{\em abs/1307.2262}.

\bibitem[\protect\citeauthoryear{Glorie, van~de Klundert, and Wagelmans}{Glorie
  et~al\mbox{.}}{2014}]{Glorie14:Kidney}
{\sc Glorie, K.~M.}, {\sc van~de Klundert, J.~J.}, {\sc and} {\sc Wagelmans, A.
  P.~M.} 2014.
\newblock Kidney exchange with long chains: An efficient pricing algorithm for
  clearing barter exchanges with branch-and-price.
\newblock {\em Manufacturing \& Service Operations Management\/}~{\em 16,\/}~4,
  498--512.

\bibitem[\protect\citeauthoryear{Goel and Tripathi}{Goel and
  Tripathi}{2012}]{GT12}
{\sc Goel, G.} {\sc and} {\sc Tripathi, P.} 2012.
\newblock Matching with our eyes closed.
\newblock In {\em Proceedings of the 53rd Symposium on Foundations of Computer
  Science (FOCS)}. 718--727.

\bibitem[\protect\citeauthoryear{Gupta, Krishnaswamy, Nagarajan, and
  Ravi}{Gupta et~al\mbox{.}}{2012}]{Gupta:2012}
{\sc Gupta, A.}, {\sc Krishnaswamy, R.}, {\sc Nagarajan, V.}, {\sc and} {\sc
  Ravi, R.} 2012.
\newblock Approximation algorithms for stochastic orienteering.
\newblock In {\em Proceedings of the 23rd Annual ACM-SIAM Symposium on Discrete
  Algorithms (SODA)}. 1522--1538.

\bibitem[\protect\citeauthoryear{Gupta and Nagarajan}{Gupta and
  Nagarajan}{2013}]{GN13}
{\sc Gupta, A.} {\sc and} {\sc Nagarajan, V.} 2013.
\newblock A stochastic probing problem with applications.
\newblock In {\em Proceedings of the 16th Conference on Integer Programming and
  Combinatorial Optimization (IPCO)}. 205--216.

\bibitem[\protect\citeauthoryear{Hurkens and Schrijver}{Hurkens and
  Schrijver}{1989}]{Hurkens:1989}
{\sc Hurkens, C. A.~J.} {\sc and} {\sc Schrijver, A.} 1989.
\newblock On the size of systems of sets every $t$ of which have an {SDR}, with
  an application to the worst-case ratio of heuristics for packing problems.
\newblock {\em SIAM Journal on Discrete Mathematics\/}~{\em 2,\/}~1, 68--72.

\bibitem[\protect\citeauthoryear{Leishman, Formica, Andreoni, Friedewald,
  Sleeman, Monstello, Stewart, and Sandholm}{Leishman
  et~al\mbox{.}}{2013}]{Leishman13:Organ}
{\sc Leishman, R.}, {\sc Formica, R.}, {\sc Andreoni, K.}, {\sc Friedewald,
  J.}, {\sc Sleeman, E.}, {\sc Monstello, C.}, {\sc Stewart, D.}, {\sc and}
  {\sc Sandholm, T.} 2013.
\newblock The {Organ Procurement and Transplantation Network (OPTN) Kidney
  Paired Donation Pilot Program (KPDPP)}: Review of current results.
\newblock In {\em American Transplant Congress (ATC)}.
\newblock Talk abstract.

\bibitem[\protect\citeauthoryear{Manlove and {O'M}alley}{Manlove and
  {O'M}alley}{2015}]{Manlove15:Paired}
{\sc Manlove, D.} {\sc and} {\sc {O'M}alley, G.} 2015.
\newblock Paired and altruistic kidney donation in the {UK}: Algorithms and
  experimentation.
\newblock {\em {ACM} Journal of Experimental Algorithmics\/}.
\newblock To appear.

\bibitem[\protect\citeauthoryear{Roth, S\"onmez, and \"Unver}{Roth
  et~al\mbox{.}}{2004}]{RSU04}
{\sc Roth, A.~E.}, {\sc S\"onmez, T.}, {\sc and} {\sc \"Unver, M.~U.} 2004.
\newblock Kidney exchange.
\newblock {\em Quarterly Journal of Economics\/}~{\em 119,\/}~2, 457--488.

\bibitem[\protect\citeauthoryear{Roth, S\"onmez, and \"Unver}{Roth
  et~al\mbox{.}}{2005}]{RSU05}
{\sc Roth, A.~E.}, {\sc S\"onmez, T.}, {\sc and} {\sc \"Unver, M.~U.} 2005.
\newblock Pairwise kidney exchange.
\newblock {\em Journal of Economic Theory\/}~{\em 125}, 151--188.

\bibitem[\protect\citeauthoryear{Roth, S\"onmez, and \"Unver}{Roth
  et~al\mbox{.}}{2007}]{RSU07}
{\sc Roth, A.~E.}, {\sc S\"onmez, T.}, {\sc and} {\sc \"Unver, M.~U.} 2007.
\newblock Efficient kidney exchange: Coincidence of wants in markets with
  compatibility-based preferences.
\newblock {\em American Economic Review\/}~{\em 97,\/}~3, 828--851.

\bibitem[\protect\citeauthoryear{Saidman, Roth, S\"onmez, \"Unver, and
  Delmonico}{Saidman et~al\mbox{.}}{2006}]{SRSU+06}
{\sc Saidman, S.~L.}, {\sc Roth, A.~E.}, {\sc S\"onmez, T.}, {\sc \"Unver,
  M.~U.}, {\sc and} {\sc Delmonico, F.~L.} 2006.
\newblock Increasing the opportunity of live kidney donation by matching for
  two and three way exchanges.
\newblock {\em Transplantation\/}~{\em 81}, 773--782.

\bibitem[\protect\citeauthoryear{\"Unver}{\"Unver}{2010}]{Unv10}
{\sc \"Unver, M.~U.} 2010.
\newblock Dynamic kidney exchange.
\newblock {\em Review of Economic Studies\/}~{\em 77,\/}~1, 372--414.

\end{thebibliography}

\appendix
\medskip
\section{Additional Na\"ive  Algorithm and its Performance}
\label{app:examples}

Consider a non-adaptive algorithm which queries
$o(n)$ random neighbors of each vertex.  
The following example shows that this algorithm performs poorly.

\tikzset{
ell/.style={draw,ellipse,minimum height=2em,minimum width=5em,align=center},
line/.style={-,line width=0.6pt},
dot/.style = {draw,fill,circle,inner sep=0pt,outer sep=0pt,minimum size=3pt},
}
\begin{figure}[H]
\centering
\begin{tikzpicture}
\node[dot](B1) {};
\node[dot, below=0.7cm of  B1](B2) {};
\node[dot, below=0.7cm of  B2](B4) {};
\node[dot, below= 1.4cm of B2](B3) {};

\node[dot, left=1.5cm of B2](A1) {};
\node[dot, left=1.5cm of B4](A2) {};

\node[dot, right=1.5cm of B1] (C1) {};
\node[dot, below=0.7cm of  C1](C2) {};
\node[dot, below=0.7cm of  C2](C4) {};
\node[dot, below= 1.4cm of C2](C3) {};

\node[dot, right=1.5cm of C2](D1) {};
\node[dot, right=1.5cm of C4](D2) {};

\node[ell, dashed, rotate  = 90] at ($(A1)!0.5!(A2)$) {};
\node[ell, dashed, rotate  = 90] at ($(D1)!0.5!(D2)$) {};
\node[ell, dashed, rotate  = 90, minimum width=8em] at  ($(B1)!0.5!(B3)$){};
\node[ell, dashed, rotate  = 90, minimum width=8em] at  ($(C1)!0.5!(C3)$){};

\draw[line] (B1) -- (C1);
\draw[line] (B2) -- (C2);
\draw[line] (B3) -- (C3);
\draw[line] (B4) -- (C4);

\draw[line] (A1) -- (B1);
\draw[line] (A1) -- (B2);
\draw[line] (A1) -- (B3);
\draw[line] (A1) -- (B4);
\draw[line] (A2) -- (B1);
\draw[line] (A2) -- (B2);
\draw[line] (A2) -- (B3);
\draw[line] (A2) -- (B4);

\draw[line] (D1) -- (C1);
\draw[line] (D1) -- (C2);
\draw[line] (D1) -- (C3);
\draw[line] (D1) -- (C4);
\draw[line] (D2) -- (C1);
\draw[line] (D2) -- (C2);
\draw[line] (D2) -- (C3);
\draw[line] (D2) -- (C4);

\node[below= 0.5cm of B3] (Bset) {$B$};
\node[below= 0.5cm of C3] () {$C$};
\node[below= 1.2cm of D2] (){$D$};
\node[below = 1.2cm of A2]() {$A$};

\end{tikzpicture}
\caption{Illustration of the construction in Example~\ref{ex:random}, for $t=4$ and $\beta=1/2$.}
\label{fig:counter}
\end{figure}
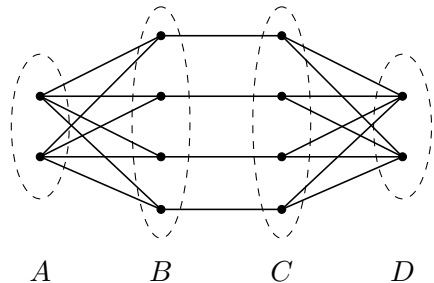

\begin{example}
\label{ex:random}
Consider the graph $G=(V,E)$ whose vertices are partitioned into sets $A$, $B$, $C$, and $D$,  such that $|A| = |D| = t^\beta$ and $|B|= |C| = t$, for some $1 >\beta >0$. Note that in this graph $n = \Theta(t)$.
Let $E$  consist of one perfect matching between vertices of $B$ and $C$, and two complete bipartite graphs, one between $A$ and $B$, and another between $C$ and $D$. See Figure~\ref{fig:counter} for an illustration. Let $p=0.5$ be the existence probability of any edge.

The omniscient optimal solution can use any edge, and, in particular, it can use the edges between $B$ and $C$. Since, these edges form a matching of size $t$ and $p=0.5$, they alone provide a matching of expected size $t/2$.
Hence, $\expmat{E}\geq t/2$.

Now, for any $\alpha<\beta$, consider the  algorithm  that queries $t^\alpha$ random neighbors for each vertex.
For every vertex in $B$, the probability that its edge to $C$ is chosen is at most $\frac{t^\alpha}{t^\beta +1}$ (similarly for the edges from $C$ to $B$).
Therefore, the expected number of edges chosen between $B$ and $C$ is at most $\frac{2t^{1+\alpha}}{t^\beta +1}$, and the expected number of existing edges between $B$ and $C$, after the coin tosses, is at most $\frac{t^{1+\alpha}}{t^\beta +1}$.
$A$ and $D$ each have $t^\beta$ vertices, so they contribute at most $2t^\beta $ edges to any matching.
Therefore, the expected size of the overall matching is no more than $t^{1+\alpha - \beta} + 2t^\beta$. Using $n=\Theta(t)$, we conclude that the approximation ratio of the na\"ive algorithm approaches $0$, as $n\rightarrow \infty$. For $\alpha  = 0.5$ and $\beta = 0.75$, the approximation ratio of the na\"ive algorithm is $O(1 / n^{0.25})$, at best. 
\end{example}

\section{Missing Proofs from Section~\ref{SEC:NON-ADAPTIVE-MATCHING} }  
\label{app:non-adap}

\subsection{Analysis of the Non-Adaptive Algorithm}

\begin{lemma}
\label{clm:exp-match-sum-parts}
Let $E_{1}$ be an arbitrary subset of edges of $E$, and let $E_{2} = E \setminus E_{1}$. Then $\expmat{E} \le \expmat{E_{1}} + \expmat{E_{2}}$.
\end{lemma}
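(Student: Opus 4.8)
The plan is to reduce this stochastic statement to an elementary deterministic fact about maximum matchings and then lift it by coupling the random edge subsets. First I would record the deterministic observation: for any fixed graph whose edge set is partitioned into $F_1$ and $F_2$, one has $\card{\msize(F_1 \cup F_2)} \le \card{\msize(F_1)} + \card{\msize(F_2)}$. To see this, fix a maximum matching $M^\star$ of $F_1 \cup F_2$; since $F_1$ and $F_2$ are disjoint, $M^\star$ is the disjoint union of $M^\star \cap F_1$ and $M^\star \cap F_2$. Each of these is a matching using edges from only one part, so $\card{M^\star \cap F_1} \le \card{\msize(F_1)}$ and $\card{M^\star \cap F_2} \le \card{\msize(F_2)}$, and adding the two bounds gives the claim.

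Next I would put the three quantities $\expmat{E}$, $\expmat{E_1}$, and $\expmat{E_2}$ on a common probability space. Flip one independent $p$-biased coin for each edge of $E$, let $E_p$ be the set of heads, and set $(E_1)_p := E_p \cap E_1$ and $(E_2)_p := E_p \cap E_2$. Since $E = E_1 \cup E_2$ is a partition, $E_p = (E_1)_p \cup (E_2)_p$ with the two parts disjoint, and marginally $(E_1)_p$ and $(E_2)_p$ have exactly the distributions used in the definitions of $\expmat{E_1}$ and $\expmat{E_2}$.

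Applying the deterministic inequality with $F_i = (E_i)_p$ then gives, for every outcome of the coins, $\card{\msize(E_p)} \le \card{\msize((E_1)_p)} + \card{\msize((E_2)_p)}$. Taking expectations and using linearity yields
\[
\expmat{E} = \expc\!\left[\card{\msize(E_p)}\right] \le \expc\!\left[\card{\msize((E_1)_p)}\right] + \expc\!\left[\card{\msize((E_2)_p)}\right] = \expmat{E_1} + \expmat{E_2},
\]
which is the desired bound.

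There is no genuine obstacle here; the only point deserving care is the coupling step — all three expectations must be evaluated on a single probability space so that the deterministic decomposition can be applied realization by realization. Phrasing the three terms as three independent draws would make the argument look more mysterious than it is, and would obscure why the pointwise inequality is available.
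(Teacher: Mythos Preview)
Your proof is correct and follows essentially the same approach as the paper: both split a maximum matching of a realized edge set into its $E_1$- and $E_2$-parts to get the pointwise inequality, then pass to expectations via the natural coupling (the paper phrases this as summing over outcomes $E'$ with $E'_i = E' \cap E_i$, which is exactly your coupling). Your write-up is a bit more explicit about the coupling step, but the argument is the same.
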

\begin{proof}
Let $E'$ be an arbitrary subset of edges of $E$, and let $E'_{1} = E_{1} \cap E'$ and $E'_{2} = E_{2} \cap E'$. We claim that $\card{M(E')} \le \card{M(E'_{1})} + \card{M(E'_{2})}$. This is because if $T$ is the set of edges in a maximum matching in graph $(V,E')$, then clearly $T\cap E'_{1}$ and $T \cap E'_{2}$ are valid matchings in $E'_{1}$ and $E'_{2}$ respectively, and thereby it follows that $\card{M(E'_{1})} \ge \card{T\cap E'_{1}}$ and $\card{M(E'_{2})} \ge \card{T\cap E'_{2}}$, and hence $\card{M(E')} \le \card{M(E'_{1})} + \card{M(E'_{2})}$. Expectation is a convex combination of the values of the outcomes. For every subset $E'$  of edges in $E$, multiplying the above inequality by the probability that the outcome of the coin tosses on the edges of $E$ is $E'$, and then summing the various inequalities, we get $\expmat{E} \le \expmat{E_{1}} + \expmat{E_{2}}$.
\end{proof}

In order to lower bound $\expmat{W_{R}}$, we first show that for any round $r$, either our current collection of edges has an expected matching size $\expmat{W_{r-1}}$ that compares well with $\expmat{E}$, or in round $r$, we have a significant increase in $\expmat{W_{r}}$ over $\expmat{W_{r-1}}$.

\begin{lemma}
\label{lem:non-ada-per-step}
At any iteration $r\in [R]$ of Algorithm~\ref{alg:non-adaptive-matching} and odd $L$, if $\expmat{W_{r-1}} \le \expmat{E}/2$, then $$\expmat{W_{r}} \ge \frac{\alpha}{2}~\expmat{E} + (1- \gamma)\expmat{W_{r-1}},$$ where $\gamma  = p^{(L+1)/2}(1 + \frac{L+ 1}{2})$ and $\alpha = p^{(L+1)/2}$.
\end{lemma}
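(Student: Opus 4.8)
\textbf{Proof plan for Lemma~\ref{lem:non-ada-per-step}.}

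The plan is to mimic the one-step analysis of the adaptive algorithm (Lemma~\ref{lem:increase-each-iter-matching}), but working with \emph{expected} matching sizes over the random edge realizations rather than with the matchings on known-to-exist edges. The key point is that at round $r$ the algorithm adds a maximum matching $O_r$ of the residual graph $(V, E\setminus\bigcup_{i<r}W_i)$ to the query set. First I would use Lemma~\ref{clm:exp-match-sum-parts} with $E_1 = \bigcup_{i<r}W_i$ (the already-chosen edges, i.e.\ $W_{r-1}$ in the algorithm's indexing) and $E_2 = E\setminus W_{r-1}$ to get $\expmat{E} \le \expmat{W_{r-1}} + \expmat{E\setminus W_{r-1}}$. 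Combined with the hypothesis $\expmat{W_{r-1}}\le \expmat{E}/2$, this gives $\expmat{E\setminus W_{r-1}} \ge \expmat{E}/2$, and since the maximum matching $O_r$ of the residual graph is at least as large as the expected maximum matching of a random subgraph of that residual graph, $\card{O_r}\ge \expmat{E\setminus W_{r-1}} \ge \expmat{E}/2$.

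Next I would fix any realization $W'_{r-1}\subseteq W_{r-1}$ of the edges among the previously-chosen sets that are found to exist, let $M$ be a maximum matching on $W'_{r-1}$, and consider the symmetric difference $O_r \otimes M$. Since $O_r$ is a matching of size $\card{O_r} \ge \card{M}$ (because $O_r$ is a maximum matching on a superset of the edges available, or at worst we only use the inequality $\card{O_r}\ge\card{M}$ directly when it holds — this needs a small argument, see below), Lemma~\ref{lem:struct-res-matching} yields at least $\card{O_r} - (1+\tfrac{2}{L+1})\card{M}$ augmenting paths of $M$ of length at most $L$, each internally disjoint. Each such path has at most $(L+1)/2$ edges from $O_r$, all of which are among $E\setminus W_{r-1}$ and hence get queried in this round; each exists independently with probability at least $p^{(L+1)/2}$. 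So conditioned on the realization $W'_{r-1}$, the matching on $W_R$ restricted to $W_{r-1}\cup O_r$ has expected size at least $\card{M} + p^{(L+1)/2}\big(\card{O_r} - (1+\tfrac{2}{L+1})\card{M}\big) = (1-\gamma)\card{M} + \alpha\card{O_r}$, with $\gamma = p^{(L+1)/2}(1+\tfrac{2}{L+1})$ — note the lemma as stated writes $\gamma$ with $\tfrac{L+1}{2}$, so I will need to reconcile the exponent/coefficient bookkeeping with the statement, presumably by choosing $L$ so that $\tfrac{2}{L+1}$ and $\tfrac{L+1}{2}$ play the intended roles, or by tracking that the number of $O_r$-edges per path is bounded by $(L+1)/2$ which produces the $\tfrac{L+1}{2}$ factor in the success-probability union bound when one is less careful. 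Taking expectation over the realization $W'_{r-1}$, using that $\expc[\card{M}] = \expmat{W_{r-1}}$ and that $\card{O_r}$ is a fixed number $\ge \expmat{E}/2$, gives $\expmat{W_R\text{ restricted to }W_{r-1}\cup O_r} \ge (1-\gamma)\expmat{W_{r-1}} + \tfrac{\alpha}{2}\expmat{E}$, and since $\expmat{W_r} = \expmat{W_{r-1}\cup O_r}$ dominates this restricted quantity, the claimed bound follows.

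The main obstacle I anticipate is the step asserting $\card{O_r}\ge\card{M}$ so that Lemma~\ref{lem:struct-res-matching} applies: $M$ is a matching on a \emph{realized} subset of previously-queried edges while $O_r$ is a maximum matching on the residual graph assuming \emph{all} its edges exist, and these two edge sets are disjoint, so there is no a priori domination. The right fix is not to compare $O_r$ with $M$ directly but to observe that if $\card{M}\ge \card{O_r}$ already then $\expmat{W_{r-1}}\ge \expmat{E}/2$ in expectation would contradict nothing by itself — rather, one should handle the realizations where $\card{M} > (1-\tfrac{2}{L+1})^{-1}$-ish$\cdot$ nothing. More cleanly: apply Lemma~\ref{lem:struct-res-matching} only when $\card{O_r} > \card{M}$, obtaining the stated count of augmenting paths; when $\card{O_r}\le\card{M}$ the quantity $\card{O_r} - (1+\tfrac{2}{L+1})\card{M}$ is negative and the inequality $\expc[\card{M_r\text{-restricted}}]\ge (1-\gamma)\card{M}+\alpha\card{O_r}$ holds trivially since the left side is at least $\card{M}\ge (1-\gamma)\card{M}+\gamma\card{M}\ge(1-\gamma)\card{M}+\alpha\card{O_r}$ using $\gamma\ge\alpha$ and $\card{M}\ge\card{O_r}$. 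Thus the per-realization inequality holds unconditionally, and averaging finishes the proof. I would also double-check the edge case $r=1$ where $W_0=\emptyset$, $\expmat{W_0}=0$, so the bound reads $\expmat{W_1}\ge \tfrac{\alpha}{2}\expmat{E}$, which is exactly Lemma~\ref{lem:struct-res-matching} applied to the empty matching.
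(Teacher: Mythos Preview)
Your proposal is correct and follows essentially the same approach as the paper: use Lemma~\ref{clm:exp-match-sum-parts} plus the hypothesis to get $\card{O_r}\ge \expmat{E}/2$, condition on a realization $W'_{r-1}$ of the previously-chosen edges, apply Lemma~\ref{lem:struct-res-matching} to $O_r\otimes M(W'_{r-1})$ to count short augmenting paths, lower-bound the conditional expected matching size of $W'_{r-1}\cup O_r$, and then average over realizations. Two remarks: first, you are actually more careful than the paper in handling the case $\card{O_r}\le\card{M(W'_{r-1})}$ (where the hypothesis of Lemma~\ref{lem:struct-res-matching} fails); the paper simply applies the lemma and implicitly relies on the resulting bound being vacuous when the augmenting-path count is nonpositive, whereas you make this explicit via $\gamma\ge\alpha$. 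Second, you correctly spotted the typo in the lemma statement---the intended $\gamma$ is $p^{(L+1)/2}\big(1+\tfrac{2}{L+1}\big)$, matching Lemma~\ref{lem:increase-each-iter-matching} and the paper's own proof, not $p^{(L+1)/2}\big(1+\tfrac{L+1}{2}\big)$.
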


\begin{proof}
Define $U = E \setminus  W_{r-1}$. Assume that $\expmat{W_{r-1}}\leq \expmat{E}/2$. By Lemma~\ref{clm:exp-match-sum-parts}, we know that $\expmat{U} \ge \expmat{E} - \expmat{W_{r-1}}$. Hence, $\card{O_{r}} = \card{M(U)} \ge \expmat{U} \ge \expmat{E} - \expmat{W_{r-1}}\ge \expmat{E}/2$. 

In a thought experiment, say at the beginning of round $r$, we query the set $W_{r-1}$ and let $W'_{r-1}$ be the set of edges that are found to exist. By Lemma~\ref{lem:struct-res-matching}, there are at least $\card{O_{r}} - (1+\frac{2}{L+1})\card{M(W'_{r-1})}$ augmenting paths of length at most $L$ in $O_{r} \Delta M(W'_{r-1})$ that augment $M(W'_{r-1})$. Each of these paths succeeds with probability at least $p^{(L+1)/2}$. We have, 
\begin{align*}
\condexpmat{O_{r} \cup W'_{r-1}}{W'_{r-1}} - \card{M(W'_{r-1})}& \ge p^{(L+1)/2}~\left(\card{O_{r}} - (1+ \frac{2}{L+1})\card{M(W'_{r-1})}\right)\\
&\ge p^{(L+1)/2}~\left(\frac{1}{2}~\expmat{E}- (1+ \frac{2}{L+1})\card{M(W'_{r-1})}\right)~,
\end{align*}
where the expectation on the left hand side is taken only over the outcome of the edges in $O_{r}$.  Therefore, we have $\condexpmat{O_{r} \cup W'_{r-1}}{W'_{r-1}} \ge \frac{\alpha}{2}~\expmat{E} + (1- \gamma)\card{M(W'_{r-1})}$, where $\alpha = p^{(L+1)/2}$ and $\gamma = p^{(L+1)/2}~(1+ \frac{2}{L+1})$.
Taking expectation over the coin tosses on $W_{r-1}$ that create outcome $W'_{r-1}$, we have our result, i.e.,
\begin{align*}
\expmat{W_{r}} \geq \expc_{W_{r-1}}[\condexpmat{O_{r} \cup W'_{r-1}}{W'_{r-1}}] \geq \expmat{O_{r} \cup W_{r-1}} \ge \frac{\alpha}{2}~\expmat{E} + (1- \gamma)\expmat{W_{r-1}}.
\end{align*}
\end{proof}

\begin{proof}[\textsc{of Theorem~\ref{thm:non-adaptive-alg}}]
For ease of exposition, assume $L=\frac{4}{\epsilon}-1$ is an odd integer. Then, either $\expmat{W_{R}} \geq \expmat{E}/2$ in which case we are done. Or otherwise, by repeatedly applying Lemma~\ref{lem:non-ada-per-step} for $R$ steps, we have
\begin{align*}
\expmat{W_{R}} ~\ge~ \frac{\alpha}{2}(1+ (1-\gamma) + (1-\gamma)^{2} + \cdots + (1-\gamma)^{R-1})\expmat{E} ~\ge~ \frac{\alpha}{2}\frac{(1 - (1-\gamma)^{R})}{\gamma}\expmat{E}.
\end{align*}
Now, $\frac{\alpha}{\gamma}(1 - (1-\gamma)^{R}) \ge 1 - \frac{2}{L+1} - e^{-\gamma R} \ge 1 - \epsilon$ for $R = \frac{\log (2/ \epsilon)}{p^{2/\epsilon}}$.
Hence, we have our $0.5 (1-\epsilon)$ approximation.
\end{proof}

\subsection{Example Graph for the Non-Adaptive Algorithm}
\begin{lemma}
\label{clm:perfect-bipartite}
Let $G=(U \cup V, U \times V)$ be a complete bipartite graph between $U$ and $V$ with $\card{U} = \card{V} = n$. For any constant probability $p$, $\expmat{E} \ge n - o(n)$.
\end{lemma}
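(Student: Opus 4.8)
The plan is to avoid reasoning about the maximum matching directly and instead analyze a completely naive greedy matching on the realized graph, showing that the expected number of \emph{unmatched} left-vertices it leaves is only a constant. Since the maximum matching is at least as large as the greedy one and is trivially at most $n$, this immediately gives $\expmat{E} \ge n - O(1) = n - o(n)$; in fact it proves the stronger statement that the expected matching deficiency is $O(1)$.

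Concretely, I would fix an arbitrary order $u_1,\dots,u_n$ of $U$ and process the $u_i$ one at a time while maintaining a set $R\subseteq V$ of ``available'' (still unmatched) right-vertices, with $R=V$ initially. When $u_i$ is processed, reveal the present/absent status of the edges of $E_p$ joining $u_i$ to the current set $R$; if at least one such edge is present, match $u_i$ to one of those vertices and delete it from $R$, otherwise leave $u_i$ unmatched and move on. This produces a legal matching in $E_p$, so $\card{\mat{E_p}}$ is at least its size, which equals $n$ minus the number of unmatched $u_i$'s.

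The crucial point — and the only delicate step — is that at the moment $u_i$ is processed, \emph{no} edge incident to $u_i$ has been inspected yet, since every earlier step only touched edges incident to $u_1,\dots,u_{i-1}$. Hence, conditioned on the entire history (in particular on $A_i$, the size of $R$ at that moment), the $A_i$ edges from $u_i$ into $R$ are independent Bernoulli$(p)$, so $u_i$ is left unmatched with conditional probability exactly $(1-p)^{A_i}$. Moreover at most $i-1$ vertices have ever been removed from $R$, so $A_i \ge n-i+1$ with certainty, and therefore $\Pr[u_i \text{ unmatched}] \le (1-p)^{\,n-i+1}$.

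Finally, by linearity of expectation the expected number of unmatched left-vertices is at most $\sum_{i=1}^{n}(1-p)^{\,n-i+1} = \sum_{j=1}^{n}(1-p)^{j} \le \frac{1-p}{p}$, a constant depending only on $p$. Consequently $\expmat{E} = \expc[\card{\mat{E_p}}] \ge n - \frac{1-p}{p} = n - o(n)$, which is what we want (and which, since $\frac{1-p}{p}$ does not depend on $n$, is well inside the $o(n)$ slack the lemma allows). I expect the main obstacle to be purely expository: phrasing the ``fresh randomness'' argument carefully, e.g.\ by describing the procedure as revealing edge states in a fixed order so that each conditional probability $\Pr[u_i \text{ unmatched}\mid \text{history through step } i-1]$ is well defined; everything after that is a one-line geometric-series estimate. (A Hall/König-deficiency union bound over subsets $S\subseteq U$ would also work but is strictly more laborious, so I would not take that route.)
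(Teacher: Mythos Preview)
Your argument is correct and actually yields a sharper conclusion than the paper's own proof. The paper proceeds via Hall's theorem: it shows, by a union bound over all subsets $S\subseteq U$ of size at most $n-c\log n$ and all candidate ``small neighborhoods'' $T\subseteq V$, that with probability at least $1-1/n^{8}$ every such $S$ has $|N(S)|\ge |S|$, and hence the realized graph contains a matching of size at least $n-c\log n$. This gives $\expmat{E}\ge n-O(\log n)$.

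Your greedy-with-fresh-randomness approach is both simpler and stronger: it sidesteps the double union bound entirely, and the bound you obtain on the expected deficiency, $\sum_{j\ge 1}(1-p)^{j}\le (1-p)/p$, is a constant rather than logarithmic. The only point worth being explicit about in a write-up is exactly the one you flagged: the set $R$ at step $i$ is determined by edges incident to $u_1,\dots,u_{i-1}$, which are disjoint from the edges $\{u_i\}\times R$ you now inspect, so the latter are genuinely independent Bernoulli($p$) variables conditional on the history. Once that is said, the deterministic inequality $A_i\ge n-i+1$ and linearity of expectation finish the job. What the paper's route buys is a high-probability (rather than expectation) statement, but for the lemma as stated your approach is the cleaner one.
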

\begin{proof}
Denote by $E_{p}$ the random set of edges formed by including each edge in $U \times V$ independently with probability $p$. We show that with probability at least $1 - \frac{1}{n^{8}}$, over the draw $E_{p}$, the maximum matching in the graph $(U \cup V, E_{p})$ is at least $n  - c \log(n)$, where $c = 10/\log(\frac{1}{(1-p)})$, and this will complete our claim.

In order to show this, we prove that with probability at least $1 - \frac{1}{n^{8}}$, over the draw $E_{p}$, all subsets $S \subseteq U$ of size at most $n  - c \log(n)$, have a neighborhood of size at least $\card{S}$. By Hall's theorem, our claim will follow.

Consider any set $S \subseteq U$ of size at most $n - c \log(n)$. We will call set $S$ `bad' if there exists some set $T\subseteq V$ of size $(\card{S} - 1)$ such that $S$ does not have edges to $V \setminus T$.  Fix any set $T \subseteq V$ of size $\card{S}-1$. Over draws of $E_{p}$, the probability that $S$ has no outgoing edges to $V \setminus T$ is at most $(1-p)^{\card{S}\card{V\setminus T}} = (1-p)^{\card{S}(n - \card{S} + 1)}$. Hence, by union bound, the probability  that $S$ is bad is at most ${n \choose \card{S} - 1} (1-p)^{\card{S}(n - \card{S} + 1)}$.

Again, by union bound, the probability that some set $S\subseteq U$ of size at most $n - c \log(n)$ is bad is at most $ \sum_{1 \le \card{S} \le n - c \log(n)}{n \choose \card{S}}{n \choose \card{S} - 1} (1-p)^{\card{S}(n - \card{S} + 1)}$ and this in turn is at most
{\small
\begin{align*}
\sum_{1 \le \card{S} \le n - c\log(n)}  n^{\card{S}} n^{\card{S}}  (1-p)^{\card{S}(n - \card{S} + 1)} \le \sum_{1 \le \card{S} \le n - c\log(n)} e^{\card{S}\cdot (2 \log(n) + (n + 1) \log(1-p) - \card{S} \log(1-p))}
\end{align*}
}

Note that the exponent in the summation achieves its maximum for $|S| = 1$. For $c = 10/\log(\frac{1}{1-p})$, we have that the given sum is at most $\exp(-\frac{n}{2}\log(\frac{1}{1-p}))$, and hence with high probability, no set $S \subseteq U$ of size at most $n - c\log(n)$ is bad.
\end{proof}

\begin{proof}[of Theorem~\ref{thm:non-adaptive-upperbound}]
Let $(V,E)$ be a graph, illustrated in Figure \ref{fig:upper-bound}, whose vertices are partitioned into sets $A$, $B$, $C$, and $D$,  such that $|A| = |D| = \frac{t}{2}$, $|B|= |C| = t$. The edge set $E$ consists  of one perfect matching between vertices of $B$ and $C$, and two complete bipartite graphs, one between $A$ and $B$, and another between $C$ and $D$. Let $p=0.5$ be the existence probability of any edge.

We first examine the value of the omniscient optimal, $\expmat{E}$. Since $p=0.5$, in expectation, half of the edges in the perfect matching between $B$ and $C$ exist, and therefore half of the vertices of $B$ and $C$ will get matched. By Lemma~\ref{clm:perfect-bipartite}, with high probability, the complete bipartite graph between the remaining half of $B$ and $A$ has a matching of size at least $t/2 - o(t)$. And similarly, with high probability, the complete bipartite graph between remaining half of $C$ and $D$ has a matching of size at least $t/2 - o(t)$. Therefore, $\expmat{E}$ is at least $\frac{3}{2}t - o(t)$.

Next, we look at Algorithm~\ref{alg:non-adaptive-matching}. For ease of exposition, let $B_1$  and $B_2$ denote the top and bottom half of the vertices in $B$. Similarly, define $C_1$ and $C_2$. Since Algorithm~\ref{alg:non-adaptive-matching} picks maximum matchings arbitrarily, we show that there exists a way of picking maximum matchings such that the expected matching size of the union of the edges picked in the matching is at most $\frac{5}{4}t\,\,$ ($=\frac{5}{6}\, \frac{3}{2}t$).

Consider the following choice of maximum matching picked by the algorithm: In the first round, the algorithm picks the perfect matching between $B_{1}$ and $C_{1}$, and a perfect matching between $A$ and $B_{2}$, and a perfect matching between $C_{2}$ and $D$. In the second round, the algorithm picks the perfect matching between $B_{2}$ and $C_{2}$, and a perfect matching each between $A$ and $B_{1}$, and between $C_{1}$ and $D$. After these two rounds, we can see that there are no more edges left between $B$ and $C$. For the subsequent $R-2$ rounds, in each round, the algorithms picks a perfect matching between $A$ and $B_{1}$, and a perfect matching between $C_{1}$ and $D$. It is easy to verify that in every round, the algorithm has picked a maximum matching from the remnant graph.

We analyze the expected size of matching output by the algorithm. For each of the vertices in $B_{2}$ and $C_{2}$, the algorithm has picked only two incident edges. For any vertex in $B_{2}$ and $C_{2}$, with probability at least $(1-p)^{2} = \frac{1}{4}$, none of these two incident edges exist. Hence, the expected number of vertices that are \emph{unmatched} in $B_{2}$ and $C_{2}$ is at least $\frac{1}{4}(\frac{t}{2} + \frac{t}{2}) = \frac{t}{4}$. Since the vertices in $A$ can only be matched with vertices in $B$, and the vertices in $D$ can only be matched with vertices in $C$, it follows that at least $\frac{t}{4}$ of the vertices in $A$ and $C$ are unmatched in expectation. Hence, the total number of edges included in the matching is at most $\frac{5}{4}t$. This completes our claim.
\end{proof}

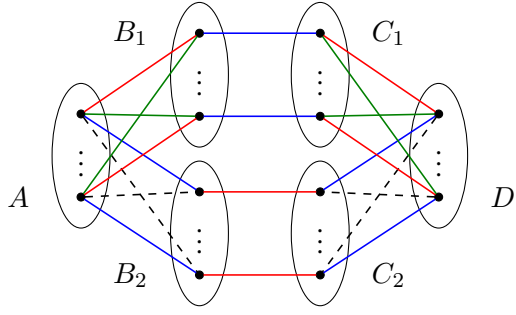
\begin{figure}
\centering
\begin{tikzpicture}
\node[dot](B1) {};
\node[below = 0.1cm of B1]{$\vdots$};
\node[dot, below=1cm of  B1](B15) {};
\node[dot, below=0.9cm of  B15](B2) {};
\node[below= 0.1cm of B2]{$\vdots$};
\node[dot, below= 1cm of B2](B3) {};

\node[dot, right=1.5cm of B1] (C1) {};
\node[below = 0.1cm of C1]{$\vdots$};
\node[dot, below=1cm of  C1](C15) {};
\node[dot, below=0.9cm of  C15](C2) {};
\node[below= 0.1cm of C2] {$\vdots$};
\node[dot, below= 1cm of C2](C3) {};

\node[dot, below left=1 and 1.5cm of B1](A1) {};
\node[below= 0.1cm of A1](Adots) {$\vdots$};
\node[dot, below= 1cm of A1](A2) {};

\node[dot, below right=1cm  and 1.5cm  of C1](D1) {};
\node[below= 0.1cm of D1](Ddots) {$\vdots$};
\node[dot, below= 1cm of D1](D2) {};

\node[ell, rotate  = 90] at (Adots) {};
\node[ell, rotate  = 90] at (Ddots) {};
\node[ell, rotate  = 90] at  ($(B1)!0.5!(B15)$) {};
\node[ell, rotate  = 90] at  ($(B2)!0.5!(B3)$){};
\node[ell, rotate  = 90] at  ($(C1)!0.5!(C15)$){};
\node[ell, rotate  = 90] at  ($(C2)!0.5!(C3)$){};

\draw[line, blue] (B1) -- (C1);
\draw[line, blue] (B15) -- (C15);
\draw[line, red] (B2) -- (C2);
\draw[line, red] (B3) -- (C3);

\draw[line, red] (A1) -- (B1);
\draw[line, blue] (A1) -- (B2);
\draw[line, green!50!black] (A1) -- (B15);
\draw[line, dashed] (A1) -- (B3);
\draw[line, green!50!black] (A2) -- (B1);
\draw[line, red] (A2) -- (B15);
\draw[line, dashed] (A2) -- (B2);
\draw[line, blue] (A2) -- (B3);

\draw[line, red] (D1) -- (C1);
\draw[line, blue] (D1) -- (C2);
\draw[line, green!50!black] (D1) -- (C15);

\draw[line, dashed] (D1) -- (C3);
\draw[line, green!50!black] (D2) -- (C1);
\draw[line, red] (D2) -- (C15);
\draw[line, dashed] (D2) -- (C2);
\draw[line, blue] (D2) -- (C3);

\node[left= 0.5cm of B1] (Bset) {$B_{1}$};
\node[left= 0.5cm of B3] (Bset) {$B_{2}$};
\node[right= 0.5cm of C1] (Bset) {$C_{1}$};
\node[right= 0.5cm of C3] (Bset) {$C_{2}$};
\node[right= 0.5cm of D2] (){$D$};
\node[left = 0.5cm of A2]() {$A$};

\end{tikzpicture}
\caption{ The blue and red edges represent the matching picked at rounds $1$ and $2$, respectively. The green edges represent the edges picked at round $3$ and above. The dashed edges are never picked by the algorithm. }
\label{fig:upper-bound}
\end{figure}

\section{Missing Proofs from Section~\ref{SEC:KSET}} 
\label{app:kset}
In this section, we fill in the missing proofs for stochastic $k$-set packing. A notation that we will use in some parts of the analysis is $\condexpset{A}{B}$ that we define as follows: Given a collection $B \subseteq A$ that has been queried and $B' \subseteq B$ that exists, we use $\condexpset{A}{B}$ to denote $\expc[\card{\ksize{X_{p}\cup B'}}]$ where $X_{p}$ is the random set formed by including every element of $A \setminus B$ independently with probability $p$.


\subsection{Adaptive Algorithm for $k$-Set Packing}
\label{sec:ada-kset}

We introduce some notation that is used in the remainder of the proofs in this section. At the beginning of the $r^{th}$ iteration of Algorithm~\ref{alg:adaptive-kset},  we know the results of the queries $\bigcup_{i = 1}^{r-1}Q_{i}$. We define $Z_{r}$ to be the expected cardinality of the instance $(U,A)$ given the result of these queries. More formally, $Z_{r} = \condexpset{A}{\bigcup_{i =1}^{r-1}Q_{i}}$.
We note that $Z_{1} = \expset{A}$.

For a given $r$, we use the notation $\expc_{Q_r}[X]$ to denote the expected value of $X$ where the expectation is taken over \emph{only} the outcome of query $Q_r$, and fixing the outcomes on the results of queries $\bigcup_{i=1}^{r-1}Q_i$. Moreover, for a given $r$, we use $\expc_{Q_r, \dots, Q_R}[X]$ to denote the expected value of $X$ with the expectation taken over the outcomes of queries $\bigcup_{i=r}^{R}Q_i$, and fixing an outcome on the results of queries $\bigcup_{i=1}^{r-1}Q_i$.

The next result, Lemma~\ref{lem:increase-each-iter-kset}, proves a lower bound on the expected increase in the cardinality of $B_{r}$ (the solution at round $r$) with respect to $B_{r-1}$ (the solution in the previous round).
\begin{lemma}
\label{lem:increase-each-iter-kset}
For every $r\in [R]$, it is the case that  $\expc_{Q_{r}}[\card{B_{r}}] \geq (1-\gamma) \card{B_{r-1}} + \gamma (\setapp) Z_{r}$, where $\gamma = \frac{p^{\saize}}{(\setapp) k\,\saize}$.
\end{lemma}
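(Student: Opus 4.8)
The plan is to mimic the structure of Lemma~\ref{lem:increase-each-iter-matching} from the matching case, but using the $k$-set augmenting structures supplied by Lemma~\ref{lem:struct-res-k-set} instead of augmenting paths. Fix the round $r$ and fix an arbitrary outcome of the queries $\bigcup_{i=1}^{r-1}Q_i$; all expectations below are taken only over the fresh randomness of $Q_r$. First I would dispose of the trivial case: if $\card{B_{r-1}} \ge (\setapp)\ksize{A_r}$, then the local-search subroutine (Algorithm~\ref{alg:k-set-aug}) may return no augmenting structure and $B_r = B_{r-1}$; but in that case one checks directly that $(1-\gamma)\card{B_{r-1}} + \gamma(\setapp)Z_r \le \card{B_{r-1}}$ because $Z_r \le \ksize{A_r} \le \card{B_{r-1}}/(\setapp)$, so the claimed inequality holds with room to spare. (Here I use that $Z_r$, being the expected size of an optimal packing over a random subset of the unqueried sets together with the known-existing sets, is at most $\ksize{A_r}$, the optimum when \emph{every} unqueried set is assumed to exist.)

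In the main case $\card{B_{r-1}} < (\setapp)\ksize{A_r}$, apply Lemma~\ref{lem:struct-res-k-set} to the instance $(U, A_r)$ with the current feasible packing $B_{r-1}$: this yields at least $\frac{1}{k\,\saize}\bigl(\ksize{A_r} - \frac{\card{B_{r-1}}}{\setapp}\bigr)$ disjoint augmenting structures $(C,D)$, each of size at most $\saize$, each composed of sets from $A_r$ (hence not yet queried). For each such structure, all of $C$ consists of at most $\saize$ sets, so the probability that all sets of $C$ exist is at least $p^{\saize}$, and because the structures are disjoint these events are independent. Each successful structure increases $\card{B_r}$ by at least one (by definition of an augmenting structure), and — crucially — since the structures are disjoint, the augmentations do not interfere, so the total increase is at least the number of successful structures. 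Taking expectations,
\begin{align*}
\expc_{Q_r}[\card{B_r}] - \card{B_{r-1}} &\ge p^{\saize}\cdot\frac{1}{k\,\saize}\left(\ksize{A_r} - \frac{\card{B_{r-1}}}{\setapp}\right)\\
&= \gamma\left((\setapp)\ksize{A_r} - \card{B_{r-1}}\right)\\
&\ge \gamma\left((\setapp)Z_r - \card{B_{r-1}}\right),
\end{align*}
where $\gamma = \frac{p^{\saize}}{(\setapp)k\,\saize}$ and the last step uses $\ksize{A_r} \ge Z_r$. Rearranging gives $\expc_{Q_r}[\card{B_r}] \ge (1-\gamma)\card{B_{r-1}} + \gamma(\setapp)Z_r$, as required.

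The one subtlety I would take care to argue cleanly — and what I expect to be the main obstacle — is the claim that disjoint successful augmenting structures yield a combined gain equal to the count of successes. Lemma~\ref{lem:struct-res-k-set} and Algorithm~\ref{alg:k-set-aug} guarantee the $C$-parts are pairwise disjoint and the $D$-parts lie in $B_{r-1}$; one must check that applying all the successful $(C,D)$ swaps simultaneously still leaves a feasible (disjoint) packing and that the $D$-parts being removed are accounted for correctly so that each swap nets exactly the promised cardinality increase. This follows because disjointness of the $C$'s, together with each $D\subseteq B_{r-1}$ being exactly the sets of $B_{r-1}$ that $C$ collides with, means the swaps act on disjoint "regions" of the packing; a short induction over the successful structures (in the order Algorithm~\ref{alg:adaptive-kset} processes them) makes this rigorous. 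Everything else is the same telescoping bookkeeping as in the matching proof, and the analogue of Equation~\eqref{eq:jeefa}, namely $\expc_{Q_{r-1}}[Z_r] = Z_{r-1}$, holds here verbatim by the tower property and will be used when this lemma is combined over all $R$ rounds to prove Theorem~\ref{thm:main-adaptive-kset}.
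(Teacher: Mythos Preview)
Your proposal is correct and follows essentially the same argument as the paper: invoke Lemma~\ref{lem:struct-res-k-set} to get many disjoint constant-size augmenting structures, lower-bound the expected gain by $p^{\saize}$ times their count, and use $\card{\ksize{A_r}}\ge Z_r$ to finish. You are more careful than the paper in two places---explicitly disposing of the case $\card{B_{r-1}}\ge(\setapp)\card{\ksize{A_r}}$ and worrying about whether disjoint successful swaps can be applied together---both of which the paper's proof simply elides; one minor imprecision is that sets in $A_r$ may already have been queried and found to exist (not ``not yet queried''), but this only helps since such sets survive with probability $1\ge p$.
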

\begin{proof}
By Lemma~\ref{lem:struct-res-k-set}, $Q_{r}$ is a collection of at least $\frac{1}{k\,\saize}(\card{\ksize{A_{r}}} - \frac{\card{B_{r-1}}}{\setapp})$ \emph{disjoint $\saize$-size augmenting structures} $(C,D)$ for $B_{r-1}$. Since in each augmenting structure $(C,D)$, $C$ has at most $\saize$ sets, on querying, the set $C$ exists with probability at least $p^{\saize}$. Therefore, the expected increase in the size of the solution at Step~\ref{step:increment-B} is:
\begin{align*}
\expc_{Q_{r}}[\card{B_r} ] - \card{B_{r-1}} \geq p^{k\saize}~\card{Q_{r}} &\geq 
\frac{p^{\saize}}{k\,\saize}\left(\card{\ksize{A_{r}}} - \frac{\card{B_{r-1}}}{\setapp}\right) \\
  & \geq \gamma \big( (\setapp)~\card{\ksize{A_{r}}} - \card{B_{r-1}} \big).
\end{align*}
Noting that $\card{\ksize{A_{r}}} \ge Z_{r}$, we have our result.
\end{proof}

\begin{proof}[\textsc{of Theorem~\ref{thm:main-adaptive-kset}}]
First, we make a technical observation about $Z_{r}$:
For every $r\leq R$, $\expc_{Q_{r-1}} [ Z_{r} ] = Z_{r-1}$. This is since 
\begin{align}
\label{eq:conditional-expectation-kset}
\expc_{Q_{r-1}} [ Z_{r} ] &= \expc_{Q_{r-1}}[\condexpset{A}{\bigcup_{i=1}^{r-1}Q_{i}}] = \condexpset{A}{\bigcup_{i =1}^{r-2}Q_{i}}  = Z_{r-1}.
\end{align}

Now, similar to the proof of Theorem~\ref{thm:main-adaptive-matching}, we first apply Lemma~\ref{lem:increase-each-iter-kset} to the $R^{th}$ step and get $\expc_{Q_{R}}[\card{B_{R}}] \geq (1-\gamma) \card{B_{R-1}} + \gamma (\setapp) Z_{R}.$
Next taking expectation on both sides with respect to $Q_{R-1}$, we get
$\expc_{Q_{R-1}, Q_{R}}[\card{B_{R}}] \geq (1-\gamma) \expc_{Q_{R-1}}[\card{B_{R-1}}] + \gamma (\setapp) \expc_{Q_{R-1}}[Z_{R}].$
Applying Lemma~\ref{lem:increase-each-iter-kset} to $\expc_{Q_{R-1}}[\card{B_{R-1}}]$ and Equation~\eqref{eq:conditional-expectation-kset} to $\expc_{Q_{R-1}}[Z_{R}]$, we get
\begin{align*}
\expc_{Q_{R-1}, Q_{R}}[\card{B_{R}}] &\geq (1-\gamma) ((1-\gamma) \card{B_{R-2}} + \gamma (\setapp) Z_{R-1}) + \gamma (\setapp)~Z_{R-1}\\
&= (1-\gamma)^{2}  \card{B_{R-2}} + \gamma (\setapp) (1 + (1-\gamma))~Z_{R-1}.
\end{align*}

We can repeat the above steps, by sequentially taking expectation over $Q_{R-2}$ through $Q_{1}$, and applying Lemma~\ref{lem:increase-each-iter-kset} and Equation~\eqref{eq:conditional-expectation-kset} at each step, to achieve
\begin{align*}
\expc_{Q_{1}, \dots, Q_{R}}[\card{B_{R}}]  &\ge (1-\gamma)^{R}  \card{B_{0}} + \gamma (\setapp) (1 + (1-\gamma)+ \cdots + (1-\gamma)^{R-1})~Z_{1}\\
&\geq (\setapp) (1 - (1-\gamma)^{R})~\expset{A} \geq \frac{2}{k}(1- \frac{\eta k}{2}) (1 - e^{-\gamma R}). ~\expset{A}
\end{align*}
We complete the claim by noting that $$\frac{2}{k}(1- \frac{\eta k}{2}) (1 - e^{-\gamma R}) \geq \frac{2}{k} (1-\frac \epsilon 2) (1 - \frac{\epsilon}{2})\geq (1-\epsilon)\frac{2}{k},$$ where the penultimate inequality comes from the fact that  $\eta = \epsilon/k$ and $$R=\frac {(\setapp)~k~\saize}{p^{\saize}} \log( \frac 2\epsilon) = \frac 1\gamma \log(\frac 2\epsilon).$$ Therefore, the cardinality of $B_{R}$ in expectation is at least a $(1-\epsilon)\frac{2}{k}\expset{A}$.
\end{proof}

\subsection{Non-Adaptive Algorithm for $k$-Set Packing}
\label{sec:non-ada-kset}
To prove Theorem~\ref{thm:non-adaptive-sets}, we analyze the non-adaptive Algorithm~\ref{alg:non-adaptive-sets} from the main paper.
Before proving Theorem~\ref{thm:non-adaptive-sets}, we present a technical claim.
\begin{claim}
\label{clm:exp-k-set-sum-parts}
Let $\A_{1}\subseteq \A$ and $\A_{2} = \A \setminus \A_{1}$. Then $\expset{\A} \leq \expset{\A_{1}} + \expset{\A_{2}}$.
\end{claim}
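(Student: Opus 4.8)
The plan is to mimic the proof of Lemma~\ref{clm:exp-match-sum-parts} essentially verbatim, with ``matching'' replaced by ``disjoint subcollection of sets.'' First I would fix an arbitrary realization $\A'$ of the random subcollection $\A_{p}$, and set $\A'_{1} = \A_{1}\cap\A'$ and $\A'_{2} = \A_{2}\cap\A'$, so that $\A'_{1}$ and $\A'_{2}$ partition $\A'$.

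The deterministic core of the argument is the inequality $\card{\ksize{\A'}} \le \card{\ksize{\A'_{1}}} + \card{\ksize{\A'_{2}}}$. To establish it, let $T \subseteq \A'$ be a maximum-cardinality disjoint subcollection, so $\card{T} = \card{\ksize{\A'}}$. Then $T\cap\A'_{1}$ is a subcollection of $T$, hence its members are still pairwise disjoint, so it is a feasible solution for the instance $(U,\A'_{1})$; likewise $T\cap\A'_{2}$ is feasible for $(U,\A'_{2})$. Consequently $\card{\ksize{\A'_{1}}} \ge \card{T\cap\A'_{1}}$ and $\card{\ksize{\A'_{2}}} \ge \card{T\cap\A'_{2}}$. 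Since $\A'_{1}$ and $\A'_{2}$ partition $\A'\supseteq T$, we have $\card{T} = \card{T\cap\A'_{1}} + \card{T\cap\A'_{2}}$, and the claimed inequality follows.

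Finally I would pass to expectations. Because each set of $\A$ is included in $\A_{p}$ independently with probability $p$, the restriction $\A'_{i}=\A_{i}\cap\A_{p}$ has exactly the distribution of $(\A_{i})_{p}$, and hence $\expc[\card{\ksize{\A'_{i}}}] = \expset{\A_{i}}$ for $i=1,2$. Taking expectation of the deterministic inequality over the draw of $\A_{p}$ --- equivalently, multiplying it by the probability of each outcome $\A'$ and summing, since expectation is a convex combination of outcome values --- yields $\expset{\A} \le \expset{\A_{1}} + \expset{\A_{2}}$, as desired.

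I do not anticipate a genuine obstacle here. The only point requiring (minor) care is the observation that a maximum packing splits additively along a partition of the set family, which is immediate because the disjointness property is inherited by subcollections; this is even slightly easier than the matching analogue in Lemma~\ref{clm:exp-match-sum-parts}, where one additionally had to note that restricting a matching to a subset of edges still yields a matching.
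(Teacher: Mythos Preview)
Your proposal is correct and follows essentially the same approach as the paper's own proof: fix a realization $\A'$, split it into $\A'_1$ and $\A'_2$, observe that a maximum packing $T$ of $\A'$ restricts to feasible packings of $\A'_1$ and $\A'_2$ whose sizes add to $\card{T}$, and then take expectations. If anything, you are slightly more explicit than the paper in naming $T$ and in noting that the marginal distribution of $\A_i\cap\A_p$ is exactly $(\A_i)_p$.
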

\begin{proof}
Let $\A'$ be any subset of $\A$,  $\A'_{1} = \A_{1} \cap \A'$,  and $\A'_{2} = \A_{2} \cap \A'$. Since the  $k$-set packing of $\A'$ restricted to $\A'_1$ and $\A'_2$ are  valid $k$-set packings for these subsets, hence $\card{\ksize{\A'}} \leq \card{\ksize{\A'_{1})}} + \card{\ksize{\A'_{2}}}$.
For every $\A'\subseteq \A$, the above inequality holds.
Expectation is a linear combination of the values of the outcomes, and so this inequality also holds in expectation. That is, $\expset{\A} \leq \expset{\A_{1}} + \expset{\A_{2}}$.
\end{proof}

\begin{proof}[\textsc{of Theorem~\ref{thm:non-adaptive-sets}}]
We claim that the expected cardinality of the $k$-set solution output by Algorithm~\ref{alg:non-adaptive-sets} is at least $(1-\frac{\epsilon}{2}) \frac{(\setapp)^{2}}{1+ \setapp}\expset{\A}$. The claimed approximation will follow since $\eta = \frac{\epsilon}{2k}$. 

For ease of exposition, let $\alpha = \frac{\setapp}{1+\setapp}$, and now note that $\frac{(\setapp)^{2}}{1+ \setapp} = \alpha (\setapp) = (1-\alpha) (\setapp)^{2}$.

Assume that $\expset{\B_{R}} \leq  \alpha \cdot \expset {\A}$ (else it will be immediately follow that the expected cardinality of the $k$-set solution output by the algorithm is at least $(\setapp) \alpha \expset{A}$ and this will complete the claim). 

First, we make an observation. For each round $r \in [R]$, we have $\expset{B_{r}} \le \expset{\B_{R}} \leq  \alpha \expset {\A}$. If we denote $A_{r} = A \setminus B_{r-1}$, then it follows that $$\card{O_r} \geq (\setapp) \card{\ksize{\A_r}} \geq (\setapp) \expset{\A_r} \geq (\setapp)(\expset{\A} - \expset{B_{r-1}}) \geq (\setapp) (1- \alpha) \expset{\A}~,$$
where the first inequality follows from the fact that $O_{r}$ is $(\setapp)$-approximation to $A_{r}$, and the second inequality follows from Claim~\ref{clm:exp-k-set-sum-parts}.

We analyze the expected cardinality of the output solution $Q_{R}$ by analyzing the $R$ stages that the algorithm adopts at Steps \ref{step:non-ada-kset-stage1} and \ref{step:non-ada-kset-aug} to create solution $Q_{R}$. For this analysis, we use the following notation: For a given $r$, we use the notation $\expc_{O_r}[X]$ to denote the expected value of $X$ where the expectation is taken over \emph{only} the outcome of query $O_r$, and fixing the outcomes on the results of queries $\bigcup_{i=1}^{r-1}O_i$. Moreover, for a given $r$, we use $\expc_{O_r, \dots, O_R}[X]$ to denote the expected value of $X$ with the expectation taken over the outcomes of queries $\bigcup_{i=r}^{R}O_i$, and fixing an outcome on the results of queries $\bigcup_{i=1}^{r-1}O_i$.

In the first stage, $Q_{1}$ is assigned to the collection of $k$-sets that are found to exist in $O_{1}$. In the second stage, we try to augment $Q_{1}$ by finding augmenting structures from $O_{2}$ and querying them. By Lemma~\ref{lem:struct-res-k-set}, it finds at least $\frac{1}{k \saize} \left(\card{O_{2}} - \frac{\card{Q_{1}}}{\setapp}\right)$ disjoint augmenting structures from $O_{2}$ that have size at most $\saize$ and augment $Q_{1}$. Since each augmenting structure exists independently with probability at least $p^{\saize}$, \emph{in expectation over the outcomes of queries to $O_{2}$}, the size of $Q_{2}$, $\expc_{O_{2}}[Q_{2}]$, is at least
\begin{align*}
\card{Q_{1}} + p^{\saize}\left(\frac{1}{k\saize}\big(\card{O_{2}} - \frac{\card{Q_{1}}}{\setapp} \big) \right)
=& \frac{p^{\saize}}{k\saize}~\card{O_{2}} + (1 - \frac{p^{\saize}}{k\saize(\setapp)}  ) \card{Q_{1}} \\
 \ge& \frac{p^{\saize}}{k\saize}~ (\setapp) (1- \alpha)\expset{\A} + (1 - \frac{p^{\saize}}{k\saize(\setapp)}  )~ \card{Q_{1}},
\end{align*}
and hence the expected size of $Q_{2}$ is at least $\beta~\expset{A} + (1- \gamma) \card{Q_{1}}$, where $\beta = \frac{p^{\saize}}{k\saize}(\setapp) (1- \alpha)$ and $\gamma =  \frac{p^{\saize}}{k~\saize~(\setapp)}$.

For the third stage, a similar analysis shows that the expected size of $Q_{3}$, $\expc_{O_{3}}[Q_{3}]$, \emph{with expectation taken only over the outcomes of the queries to $O_{3}$}, is at least $\beta~\expset{A} + (1- \gamma) \card{Q_{2}}$. If we now, in addition, take expectation over the outcomes of queries to $O_{2}$, 
we get the expected size of $Q_{3}$, $\expc_{O_{2}, O_{3}}[Q_{3}]$, is at least $\beta~\expset{A} + (1-\gamma) ~ (\beta~\expset{A} + (1- \gamma) \card{Q_{1}}) = \beta(1 + (1-\gamma)) ~\expset{A} + (1- \gamma)^{2} ~ \card{Q_{1}}$.

Repeating the above steps, the procedure creates the $k$-set solution $Q_{R}$ (from $O_{1}, \cdots, O_{R}$) whose expected size, with expectation taken over the outcomes of queries to $O_{2}$ through $O_{R}$, is at least 
$$\beta(1+(1-\gamma) + \cdots + (1-\gamma)^{R-2}) \expset{A} + (1-\gamma)^{R-1}\card{Q_{1}}~.$$

Finally, taking expectation over outcomes of queries to $O_{1}$, since the expected size of $\card{Q_{1}}$ is at least $p \card{O_{1}} \ge p ~(\setapp) \expset{A} \ge \beta~\expset{A}$, we have that the expected size of $Q_{R}$ is at least 
\begin{align*}
\beta~(1 &+(1-\gamma) + \cdots + (1-\gamma)^{R-1}) \expset{A}\\
& =\frac{\beta}{\gamma}(1 - (1-\gamma)^{R}) \expset{A} \ge \frac{\beta}{\gamma} (1 - e^{-\gamma R})  \expset{A} \geq (1-\frac \epsilon 2)\frac{(\setapp)^2}{\setapp + 1} \expset{\A}
\end{align*}
\end{proof}

\section{Matching Under Correlated Edge Probabilities}
\label{sec:extensions}
In this section, we extend our framework to a more general setting. Here, the existence probability of an edge depends on parameters that are associated with the endpoints of the edge. Specifically, every vertex $v_i\in V$ is associated with parameter $p_{i}$, and an edge $e_{ij}=(v_i,v_j)$ exists with probability $p_ip_j$. 

Importantly, this model is a generalization of the model studied above: we can still think of each edge $e\in E$ as existing with a given probability, and these events are \emph{independent}. However, using vertex parameters gives us a formal framework for correlating the probabilities of edges incident to any particular vertex. The motivation for this comes from kidney exchange: Some \emph{highly sensitized} patients are less likely than other patients to be compatible with potential donors. Such patients correspond to a small $p_i$ parameter.

We consider two settings: adversarial and stochastic. In the adversarial setting, the vertex parameters $p_{i}$ are selected by an adversary, whereas in the stochastic model, the parameters are drawn from a distribution. 
In the former setting, for $\delta>0$, define $f_\delta$ to be the \emph{number of vertices} that have $p_{i} < \delta$. In the latter setting, for a distribution $D$ and $\delta>0$, let $g_\delta$ indicate the \emph{probability} that a vertex has its parameter less than $\delta$, i.e., $g_\delta = \Pr_{p_i\sim D}[p_i < \delta]$. We formulate our results in terms of $\delta$, $f_\delta$, and $g_\delta$, and the desired value of $\delta$ can depend on the application. For example, in kidney exchange, $\delta$ would be the probability that a highly-sensitized patient is compatible with a random donor (a patient is typically considered to be highly sensitized when this probability is $0.2$), and $f_\delta$ would be the number of highly-sensitized patients in the kidney exchange pool.

\subsection{Adaptive Algorithm in Adversarial Setting}
In this section, we consider the case where an adversary chooses the values of vertex parameters. We give guarantees on the performance of Algorithm~\ref{alg:adaptive-matching} in this setting.

\begin{theorem}\label{thm:adaptive-adversary}
For any graph $(V, E)$, any $\epsilon>0$, and $\delta>0$, Algorithm~\ref{alg:adaptive-matching} returns a matching with expected size of  $(1-\epsilon)(\expmat{E} - f_\delta)$ in $R = \frac{\log(2/\epsilon)}{\delta^{4/\epsilon}}$ iterations. 
\end{theorem}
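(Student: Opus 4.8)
The plan is to mimic the proof of Theorem~\ref{thm:main-adaptive-matching}, but to restrict attention to a "good" subgraph in which every vertex has parameter at least $\delta$. First I would define $V_{\delta} = \{v_i \in V : p_i \geq \delta\}$, so $|V \setminus V_{\delta}| = f_\delta$. The key observation is that every edge contained entirely in $V_{\delta}$ exists with probability at least $\delta^2$, so the subgraph induced by $V_{\delta}$ behaves exactly like the uncorrelated model of Section~\ref{sec:ada-matching} with existence probability $p = \delta^2$. I would let $E_{\delta}$ denote the edges of $E$ with both endpoints in $V_{\delta}$, and note that $\expmat{E_{\delta}} \geq \expmat{E} - f_\delta$: indeed, in any realization, deleting the at most $f_\delta$ vertices outside $V_\delta$ from a maximum matching removes at most $f_\delta$ edges, and the surviving edges lie in $E_\delta$; taking expectations gives the bound (this is the correlated-model analogue of Lemma~\ref{clm:exp-match-sum-parts}, applied at the vertex level).

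Next I would run the analysis of Lemma~\ref{lem:struct-res-matching} and Lemma~\ref{lem:increase-each-iter-matching} but counting only augmenting paths that lie inside $V_{\delta}$. The subtlety is that Algorithm~\ref{alg:adaptive-matching} is run on the whole graph $G$, not on the restriction; so I need to argue that the augmenting structure guaranteed by the structural lemma can be found among edges inside $V_\delta$. This follows by applying Lemma~\ref{lem:struct-res-matching} to the two matchings $M_{r-1} \cap E_\delta$ and $O_r \cap E_\delta$ — or more cleanly, by observing that $O_r$ restricted to edges known to exist plus non-queried edges inside $V_\delta$ is itself a candidate with matching value at least $Z_r^{(\delta)} := \condexpmat{E_\delta}{\bigcup_{i<r} Q_i}$, which in turn is at least $Z_r - f_\delta$ by the same vertex-deletion argument conditioned on the query outcomes. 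Each such augmenting path of length at most $L$ has its $O_r$-edges inside $V_\delta$, so the whole path exists with probability at least $(\delta^2)^{(L+1)/2} = \delta^{L+1}$. Substituting $p \rightsquigarrow \delta^2$ into the matching proof, the role of $\alpha = p^{(L+1)/2}$ is played by $\delta^{L+1}$ and $\gamma$ scales accordingly, and the benchmark $Z_1 = \expmat{E}$ is replaced by $Z_1^{(\delta)} \geq \expmat{E} - f_\delta$.

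Then I would push the telescoping recursion through exactly as in the proof of Theorem~\ref{thm:main-adaptive-matching}: setting $L = \frac{4}{\epsilon} - 1$ and $R = \frac{\log(2/\epsilon)}{\delta^{4/\epsilon}}$ (so that $\gamma R$ is large enough and $\frac{2}{L+1} = \frac{\epsilon}{2}$), the chain of inequalities yields $\expc_{Q_1,\dots,Q_R}[|M_R|] \geq \frac{\alpha}{\gamma}(1 - (1-\gamma)^R)\, Z_1^{(\delta)} \geq (1-\epsilon)(\expmat{E} - f_\delta)$, where the geometric-sum manipulation is verbatim the one in Equation~\eqref{eq:matching-epsilon} with $p$ replaced by $\delta^2$ (note $\delta^{(L+1)} = (\delta^2)^{2/\epsilon} = \delta^{4/\epsilon}$, matching the stated number of rounds). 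The only genuinely new bookkeeping is the conditional version of Equation~\eqref{eq:jeefa}, which holds for $Z_r^{(\delta)}$ by the same tower-of-expectations argument since $E_\delta$ is a fixed edge set.

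The main obstacle I anticipate is making precise the claim that restricting the augmenting paths to $V_\delta$ loses only the additive $f_\delta$ term and does not interfere with the disjointness/independence properties that the algorithm relies on. Concretely, I must verify that $|O_r \cap E_\delta| \geq |O_r| - f_\delta \geq Z_r - f_\delta$ and that Lemma~\ref{lem:struct-res-matching} applied to $M_{r-1} \cap E_\delta$ and a maximum matching of $E_\delta \setminus W_r$ still produces paths whose $O_r$-edges the algorithm actually queries — this requires observing that $O_r$ (the maximum matching on all non-falsified edges) is at least as large as any matching supported on $E_\delta$, so the structural count goes through with $|O_r|$ in place of the restricted quantity while the path-length/probability bound only needs the edges of the path to lie in $V_\delta$, which can be enforced since we are free to choose which augmenting paths to count in the lower bound. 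Once this is set up carefully the rest is a transcription of the $p = \delta^2$ case.
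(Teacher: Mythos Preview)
Your proposal is correct in spirit and eventually converges to the paper's argument, but the paper takes a more direct route that sidesteps the ``main obstacle'' you identify. Rather than restricting to the subgraph on $V_\delta$ and tracking a restricted benchmark $Z_r^{(\delta)}$, the paper works with the unrestricted $O_r$ and $M_{r-1}$ throughout: Lemma~\ref{lem:struct-res-matching} yields at least $|O_r| - (1+\tfrac{2}{L+1})|M_{r-1}|$ vertex-disjoint augmenting paths in $O_r \otimes M_{r-1}$, and since they are disjoint, at most $f_\delta$ of them can contain a vertex with $p_i < \delta$. Discarding those, each surviving path has every unqueried edge existing with probability at least $\delta^2$, so
\[
\expc_{Q_r}[|M_r|] - |M_{r-1}| \;\geq\; \delta^{L+1}\Bigl(|O_r| - \bigl(1+\tfrac{2}{L+1}\bigr)|M_{r-1}| - f_\delta\Bigr) \;\geq\; \alpha(Z_r - f_\delta) - \gamma |M_{r-1}|,
\]
i.e.\ the per-round lemma with $Z_r$ replaced by $Z_r - f_\delta$. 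The recursion then proceeds verbatim with $\expc_{Q_{r-1}}[Z_r - f_\delta] = Z_{r-1} - f_\delta$.

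The obstacle you worry about---that the algorithm queries paths in $O_r \otimes M_{r-1}$ rather than in any restricted symmetric difference---is entirely a byproduct of passing to $E_\delta$ first; the paper never does this and so never faces it. Your final paragraph effectively rediscovers the paper's shortcut (``we are free to choose which augmenting paths to count in the lower bound''), which is precisely the discard-bad-paths step above. So your detour through $E_\delta$ and $Z_r^{(\delta)}$ can simply be dropped: everything you need follows from disjointness of the paths the algorithm already queries.
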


The proof of this theorem and the subsequent lemmas are similar to the proofs of Section~\ref{sec:ada-matching}, and are included here for completeness. In the next lemma, $\expc_{Q_r}[|M_r|]$ indicates the expected size of $M_r$, where the expectation is over the query outcome of $Q_r$. More formally, $\expc_{Q_r}[|M_r|] = \condexpmat{\bigcup_{j=1}^{r} Q_j}{\bigcup_{j=1}^{r-1} Q_j}$.
We use $Z_{r}$ to denote the expected size of the maximum matching in graph $(V,E)$ given the results of the queries $\bigcup_{j = 1}^{r-1}Q_{j}$. More formally, $Z_{r} = \condexpmat{E}{\bigcup_{j = 1}^{r-1}Q_{i}}$. Note that $Z_{1} = \expmat{E}$.
\begin{lemma}
 For all $r\in [R]$ and odd $L$, 
$\expc_{Q_{r}}[|M_r|] \geq (1- \gamma) |M_{r-1}| + \alpha (Z_{r} - f_\delta),
$ where  $\gamma = \delta^{L+1}(1+\frac{2}{L+1})$ and $\alpha = \delta^{L+1}$.
\end{lemma}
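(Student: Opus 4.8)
The plan is to mirror the proof of Lemma~\ref{lem:increase-each-iter-matching}, but with two modifications: we replace the per-edge probability $p$ by the uniform lower bound $\delta$ (valid only for edges both of whose endpoints have parameter at least $\delta$), and we discard from the accounting all augmenting paths that touch a ``low'' vertex (one with $p_i < \delta$). First I would fix a round $r$ and the outcomes $Q_1,\dots,Q_{r-1}$, and recall that $O_r$ is a maximum matching in $(V, E\setminus W_r)$, so $|O_r| \ge Z_r$ exactly as before. If $|O_r| \le |M_{r-1}|$ the inequality is trivial (the right-hand side can only be as large as $\alpha Z_r - \gamma|M_{r-1}|$, which is dominated), so assume $|O_r| > |M_{r-1}|$ and apply Lemma~\ref{lem:struct-res-matching} to $M_{r-1}$ and $O_r$: there are at least $|O_r| - (1 + \frac{2}{L+1})|M_{r-1}|$ augmenting paths of length at most $L$ in $O_r \otimes M_{r-1}$.

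Next I would prune this collection. Each augmenting path of length at most $L$ uses at most $\frac{L+1}{2}$ vertices of each side — in particular at most $L+1$ vertices in total — but more to the point, since the augmenting paths under consideration are vertex-disjoint, the number of them that contain \emph{any} vertex with parameter $< \delta$ is at most $f_\delta$. Hence at least $|O_r| - (1+\frac{2}{L+1})|M_{r-1}| - f_\delta \ge Z_r - (1+\frac{2}{L+1})|M_{r-1}| - f_\delta$ of these augmenting paths consist entirely of vertices with parameter $\ge \delta$. For such a path, every one of its $O_r$-edges has both endpoints with parameter $\ge \delta$, so that edge exists with probability $p_ip_j \ge \delta^2$; there are at most $\frac{L+1}{2}$ such edges on a path of length $\le L$, and the events are independent across the (disjoint) $O_r$-edges, so the entire $O_r$-part of the path exists with probability at least $\delta^{2 \cdot (L+1)/2} = \delta^{L+1} = \alpha$. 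When it does, querying it (which Algorithm~\ref{alg:adaptive-matching} does in Step~\ref{step:ada-augmenting-paths}) lets $M_r$ augment $M_{r-1}$ along it, and distinct such paths contribute independently. Therefore
\[
\expc_{Q_r}[|M_r|] - |M_{r-1}| \;\ge\; \alpha\Bigl( Z_r - \bigl(1+\tfrac{2}{L+1}\bigr)|M_{r-1}| - f_\delta \Bigr) \;=\; \alpha(Z_r - f_\delta) - \gamma |M_{r-1}|,
\]
with $\gamma = \alpha(1 + \frac{2}{L+1}) = \delta^{L+1}(1 + \frac{2}{L+1})$, which is exactly the claimed bound after rearranging.

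The only genuinely new point compared with Lemma~\ref{lem:increase-each-iter-matching} is the pruning step, and the subtle part there is the justification that at most $f_\delta$ of the \emph{disjoint} augmenting paths can be spoiled by low-parameter vertices — this is immediate from disjointness, since each spoiled path consumes at least one distinct low vertex and there are only $f_\delta$ of those. I also need to double-check the degenerate case and the direction of the final inequality $|O_r| \ge Z_r$ (same as before: $Z_r$ is an expectation over a subset of the edge outcomes, bounded above by the value when all non-queried edges exist, namely $|O_r|$). I expect no real obstacle; this lemma is a routine adaptation, and it will feed into the proof of Theorem~\ref{thm:adaptive-adversary} via the same telescoping argument as in the proof of Theorem~\ref{thm:main-adaptive-matching}, now with an extra additive $-f_\delta$ carried through, and with $L = \frac{4}{\epsilon} - 1$ chosen so that $\alpha/\gamma = 1 - \frac{2}{L+1} = 1 - \epsilon/2$ and $R = \frac{\log(2/\epsilon)}{\delta^{4/\epsilon}} = \frac{1}{\gamma'}\log(2/\epsilon)$-style bound forces $e^{-\gamma R} \le \epsilon/2$.
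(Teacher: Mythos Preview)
Your proposal is correct and follows essentially the same approach as the paper's proof: apply Lemma~\ref{lem:struct-res-matching}, discard at most $f_\delta$ of the vertex-disjoint augmenting paths that touch a low-parameter vertex, and lower-bound the survival probability of each remaining path's $O_r$-edges by $\delta^{L+1}$. Your treatment is slightly more careful than the paper's (you explicitly handle the degenerate case $|O_r|\le |M_{r-1}|$), and the minor aside about ``$\tfrac{L+1}{2}$ vertices of each side'' is unnecessary and a bit off (the graph need not be bipartite), but it plays no role in the argument.
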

\begin{proof}
By Lemma~\ref{lem:struct-res-matching}, there exists $|O_r| - (1+\frac{2}{L+1})|M_{r-1}|$ many augmenting paths in $O_r\otimes M_{r-1}$ that augment $M_{r-1}$  and have length at most $L$. These augmenting paths are disjoint, so at most $f_\delta$ of them include a vertex $v_i$, with $p_i\leq \delta$.
We will ignore these paths. Among the remaining augmenting paths, each path of length $L$, has at most $\frac{L+1}{2}$ edges that have not been queried yet. These edges do not share a vertex, so each one exists, independently of others, with probability at least $\delta^2$. Therefore, 
the expected increase in the size of the matching from these augmenting paths is:
\[\expc_{Q_r}[|M_r|] - |M_{r-1}| \geq \delta^{L+1}\left( |O_r| - (1+\frac{2}{L+1})|M_{r-1}| - f_\delta \right) \geq \alpha (Z_r - f_\delta)  - \gamma |M_{r-1}|.
\]
where the last inequality holds by the fact that $Z_r$, which is the expected size of the optimal matching with expectation taken over the non-queried edges, cannot be larger than $O_r$, which is  the maximum matching assuming that every non-queried edge exists.
\end{proof}
\begin{proof}[\textsc{sketch of Theorem~\ref{thm:adaptive-adversary}}]
Let $L=\frac{4}{\epsilon}-1$. First note that for all $r$, it is true that
\begin{align*}
\label{eq:jeefa}
\expc_{Q_{r-1}}[Z_{r} - f_\delta] = \expc_{Q_{r-1}}[Z_{r}] - f_\delta & = \expc_{Q_{r-1}} \left[  \condexpmat{E}{\bigcup_{i = 1}^{r-1}Q_{i}} \right] - f_\delta \\
& = \condexpmat{E}{\bigcup_{i = 1}^{r-2}Q_{i}}-f_\delta  = Z_{r-1} - f_\delta.
\end{align*}
The remainder of the proof is similar to that of Theorem~\ref{thm:main-adaptive-matching} with $Z_r - f_\delta$ replacing $Z_r$. Following similar analysis, we have 
\[\expc_{Q_1, \dots, Q_R} [|M_R|] \geq \alpha \frac{1 - (1-\gamma)^R}{\gamma} (\expmat{E} - f_\delta).
\] 
Since $R = \frac{\log(2/\epsilon)}{\delta^{4/\epsilon}}$, we have
\begin{align}
\frac{\alpha}{\gamma} \big( 1-(1-\gamma)^R \big) \geq (1 - \frac{2}{L+ 1}) \big( 1-(1-\gamma)^R \big) \geq (1 - \frac \epsilon 2) (1- e^{-\gamma R}) \geq  (1 - \epsilon).
\end{align}
Therefore, Algorithm~\ref{alg:adaptive-matching} returns a matching with expected size of $(1 - \epsilon)(\expmat{E} - f_\delta)$.
\end{proof}

\subsection{Adaptive Algorithm in Stochastic Setting}
In this section, we consider the case where the vertex parameters are drawn independently from a distribution. 
\begin{corollary}\label{cor:adaptive-distribution}
Given any graph $(V, E)$ with vertex parameters that are drawn from distribution $D$ and any $\epsilon, \delta>0$, Algorithm~\ref{alg:adaptive-matching} returns a matching with expected size of  $(1-\epsilon)(\expmat{E} - ng_\delta)$ in $R = \frac{\log (2/\epsilon)}{\delta^{4/\epsilon}}$ iterations.
\end{corollary}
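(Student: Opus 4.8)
The plan is to derive Corollary~\ref{cor:adaptive-distribution} from Theorem~\ref{thm:adaptive-adversary} by a standard conditioning argument. Write $P = (p_1,\dots,p_n)$ for the random vector of vertex parameters, each drawn from $D$. Conditioned on any fixed value of $P$, we are precisely in the adversarial setting: the parameters are some fixed reals, every edge $e_{ij}$ exists independently with probability $p_i p_j$, and Algorithm~\ref{alg:adaptive-matching} is run for $R = \frac{\log(2/\epsilon)}{\delta^{4/\epsilon}}$ rounds, a number depending only on $\epsilon$ and $\delta$ and not on $P$. Let $f_\delta(P) = \card{\{\,i : p_i < \delta\,\}}$ and let $\condexpmat{E}{P}$ denote the omniscient optimum $\expc\bigl[\card{\msize(E_p)} \mid P\bigr]$ with the edge randomness conditioned on $P$.

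First I would invoke Theorem~\ref{thm:adaptive-adversary} in conditional form: for every realization of $P$,
\[
\expc\bigl[\card{M_R} \,\big|\, P\bigr] \;\ge\; (1-\epsilon)\bigl(\,\condexpmat{E}{P} - f_\delta(P)\,\bigr),
\]
where the left-hand expectation is over the algorithm's edge coin flips given $P$. This is exactly the conclusion of Theorem~\ref{thm:adaptive-adversary} applied to the graph $(V,E)$ with the fixed parameter vector $P$. Then I would take the expectation over $P$ and use linearity. By the tower property the left-hand side becomes $\expc[\card{M_R}]$, the quantity we want to bound. For the right-hand side, $\expc_P\bigl[\condexpmat{E}{P}\bigr] = \expmat{E}$, since in the stochastic setting $\expmat{E}$ denotes the omniscient optimum taken over both the draw of $P$ and the edge realizations. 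And $\expc_P[f_\delta(P)] = \sum_{i=1}^n \Pr_{p_i\sim D}[p_i < \delta] = n g_\delta$ by linearity of expectation and the definition of $g_\delta$ (independence of the $p_i$ is not even needed — only that each marginal equals $D$). Combining these three facts gives $\expc[\card{M_R}] \ge (1-\epsilon)(\expmat{E} - n g_\delta)$, as claimed.

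I do not anticipate a genuine obstacle here: the argument is a one-line conditioning step on top of Theorem~\ref{thm:adaptive-adversary}. The only points worth stating carefully are (i) that in the stochastic model $\expmat{E}$ must be read as the expectation over both $P$ and the edges, so the identity $\expc_P[\condexpmat{E}{P}] = \expmat{E}$ is immediate; and (ii) that Theorem~\ref{thm:adaptive-adversary} is a worst-case guarantee valid for every parameter assignment, which is exactly what licenses conditioning on an arbitrary realization of $P$ before applying it.
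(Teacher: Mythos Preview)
Your proposal is correct and follows exactly the same approach as the paper: condition on the realized parameter vector, invoke Theorem~\ref{thm:adaptive-adversary} (which holds for any fixed parameters), and take expectation. The paper's proof is a terse one-liner (``the result of Theorem~\ref{thm:adaptive-adversary} holds for any value of $f_\delta$; hence, on taking expectation over the value of $f_\delta$, we have our result''), and your write-up simply spells out the tower property and the computation $\expc_P[f_\delta(P)] = n g_\delta$ more carefully.
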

\begin{proof}
The result of Theorem~\ref{thm:adaptive-adversary} holds for any value of $f_{\delta}$. Hence, on taking expectation over the value of $f_{\delta}$, we have our result.
\end{proof}
The next corollary shows the implication of Corollary~\ref{cor:adaptive-distribution} for the uniform distribution.
\begin{corollary}\label{cor:adaptive-uniform}
For a given graph $(V, E)$ with vertex parameters that are drawn from the uniform distribution, and any $\epsilon>0$, Algorithm~\ref{alg:adaptive-matching} returns a matching with expected size of  $(1-\epsilon)(\expmat{E} - \epsilon n)$ in 
$R =  \frac{\log (2/\epsilon)}{\epsilon^{4/\epsilon}}$ iterations.
\end{corollary}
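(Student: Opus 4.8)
The plan is to derive this corollary as an immediate specialization of Corollary~\ref{cor:adaptive-distribution}, by making a judicious choice of the free parameter $\delta$. Recall that Corollary~\ref{cor:adaptive-distribution} guarantees, for \emph{every} $\delta>0$, that Algorithm~\ref{alg:adaptive-matching} returns a matching of expected size at least $(1-\epsilon)(\expmat{E} - n g_\delta)$ in $R = \frac{\log(2/\epsilon)}{\delta^{4/\epsilon}}$ iterations, where $g_\delta = \Pr_{p_i\sim D}[p_i<\delta]$.

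First I would compute $g_\delta$ for the uniform distribution on $[0,1]$: here $\Pr_{p_i\sim D}[p_i<\delta] = \delta$ for every $\delta\in(0,1)$, so $g_\delta = \delta$. Then I would set $\delta = \epsilon$ (which is legitimate since we may assume $\epsilon<1$, as otherwise the claimed bound is vacuous). With this choice, $n g_\delta = \epsilon n$, and the iteration count becomes $R = \frac{\log(2/\epsilon)}{\delta^{4/\epsilon}} = \frac{\log(2/\epsilon)}{\epsilon^{4/\epsilon}}$, exactly matching the statement. Substituting into the guarantee of Corollary~\ref{cor:adaptive-distribution} yields an expected matching size of at least $(1-\epsilon)(\expmat{E} - \epsilon n)$ in $R = \frac{\log(2/\epsilon)}{\epsilon^{4/\epsilon}}$ iterations, which is precisely the claim.

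There is essentially no obstacle here: the only thing to verify carefully is the closed form of the uniform CDF and that plugging $\delta=\epsilon$ into $R=\frac{\log(2/\epsilon)}{\delta^{4/\epsilon}}$ reproduces the stated $R$; both are routine. The conceptual work was already done in Theorem~\ref{thm:adaptive-adversary} and Corollary~\ref{cor:adaptive-distribution}, and this corollary merely records the most natural instantiation (one in which the ``loss term'' $\epsilon n$ and the round complexity are both controlled by the same parameter $\epsilon$).
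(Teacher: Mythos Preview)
Your proposal is correct and is exactly the approach taken in the paper: apply Corollary~\ref{cor:adaptive-distribution} with $\delta=\epsilon$, using that for the uniform distribution $g_\delta=\delta$.
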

\begin{proof}
This follows from Corollary~\ref{cor:adaptive-distribution} by setting $\delta = \epsilon$ and noting that  $g_{\epsilon} = \epsilon$ for the uniform distribution. 
\end{proof}

\subsection{Non-adaptive algorithm in Adversarial Setting}
In this section, we consider the case where an adversary chooses the values of vertex parameters. We prove performance guarantees for Algorithm~\ref{alg:non-adaptive-matching} in this adversarial setting.
\begin{theorem}
\label{thm:non-adaptive-adversary}
Given a graph $(V,E)$ with vertex parameters that are selected by an adversary, and any  $\epsilon, \delta>0$, 
Algorithm~\ref{alg:non-adaptive-matching} returns a matching with expected size of $\frac{1}{2}(1-\epsilon)(\expmat{E} - f_\delta)$ in $R = \frac{\log(2/\epsilon)}{\delta^{4/\epsilon}}$ iterations.
\end{theorem}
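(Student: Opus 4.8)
The plan is to replay the potential-style argument behind Theorem~\ref{thm:non-adaptive-alg} (the $0.5(1-\epsilon)$ guarantee without vertex parameters), grafting onto it the two modifications that made the adversarial \emph{adaptive} analysis of Theorem~\ref{thm:adaptive-adversary} work: discard augmenting paths that pass through a low-parameter vertex, and charge each unqueried edge a survival probability of only $\delta^2$ (each endpoint contributing a factor $\delta$) in place of $p$. Fix $L=\frac{4}{\epsilon}-1$, assumed odd as usual (else replace $\epsilon$ by a nearby $\epsilon'\in[\epsilon/2,\epsilon]$ as in the proof of Theorem~\ref{thm:main-adaptive-matching}), and set $\alpha=\delta^{L+1}=\delta^{4/\epsilon}$ and $\gamma=\delta^{L+1}(1+\frac{2}{L+1})$. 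Note $\frac{2}{L+1}=\frac{\epsilon}{2}$, and $R=\frac{\log(2/\epsilon)}{\delta^{4/\epsilon}}$ is chosen so that $\gamma R\ge\alpha R=\log(2/\epsilon)$, hence $e^{-\gamma R}\le\epsilon/2$. The subadditivity bound of Lemma~\ref{clm:exp-match-sum-parts} holds verbatim here, since its proof only uses $\card{\mat{E'}}\le\card{\mat{E'_1}}+\card{\mat{E'_2}}$ for each realization $E'$ together with a convex combination over the (product) distribution of realizations; likewise $\expmat{\cdot}$ remains monotone under adding edges.

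The heart of the argument is the per-round inequality, the analogue of Lemma~\ref{lem:non-ada-per-step}: \emph{if $\expmat{W_{r-1}}\le\tfrac12(\expmat{E}-f_\delta)$, then $\expmat{W_r}\ge\tfrac{\alpha}{2}(\expmat{E}-f_\delta)+(1-\gamma)\expmat{W_{r-1}}$.} I would prove it with the same thought experiment as in Lemma~\ref{lem:non-ada-per-step}: imagine querying $W_{r-1}$ at the start of round $r$, let $W'_{r-1}$ be the subset found to exist and $M'=\mat{W'_{r-1}}$. Since $O_r$ is a maximum matching of $E\setminus W_{r-1}$ (the sets $W_i$ are nested, so $\bigcup_{i<r}W_i=W_{r-1}$), Lemma~\ref{clm:exp-match-sum-parts} and the hypothesis give $\card{O_r}=\card{\mat{E\setminus W_{r-1}}}\ge\expmat{E\setminus W_{r-1}}\ge\expmat{E}-\expmat{W_{r-1}}\ge\tfrac12(\expmat{E}+f_\delta)$. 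By Lemma~\ref{lem:struct-res-matching}, $O_r\otimes M'$ contains at least $\card{O_r}-(1+\tfrac{2}{L+1})\card{M'}$ vertex-disjoint augmenting paths of $M'$ of length at most $L$. At most $f_\delta$ of these can touch a vertex whose parameter is below $\delta$ (the paths are vertex-disjoint), so discarding those leaves at least $\tfrac12(\expmat{E}+f_\delta)-f_\delta-(1+\tfrac{2}{L+1})\card{M'}=\tfrac12(\expmat{E}-f_\delta)-(1+\tfrac{2}{L+1})\card{M'}$ of them. Every surviving path has all of its (at most $\tfrac{L+1}{2}$) unqueried $O_r$-edges incident only to vertices of parameter $\ge\delta$, so each such edge exists independently with probability at least $\delta^2$ and the path ``closes'' with probability at least $\delta^{L+1}=\alpha$; distinct surviving paths are vertex-disjoint, so their closings are independent events, each contributing $1$ to the matching. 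Hence $\condexpmat{O_r\cup W'_{r-1}}{W'_{r-1}}\ge\tfrac{\alpha}{2}(\expmat{E}-f_\delta)+(1-\gamma)\card{M'}$ (trivially so if the path count above is nonpositive), and averaging over the coin tosses on $W_{r-1}$ that produce $W'_{r-1}$ yields the stated inequality, using $\expc_{W_{r-1}}[\condexpmat{O_r\cup W'_{r-1}}{W'_{r-1}}]=\expmat{O_r\cup W_{r-1}}=\expmat{W_r}$.

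To conclude, I would use the standard dichotomy. If $\expmat{W_R}\ge\tfrac12(\expmat{E}-f_\delta)$ we are done, since this is at least $\tfrac12(1-\epsilon)(\expmat{E}-f_\delta)$ (and the bound is vacuous when $\expmat{E}\le f_\delta$). Otherwise, monotonicity of $\expmat{\cdot}$ along $W_0\subseteq W_1\subseteq\cdots\subseteq W_R$ gives $\expmat{W_{r-1}}\le\expmat{W_R}<\tfrac12(\expmat{E}-f_\delta)$ for every $r\in[R]$, so the per-round inequality applies in each round; unrolling it with $\expmat{W_0}=0$ gives $\expmat{W_R}\ge\tfrac12\cdot\tfrac{\alpha}{\gamma}\,(1-(1-\gamma)^R)\,(\expmat{E}-f_\delta)$, and $\tfrac{\alpha}{\gamma}(1-(1-\gamma)^R)\ge(1-\tfrac{2}{L+1})(1-e^{-\gamma R})\ge 1-\tfrac{\epsilon}{2}-e^{-\gamma R}\ge 1-\epsilon$ by the choices of $L$ and $R$, which is exactly the claim.

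The only genuine subtlety is the bookkeeping of the $f_\delta$ terms: the dichotomy threshold must be $\tfrac12(\expmat{E}-f_\delta)$ so that the resulting lower bound $\card{O_r}\ge\tfrac12(\expmat{E}+f_\delta)$ carries a ``$+f_\delta$'' that exactly absorbs the ``$-f_\delta$'' lost when the at most $f_\delta$ augmenting paths through highly sensitized vertices are discarded---any coarser choice (e.g.\ threshold $\tfrac12\expmat{E}$) would degrade the guarantee to something like $(1-\epsilon)(\tfrac12\expmat{E}-f_\delta)$, which is strictly weaker. Everything else---the geometric telescoping, the bound $e^{-\gamma R}\le\epsilon/2$, and subadditivity/monotonicity of $\expmat{\cdot}$---is routine and parallels the proofs of Theorems~\ref{thm:non-adaptive-alg} and \ref{thm:adaptive-adversary}.
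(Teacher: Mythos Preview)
Your argument is correct and follows the same template as the paper's proof (Lemma~\ref{lem:non-adaptive-induction-adversary} plus the dichotomy/unrolling of Theorem~\ref{thm:non-adaptive-alg}): bound $\card{O_r}$ via subadditivity, count short augmenting paths by Lemma~\ref{lem:struct-res-matching}, discard the at most $f_\delta$ paths through low-parameter vertices, charge each surviving path a survival probability $\delta^{L+1}$, and telescope.

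The one genuine difference is your choice of dichotomy threshold. The paper's Lemma~\ref{lem:non-adaptive-induction-adversary} uses the hypothesis $\expmat{W_{r-1}}\le \tfrac12\expmat{E}$, which only yields $\card{O_r}\ge \tfrac12\expmat{E}$ and hence, after discarding $f_\delta$ paths, a per-round gain of $\alpha\bigl(\tfrac12\expmat{E}-f_\delta\bigr)=\tfrac{\alpha}{2}\bigl(\expmat{E}-2f_\delta\bigr)$; the paper's displayed ``re-arranging'' step to $\tfrac{\alpha}{2}(\expmat{E}-f_\delta)$ drops a factor of~$2$ on the $f_\delta$ term. Your sharper threshold $\expmat{W_{r-1}}\le \tfrac12(\expmat{E}-f_\delta)$ gives $\card{O_r}\ge \tfrac12(\expmat{E}+f_\delta)$, so the ``$+f_\delta$'' exactly cancels the discarded paths and the stated per-round bound $\tfrac{\alpha}{2}(\expmat{E}-f_\delta)$ is recovered cleanly. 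In other words, your last paragraph is not just a subtlety you noticed---it is precisely the fix that makes the paper's stated conclusion go through.
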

The proof of Theorem~\ref{thm:non-adaptive-adversary} and the subsequent lemma are similar to Section~\ref{sec:non-adaptive-matching}, and are included here for completeness. 

\begin{lemma}
\label{lem:non-adaptive-induction-adversary}
For any iteration $r\in [R]$ of Algorithm~\ref{alg:non-adaptive-matching} and odd $L$, if $\expmat{W_{r-1}} \leq \expmat{E}/2$, then $\expmat{W_{r}} \geq \frac{\alpha}{2}~(\expmat{E}-f_\delta) + (1- \gamma)\expmat{W_{r-1}}$, where  $\alpha = \delta^{L+1}$ and $\gamma = \delta^{L+1}(1+\frac{2}{L+1})$.
\end{lemma}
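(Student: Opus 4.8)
The plan is to replay the proof of Lemma~\ref{lem:non-ada-per-step} almost line for line, substituting its per-edge success probability $p$ by the correlated lower bound $\delta^{2}$, and inserting the ``discard the low-parameter vertices'' step used in the proof of the adversarial adaptive lemma above. So $\alpha$ and $\gamma$ will again record (respectively) the probability that the $O_{r}$-part of a short augmenting path survives and the fraction of $M'$-edges it can consume, but now through vertices of parameter at least $\delta$.

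First I would set $U = E \setminus W_{r-1}$, so that $O_{r} = M(U)$ and hence $\card{O_{r}} = \card{M(U)} \ge \expmat{U}$. Applying the subadditivity bound of Lemma~\ref{clm:exp-match-sum-parts} to the partition $E = W_{r-1}\cup U$ gives $\expmat{U} \ge \expmat{E} - \expmat{W_{r-1}}$, and the hypothesis $\expmat{W_{r-1}} \le \expmat{E}/2$ then yields $\card{O_{r}} \ge \expmat{E}/2$.

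Next comes the standard thought experiment: pretend the algorithm queries all of $W_{r-1}$ first (legitimate, since $W_{r-1}$ and $O_{r}\subseteq U = E\setminus W_{r-1}$ are edge-disjoint), let $W'_{r-1}$ denote the realized survivors, and put $M' = M(W'_{r-1})$. For that fixed realization I would invoke the structural Lemma~\ref{lem:struct-res-matching} on the matchings $O_{r}$ and $M'$: inside $O_{r}\otimes M'$ there are at least $\card{O_{r}} - (1+\tfrac{2}{L+1})\card{M'}$ augmenting paths of length at most $L$ that augment $M'$. Now delete every one of these paths that touches a vertex $v_{i}$ with $p_{i} < \delta$; as the paths are pairwise vertex-disjoint, at most $f_{\delta}$ of them are deleted. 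Each surviving path has odd length $\ell \le L$ and hence at most $\tfrac{L+1}{2}$ not-yet-queried $O_{r}$-edges; these edges are pairwise vertex-disjoint and both endpoints of each have parameter $\ge \delta$, so each exists independently with probability $\ge\delta^{2}$, and therefore the whole $O_{r}$-portion of the path exists with probability $\ge (\delta^{2})^{(L+1)/2} = \delta^{L+1} = \alpha$. Taking expectation only over the queries to $O_{r}$ and using linearity over the surviving augmenting paths, together with $\card{O_{r}} \ge \expmat{E}/2$, gives
\[
\condexpmat{O_{r}\cup W'_{r-1}}{W'_{r-1}} \;\ge\; \card{M'} + \alpha\Bigl(\card{O_{r}} - \bigl(1+\tfrac{2}{L+1}\bigr)\card{M'} - f_{\delta}\Bigr) \;\ge\; \tfrac{\alpha}{2}\bigl(\expmat{E} - f_{\delta}\bigr) + (1-\gamma)\card{M'},
\]
with $\gamma = \alpha\bigl(1+\tfrac{2}{L+1}\bigr)$. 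Averaging over the realization of $W_{r-1}$, and using $\expmat{W_{r}} = \expmat{O_{r}\cup W_{r-1}} = \expc_{W_{r-1}}\bigl[\condexpmat{O_{r}\cup W'_{r-1}}{W'_{r-1}}\bigr]$ while $\expc_{W_{r-1}}[\card{M'}] = \expmat{W_{r-1}}$, then produces the claimed $\expmat{W_{r}} \ge \tfrac{\alpha}{2}(\expmat{E} - f_{\delta}) + (1-\gamma)\expmat{W_{r-1}}$.

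I expect essentially no deep obstacle — the statement is a routine merge of Lemma~\ref{lem:non-ada-per-step} with the adversarial adaptive lemma — but the step requiring genuine care is the exclusion argument. One must charge the at most $f_{\delta}$ deleted paths exactly once (so that the surviving count, when positive, is still a valid lower bound on the number of usable augmenting paths, and when it is negative the asserted inequality is vacuous), and the survival probability $\alpha=\delta^{L+1}$ for a single path relies precisely on its $O_{r}$-edges being vertex-disjoint so that their (independent) existence events multiply; also one should keep track of whether the $f_{\delta}$ term enters with coefficient $\alpha$ or $\alpha/2$, which is the only place the constants in the displayed bound are delicate. Everything else — the thought experiment, the estimate $\card{O_{r}}\ge\expmat{E}/2$, and the outer averaging over $W_{r-1}$ — is identical to the uniform-$p$ case, so for those parts I would simply cite the proof of Lemma~\ref{lem:non-ada-per-step}.
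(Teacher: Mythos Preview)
Your proposal is correct and matches the paper's proof essentially step for step: the paper likewise sets $U=E\setminus W_{r-1}$, uses Lemma~\ref{clm:exp-match-sum-parts} and the hypothesis to get $\card{O_{r}}\ge\expmat{E}/2$, fixes a realization $W'_{r-1}$, applies Lemma~\ref{lem:struct-res-matching}, discards at most $f_{\delta}$ paths touching low-parameter vertices, bounds the survival probability of each remaining path by $(\delta^{2})^{(L+1)/2}=\delta^{L+1}$, and then averages over $W'_{r-1}$. Your flag about the $f_{\delta}$ coefficient is on point and mirrors exactly how the paper handles the rearrangement.
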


\begin{proof}
Define $U = E \setminus  W_{r-1}$. Assume that $\expmat{W_{r-1}}\leq \expmat{E} / 2$.
By Claim~\ref{clm:exp-match-sum-parts}, we know that $\expmat{U} \ge \expmat{E} - \expmat{W_{r-1}}$. Hence, $\card{O_{r}} = \card{M(U)} \ge \expmat{U} \ge \expmat{E} - \expmat{W_{r-1}}\ge \expmat{E}/2$. 

Let $W'_{r-1}$ represent one possible outcome of existing edges when edges are drawn from $W_{r-1}$.
By Lemma~\ref{lem:struct-res-matching}, there are at least $\card{O_{r}} - (1+\frac{2}{L+1})\card{M(W'_{r-1})}$ augmenting paths of length at most $L$ in $O_{r} \Delta M(W'_{r-1})$ that augment $M(W'_{r-1})$. 
Among these paths, at most $f_\delta$  have a vertex $v_i$,  with $p_i<\delta$. We ignore these paths. 
Each remaining path succeeds with probability $(\delta^2)^{(L+1)/2}$. Hence, the expected increase in the size of $|M(W'_{r-1})|$ using the remaining paths of length $L$ is,
\begin{align*}
\condexpmat{O_{r} \cup W'_{r-1}}{W'_{r-1}} - \card{M(W'_{r-1})}& \geq \delta^{L+1} \left(\card{O_{r}} - (1+ \frac{2}{L+1})\card{M(W'_{r-1})} - f_\delta \right)\\
&\geq \delta^{L+1}~\left (\frac{1}{2}~\expmat{E}- (1+ \frac{2}{L+1})\card{M(W'_{r-1})} - f_\delta\right).
\end{align*} 
Re-arranging the inequality, we get
$\condexpmat{O_{r} \cup W'_{r-1}}{W'_{r-1}} \geq \frac{\alpha}{2}~(\expmat{E}-f_\delta) + (1- \gamma)\card{M(W'_{r-1})}$.
Taking expectation over the coin tosses on $W_{r-1}$ that create outcome $W'_{r-1}$, we have
\[
\expmat{W_r} \geq \expc_{W_{r-1}} [\condexpmat{O_{r} \cup W'_{r-1}}{W'_{r-1}}]
\geq \frac{\alpha}{2}~(\expmat{E} -f_\delta) + (1- \gamma)\expmat{W_{r-1}}.\]
\end{proof}
\begin{proof}[\textsc{sketch of Theorem~\ref{thm:non-adaptive-adversary}}]
Let $L= \frac 4\epsilon -1$.
The proof is similar to that of Theorem~\ref{thm:non-adaptive-alg} with the value of $\expmat{E}$ being replaced by $\expmat{E} - f_\delta$. Following a similar analysis, we get
\begin{align*}
\expmat{W_{R}} \geq \frac{\alpha}{2}\frac{(1 - (1-\gamma)^{R})}{\gamma}(\expmat{E}-f_\delta).
\end{align*}
Now, $\frac{\alpha}{\gamma}(1 - (1-\gamma)^{R}) \geq (1 - \frac{2}{L+1})(1- e^{-\gamma R}) \geq (1 - \epsilon)$ for $R = \frac{\log (2/ \epsilon)}{ \delta^{4/\epsilon}}$.  Hence, Algorithm~\ref{alg:non-adaptive-matching} returns a matching with expected size of  $0.5 (1-\epsilon)(\expmat{E}-f_\delta)$.
\end{proof}

\subsection{Non-adaptive algorithm in Stochastic Setting}
We examine the performance of Algorithm~\ref{alg:non-adaptive-matching} in the  setting where the vertex parameters are chosen independently from a distribution.
\begin{corollary}\label{cor:non-adaptive-distribution-2}
Given a graph $(V,E)$ with vertex parameters that are selected from distribution $D$, and $\epsilon, \delta > 0$, Algorithm~\ref{alg:non-adaptive-matching} 
returns a 
 matching with expected size of  $\frac12 (1-\epsilon) (\expmat{E} - n g_\delta)$ with  $R = \frac{\log (2/\epsilon)}{\delta^{4/\epsilon}}$ non-adaptive queries.
\end{corollary}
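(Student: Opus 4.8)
The plan is to obtain this corollary from Theorem~\ref{thm:non-adaptive-adversary} in exactly the same way that Corollary~\ref{cor:adaptive-distribution} is obtained from Theorem~\ref{thm:adaptive-adversary}. The point is that the guarantee of Theorem~\ref{thm:non-adaptive-adversary} is really a \emph{pointwise} (conditional) statement: for \emph{every} fixed assignment of vertex parameters $(p_i)_i$, writing $f_\delta$ for the number of vertices with $p_i<\delta$ under that assignment and $\expmat{E}$ for the omniscient optimum of that same parametrized instance, the expected cardinality of the matching returned by Algorithm~\ref{alg:non-adaptive-matching} --- where this expectation is over the edge-existence coin flips and the algorithm's arbitrary choices of maximum matchings --- is at least $\frac12(1-\epsilon)(\expmat{E}-f_\delta)$. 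Moreover, the number of rounds $R=\log(2/\epsilon)/\delta^{4/\epsilon}$ used by Algorithm~\ref{alg:non-adaptive-matching} never depends on the parameters (the algorithm does not even inspect them), so the same $R$ is valid simultaneously for all assignments.

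First I would condition on the realized vector of vertex parameters, which are drawn independently from $D$, and apply Theorem~\ref{thm:non-adaptive-adversary} to this now-fixed instance. Then I would take the expectation over the parameter draw on both sides and push it through the right-hand side by linearity. For the $f_\delta$ term, independence of the vertices' parameters gives $\expc[f_\delta]=\sum_{v\in V}\Pr_{p_v\sim D}[p_v<\delta]=n g_\delta$. For the other term, one reads $\expmat{E}$ in the corollary's statement as the fully averaged omniscient optimum, i.e., the expectation over \emph{both} the parameter draw and the edge coin flips of the size of a maximum matching among the existing edges; by the tower property this equals the expectation of the per-instance $\expmat{E}$, so averaging loses nothing. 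Combining the two bounds yields expected cardinality at least $\frac12(1-\epsilon)(\expmat{E}-n g_\delta)$, as claimed.

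The one thing to be careful about --- and it is the only real obstacle --- is the bookkeeping of the three sources of randomness (the parameter draw, the edge coin flips, and the algorithm's tie-breaking), and in particular invoking Theorem~\ref{thm:non-adaptive-adversary} in its conditional form so that integrating its inequality against the parameter distribution is legitimate. No new structural argument about augmenting paths is needed beyond Lemma~\ref{lem:non-adaptive-induction-adversary} and Theorem~\ref{thm:non-adaptive-adversary}; once the conditioning is set up correctly, the statement follows immediately from linearity of expectation.
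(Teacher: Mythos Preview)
Your proposal is correct and follows essentially the same approach as the paper: condition on the realized vertex parameters, apply Theorem~\ref{thm:non-adaptive-adversary} pointwise, and then average, using $\expc[f_\delta]=n g_\delta$. The paper's proof says exactly this in two lines; your version simply spells out the conditioning and the tower-property bookkeeping more carefully (note, incidentally, that $\expc[f_\delta]=n g_\delta$ follows from linearity alone and does not require independence of the vertex draws).
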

\begin{proof}
The result of Theorem~\ref{thm:non-adaptive-adversary} holds for any value of $f_{\delta}$. Hence, on taking expectation over the values of $f_{\delta}$, we have our result.
\end{proof}

\begin{corollary}\label{cor:non-adaptive-uniform}
For any
$G=(V, E)$ with vertex parameters that are drawn from the uniform distribution, and any $\epsilon>0$, Algorithm~\ref{alg:non-adaptive-matching}  returns a matching with expected size of  $0.5(1-\epsilon) (\expmat{E} - n\epsilon)$ with $R = \frac{\log (2/\epsilon)}{\epsilon^{4/\epsilon}}$ non-adaptive queries. 
\end{corollary}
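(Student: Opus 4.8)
The plan is to derive this statement as an immediate specialization of Corollary~\ref{cor:non-adaptive-distribution-2}, which already handles arbitrary vertex-parameter distributions $D$ and yields a matching of expected size $\frac12(1-\epsilon)(\expmat{E} - n g_\delta)$ using $R = \frac{\log(2/\epsilon)}{\delta^{4/\epsilon}}$ non-adaptive queries, for every choice of the cutoff $\delta>0$. Since that corollary holds for \emph{any} $\delta$, I am free to pick the value of $\delta$ that makes the bound take the desired clean form.

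Concretely, I would instantiate $D$ to be the uniform distribution on $[0,1]$ and then choose $\delta = \epsilon$. The only substantive observation needed is that for the uniform distribution the quantity $g_\delta = \Pr_{p_i \sim D}[p_i < \delta]$ equals exactly $\delta$, because the CDF of the uniform distribution on $[0,1]$ is the identity. Hence $g_\epsilon = \epsilon$, so $n g_\delta = n\epsilon$, and simultaneously $R = \frac{\log(2/\epsilon)}{\delta^{4/\epsilon}} = \frac{\log(2/\epsilon)}{\epsilon^{4/\epsilon}}$. Substituting these two identities into the guarantee of Corollary~\ref{cor:non-adaptive-distribution-2} gives precisely that Algorithm~\ref{alg:non-adaptive-matching} returns a matching of expected size $\frac12(1-\epsilon)(\expmat{E} - n\epsilon)$ with the claimed number of rounds, which is the statement of the corollary.

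There is essentially no genuine obstacle here: the work has all been done in establishing Theorem~\ref{thm:non-adaptive-adversary} (the adversarial bound in terms of $f_\delta$) and Corollary~\ref{cor:non-adaptive-distribution-2} (averaging over the random draw of $f_\delta$). The only point that warrants a sentence is the computation of $g_\delta$ for the uniform law; if one instead worked with a uniform distribution on a differently-scaled interval, $g_\delta$ would change by a constant factor and one would pick $\delta$ accordingly, but with the standard $[0,1]$ normalization the choice $\delta=\epsilon$ is exactly what balances the two error terms $(1-\epsilon)$ and $n g_\delta = n\epsilon$ in the final bound.
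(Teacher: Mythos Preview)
Your proposal is correct and matches the paper's own proof essentially verbatim: the paper also invokes Corollary~\ref{cor:non-adaptive-distribution-2} with $\delta=\epsilon$ and uses $g_\epsilon=\epsilon$ for the uniform distribution.
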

\begin{proof}
This follows from Corollary~\ref{cor:non-adaptive-distribution-2} by setting $\delta = \epsilon$ and noting that  $g_{\epsilon} = \epsilon$ for the uniform distribution.
\end{proof}

\section{Additional experimental results on UNOS compatibility graphs}
\label{app:experiments-appendix}

In this section, we include additional experimental results on the same \num{169} compatibility graphs drawn from the real UNOS kidney exchange used in Section~\ref{sec:experiments}.  These experiments mimic those of Section~\ref{sec:experiments-unos}, only this time including in the analysis empty omniscient matchings.  If an omniscient matching is empty, then our algorithm will achieve at most zero matches as well.  In the body of this paper, we removed these cases from the experimental analysis because achieving zero matches (using any method) out of zero possible matches trivially achieves \num{100}\% of the omniscient matching; by not including those cases, we provided a more conservative experimental analysis.  In this section, we include those cases and rerun the analysis.

Figure~\ref{fig:experiments-unos2-with-zero} mimics Figure~\ref{fig:experiments-unos2-without-zero} from the body of this paper.  It shows results for \num{2}-cycle matching on the UNOS compatibility graphs, without chains (left) and with chains (right), for $R \in \{0,1,\ldots,5\}$ and varying levels of $f \in \{0,0.1,\ldots,0.9\}$.  We witness a marked increase in the fraction of omniscient matching achieved as $f$ gets close to $0.9$; this is due to the relatively sparse UNOS graphs admitting no matchings for high failure rates.

\begin{figure}[h]
\centering
\begin{minipage}[b]{0.48\textwidth}
\centering
\includegraphics[width=\textwidth]{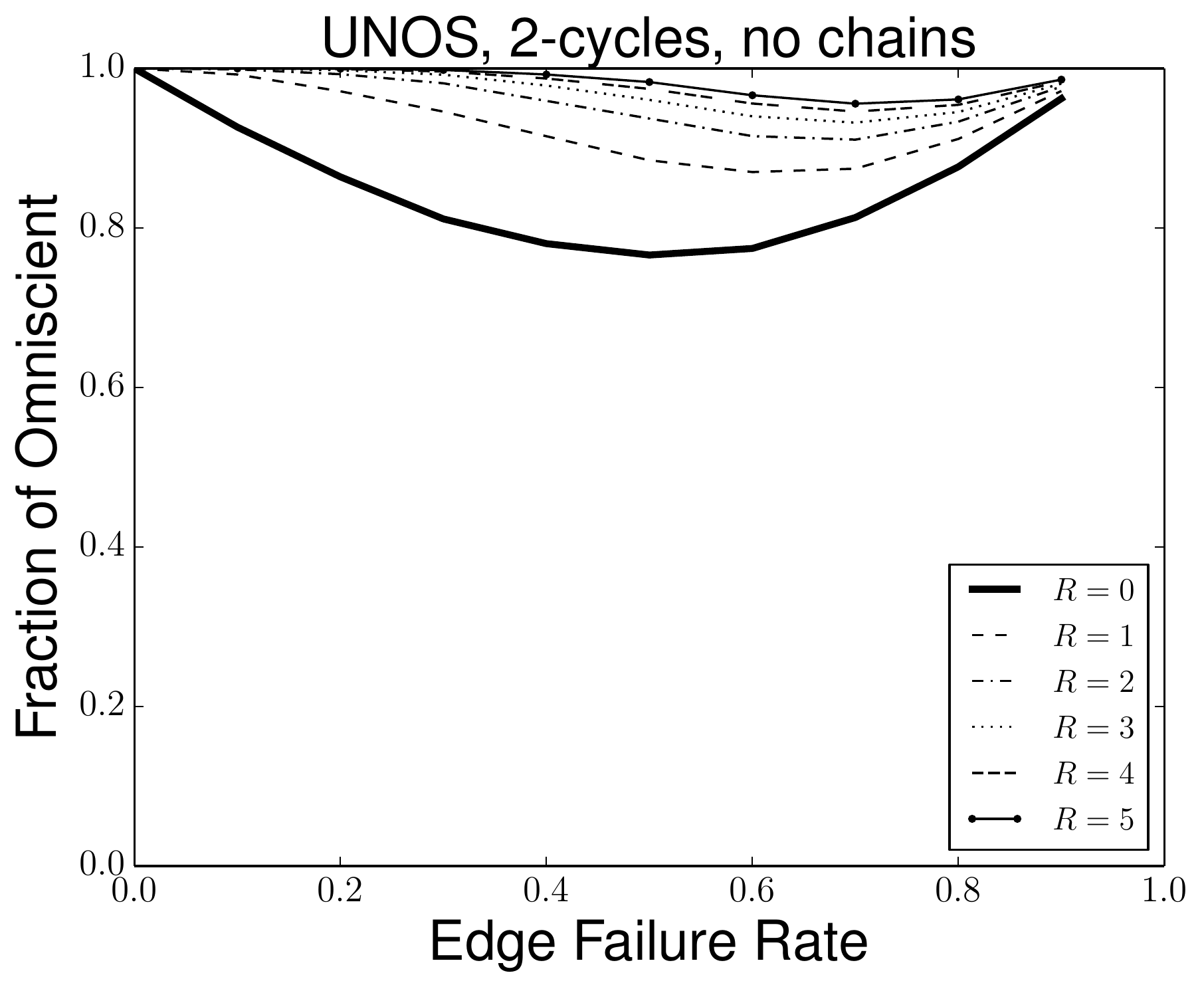}
\end{minipage}%
\hfill%
\begin{minipage}[b]{0.48\textwidth}
\centering
\includegraphics[width=\textwidth]{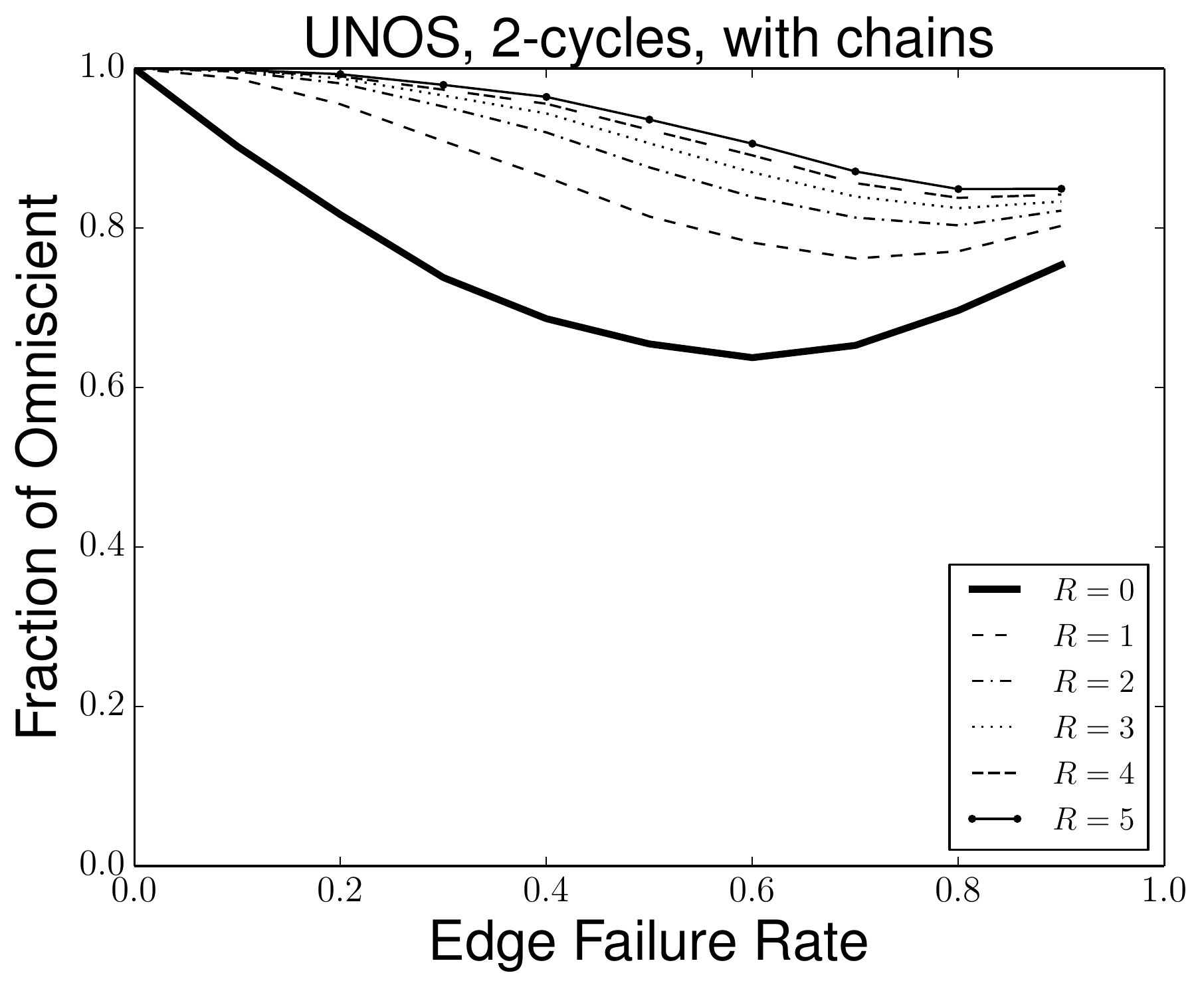}
\end{minipage}
\caption{Real UNOS match runs, restricted matching of 2-cycles only, without chains (left) and with chains (right), including zero-sized omnsicient matchings.}
\label{fig:experiments-unos2-with-zero}
\end{figure}

Figure~\ref{fig:experiments-unos3-with-zero} shows the same experiments as Figure~\ref{fig:experiments-unos2-with-zero}, only this time allowing both \num{2}- and \num{3}-cycles, without (left) and with (right) chains.  It corresponds to Figure~\ref{fig:experiments-unos3-without-zero} in the body of this paper, and exhibits similar but weaker behavior to Figure~\ref{fig:experiments-unos2-with-zero} for high failure rates.  This demonstrates the power of including \num{3}-cycles in the matching algorithm---we see that far fewer compatibility graphs admit no matchings under this less-restrictive matching policy.

\begin{figure}[h]
\centering
\begin{minipage}[b]{0.48\textwidth}
\centering
\includegraphics[width=\textwidth]{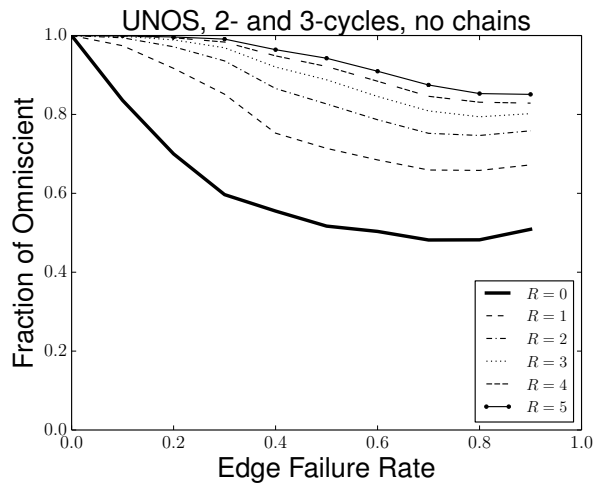}
\end{minipage}%
\hfill%
\begin{minipage}[b]{0.48\textwidth}
\centering
\includegraphics[width=\textwidth]{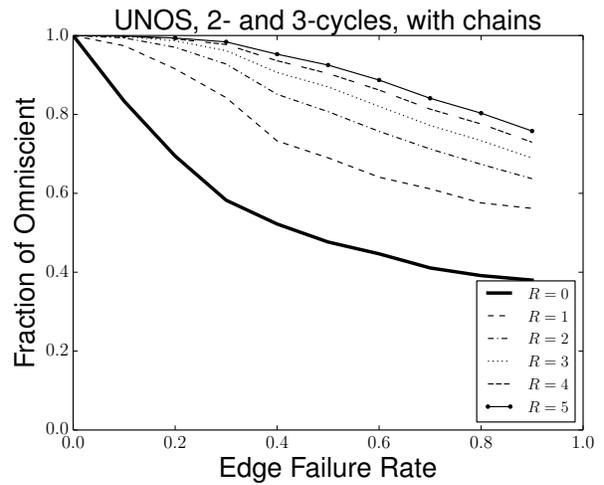}
\end{minipage}
\caption{Real UNOS match runs, matching with 2- and 3-cycles, without chains (left) and with chains (right), including zero-sized omnsicient matchings.}
\label{fig:experiments-unos3-with-zero}
\end{figure}

\end{document}